\DeclareMathAlphabet{\mathitbf}{OML}{cmm}{b}{it}
\newtheorem{theorem}{Theorem}[section]
\newtheorem{claim}{Claim}
\newtheorem{lemma}[theorem]{Lemma}
\newtheorem{corollary}[theorem]{Corollary}
\renewcommand{\comment}[1]{}
\newenvironment{proof}{\noindent{\em Proof:}}{\hfill\qed}
\newenvironment{proofof}[1]{\vspace{0.1in}\noindent{\em Proof of #1.}}{\hfill\qed}
\theoremstyle{definition}
\newcommand{\alloc}{x}
\newcommand{\allocs}{{\mathbf \alloc}}
\newcommand{\alloci}[1][i]{{\allocs_{#1}}}
\newcommand{\allocj}[1][j]{{\alloc_{#1}}}
\newcommand{\allocij}[1][j]{\alloc_{i#1}}
\newcommand{\val}{v}
\newcommand{\vals}{{\mathbf \val}}
\newcommand{\vali}[1][i]{{\vals_{#1}}}
\newcommand{\valj}[1][j]{{\val_{#1}}}
\newcommand{\valij}[1][j]{{\val_{i#1}}}
\newcommand{\dist}{\mathcal D}
\newcommand{\dists}{{\mathbf \dist}}
\newcommand{\distsmi}{{\mathbf \dist}_{-i}}
\newcommand{\disti}[1][i]{{\dists_{#1}}}
\newcommand{\distj}[1][j]{{\dist_{#1}}}
\newcommand{\distij}[1][j]{{\dist_{i#1}}}
\newcommand{\price}{p}
\newcommand{\prices}{{\mathbf \price}}
\newcommand{\pricei}[1][i]{{\prices_{#1}}}
\newcommand{\pricej}[1][j]{{\price_{#1}}}
\newcommand{\priceij}[1][j]{\price_{i#1}}
\newcommand{\exante}{{ex~ante}}
\newcommand{\Exante}{{Ex~ante}}
\newcommand{\ExAnte}{{Ex~Ante}}
\newcommand{\expost}{{ex~post}}
\newcommand{\eaprob}{q}
\newcommand{\eaprobs}{{\mathbf \eaprob}}
\newcommand{\eaprobi}[1][i]{{\eaprobs_{#1}}}
\newcommand{\eaprobj}[1][j]{{\eaprob_{#1}}}
\newcommand{\eaprobij}[1][j]{{\eaprob_{i#1}}}
\newcommand{\eaprobsA}[1][A]{\eaprobs^{#1}}
\newcommand{\eaprobAj}[1][j]{\eaprob^A_{#1}}
\newcommand{\eaprice}{\beta}
\newcommand{\eaprices}{{\mathbf \eaprice}}
\newcommand{\eapricej}[1][j]{{\eaprice_{#1}}}
\newcommand{\eapriceij}[1][j]{{\eaprice_{i#1}}}
\newcommand{\feas}{{\mathcal F}}
\newcommand{\feasi}[1][i]{\feas_{#1}}
\newcommand{\udfeas}{\feasi[\mathsc{UnitDemand}]}
\newcommand{\adfeas}{\feasi[\mathsc{Additive}]}
\newcommand{\ptope}{{\mathcal P}_{\feas}}
\newcommand{\ptopei}[1][i]{{\mathcal P}_{\feas_{#1}}}
\newcommand{\I}{\mathcal{I}}
\newcommand{\prob}[2][]{\text{\bf Pr}\ifthenelse{\not\equal{}{#1}}{_{#1}}{}\!\left[#2\right]}
\newcommand{\expect}[2][]{\text{\bf E}\ifthenelse{\not\equal{}{#1}}{_{#1}}{}\!\left[#2\right]}
\newcommand{\var}{Var}
\newcommand{\mech}{{\mathcal M}}
\newcommand{\ef}{\pi}
\newcommand{\efi}[1][i]{\pi_{#1}}
\newcommand{\rev}{\mathsc{Rev}}
\newcommand{\revm}[1][\mech]{\rev^{#1}}
\newcommand{\revt}[1][(\ef,\prices)]{\rev^{#1}}
\newcommand{\brev}{\mathsc{BRev}}
\newcommand{\trev}{\mathsc{TRev}}
\newcommand{\srev}{\mathsc{SRev}}
\newcommand{\earev}[1][\eaprobs]{\rev_{#1}}
\newcommand{\easrev}[1][\eaprobs]{\srev_{#1}}
\newcommand{\eatrev}[1][\eaprobs]{\trev_{#1}}
\newcommand{\Val}{\mathsc{Val}}
\newcommand{\Valm}[1][\mech]{\Val^{#1}}
\newcommand{\eaVal}[1][\eaprobs]{\Val_{#1}}
\newcommand{\dy}{\,dy}
\newcommand{\union}{\cup}
\newcommand{\intersect}{\cap}
\newcommand{\mathsc}[1]{{\normalfont\textsc{#1}}}
\newcommand{\x}{\mathbf{x}}
\newcommand{\fj}{f_j}
\newcommand{\threshj}{t_j}
\newcommand{\tconst}{\tau}
\newcommand{\core}{\dists^C_\emptyset} 
\newcommand{\coreA}{\dists^C_A}
\newcommand{\corej}{\dist^C_j}
\newcommand{\tail}{\dists^T_A}
\newcommand{\tailj}{\dist^T_j}
\newcommand{\tprobA}[1][A]{\rho_{#1}}
\newcommand{\tprobz}{\tprobA[\emptyset]}
\newcommand{\tprobj}[1][j]{\xi_{#1}}
\newcommand{\tprobs}{{\mathbf \xi}}
\newcommand{\tvalsmi}{\tilde{\mathbf \val}_{-i}}
\newcommand{\half}{\frac 12}
\newcommand{\efs}{{\mathbf \ef}}
\title{Mechanism Design for Subadditive Agents via an Ex-Ante Relaxation}
\author{Shuchi Chawla\thanks{University of Wisconsin--Madison:
\texttt{shuchi@cs.wisc.edu}} \and J. Benjamin Miller\thanks{University of
Wisconsin--Madison: \texttt{bmiller@cs.wisc.edu}}}
\begin{document}
                       
\maketitle

\begin{abstract}

  We consider the problem of maximizing revenue for a monopolist
  offering multiple items to multiple heterogeneous buyers. We develop
  a simple mechanism that obtains a constant factor approximation
  under the assumption that the buyers' values are additive subject to
  a matroid feasibility constraint and independent across items. Importantly,
  different buyers in our setting can have different constraints on
  the sets of items they desire. Our mechanism is a sequential variant
  of two-part tariffs. Prior to our work, simple approximation
  mechanisms for such multi-buyer problems were known only for the
  special cases of all unit-demand or all additive value buyers.

  Our work expands upon and unifies long lines of work on unit-demand
  settings and additive settings. We employ the \exante\ relaxation
  approach developed by~\citet{Alaei-FOCS11} for reducing a
  multiple-buyer mechanism design problem with an \expost\ supply
  constraint into single-buyer problems with \exante\ supply
  constraints. Solving the single-agent problems requires us to
  significantly extend a decomposition technique developed in the context of additive
  values by \citet{LY-PNAS13} and its extension to subadditive
  values by \citet{rw-15}.



\end{abstract}

\newpage

\section{Introduction}



Multi-parameter optimal mechanism design is challenging from both a
computational and a conceptual viewpoint, even when it involves only a single
buyer. Multi-parameter type spaces can be exponentially large, and
multi-dimensional incentive constraints lack the nice structure of
single-dimensional constraints that permits simplification of the optimization
problem. As a result, optimal mechanisms can possess undesirable properties such
as requiring randomness \citep{BCKW-SODA10, HN-EC12, HN-EC13}, displaying
non-monotonicity of revenue in values \citep{HR-2013, rw-15}, and are in many
cases computationally hard to find (see, e.g., \citealp{DDT-WINE12,
DDT-SODA14}).  The situation exacerbates in multi-agent settings.  \citet[chap.
8]{Hartline-MDnA} identifies two further difficulties: multi-parameter agents
impose multi-dimensional externality on each other that may not be possible to
capture succinctly; and multi-parameter problems are typically not revenue
linear, meaning that the optimal revenue does not scale linearly with the
probability of service. Designing simple near-optimal mechanisms in such
settings is a primary goal of algorithmic mechanism design.

In this paper we study the problem facing a monopolist with many items
and many buyers, where each buyer is interested in buying one of many
different subsets of items, and his value for each such subset is
additive over the items in that subset. What selling mechanism should
the monopolist use in such a setting to maximize his revenue?  One
challenge for the seller is that buyers may have heterogeneous
preferences: some buyers are interested in buying a few specific
items, others are indifferent between multiple items, and yet others
have a long shopping list. We design the first approximation mechanism
for this problem; our main result is a constant-factor approximation when
buyers' values are additive up to a matroid feasibility
constraint.

Our approximation mechanism has a particularly simple and appealing
format -- a sequential extension of standard {\em two-part tariff
  mechanisms}.  Two-part tariffs for a single agent have the following
structure. The buyer first pays a fixed entry fee and is then allowed
to purchase any set of items at fixed per-item prices. The buyer may
choose not to participate in the mechanism, in which case he does not
pay the entry fee and does not receive any item. In our context,
buyers pay (different) entry fees at the beginning, and then take
turns (in an arbitrary but fixed order) to buy a subset of items at
predetermined item-specific prices, subject to availability. There are
many real-world examples of two-part tariffs, such as amusement park
pricing; memberships for discount shopping clubs like Costco, Sam's
Club, and Amazon's Prime; telephone services; and membership programs
for cooperatives and CSAs. These mechanisms have long been studied in
economics for their ability to effectively price discriminate among
different buyers despite their relative simplicity. \citet{Arm-99}
shows, for example, that for an additive value buyer with independent
item values and sufficiently many items, two-part tariffs extract
nearly the entire social surplus.

Our work combines and significantly extends techniques from several
different lines of work in mechanism design. We use the {\em \exante\
  relaxation} of \citet{Alaei-FOCS11} to break up the multi-agent
revenue maximization problem into its single-agent counterparts and
capture the externalities among buyers through \exante\ supply
constraints.  We solve the single-agent problems with \exante\ supply
constraints by adapting and extending the so-called {\em core-tail
  decomposition} technique of \citet{LY-PNAS13}, as well employing the
prophet inequalities of \citet{CHMS-STOC10} and
\citet{fsz-15}. Finally, we use ideas from \citep{CHMS-STOC10} to
combine the resuting single-agent mechanisms sequentially and obtain a
multi-agent approximation mechanism that is \expost\ supply~feasible.
While our main result applies to buyers with values additive up to a
matroid constraint, parts of our approach extend to more general value
functions such as those satisfying the gross substitutes condition.

\subsection{Multi-Parameter Mechanism Design: Previous Work}

This paper belongs to a long line of research on finding simple and
approximately optimal mechanisms for multi-parameter settings under
various assumptions on the buyers' value functions and type
distributions, and on the seller's supply constraint.
The first breakthrough along these
lines was made by \citet{CHK-07} who showed that the revenue of an
optimal mechanism for a single unit-demand buyer can be approximated
within a factor of $3$ by an item pricing,\footnote{\citet{CHMS-STOC10}
  later improved this approximation factor to $2$.} a mechanism that
allows the buyer to choose any item to buy at fixed per-item
prices. More recently, \citet{bilw-focs14} developed a similar result
for a single buyer with additive values.\footnote{This is the
  culmination of a series of papers including \cite{HN-EC12, HN-EC13,
    LY-PNAS13}.}  They showed that the revenue of an optimal mechanism
in this case is approximated within a factor of $6$ by one of two
simple mechanisms: an item pricing that fixes a price for each item
and allows the buyer to choose any subset of items to buy, and a
bundle pricing that allows the buyer to buy the grand bundle of all
items at a fixed price. Observe that item pricing and bundle pricing
are both two-part tariffs (with the entry fee or the per-item prices
being zero, respectively).

Unit-demand and additive types are two extremes within a broader class
of value functions that we call {\em constrained additive values}. A
constrained additive buyer has a value (drawn independently) for each
item under sale; he is interested in buying a set of items that
satisfies a certain downward-closed constraint; his value is additive
over any such set. We have only recently begun to understand optimal
mechanism design for a single agent with constrained additive
values. \citet{rw-15} proved that in this setting, as in the additive
case, either item pricing or bundle pricing gives a constant-factor
approximation to the optimal revenue.\footnote{\citet{rw-15}'s result
  holds for a much broader setting with a single subadditive value
  agent, but their factor of approximation is rather large -- about
  340.} There are many similarities between the two lines of work on
unit-demand buyers and additive buyers, and \citeauthor{rw-15}'s
result can be seen as a unification of the two approaches, albeit with
a worse approximation factor.


Multi-parameter settings with multiple buyers are less well
understood. For settings with many unit-demand buyers,
\citet{CHMS-STOC10, cms-10} developed a generic approach for
approximation via sequential posted-price mechanisms (SPMs). SPMs
approach buyers in some predetermined order and offer items for sale
to each buyer at predetermined prices while supplies last. For
settings with many additive-value buyers, \citet{yao-15} showed that
either running a second-price auction for each item separately or
optimally selling to bidder $i$ the set of items for which he is the
highest bidder\footnote{In the latter case, Yao approximates the
  optimal revenue via two-part tariffs.}
achieves a constant-factor approximation. \citet{CDW-16} presented a
new uniform framework that can be used to rederive both Yao and Chawla
et al.'s results, with a tighter analysis for the former. However,
prior to our work, no approximations were known for other constrained
additive settings or for settings with heterogeneous buyers. Consider,
for example, a setting with some unit-demand and some additive
buyers. In this case, neither of the results mentioned above provide
an approximation.  \citeauthor{CHMS-STOC10}'s analysis relies on a
reduction from multi-dimensional incentive constraints to
single-dimensional ones that applies only to the unit-demand setting,
and, in particular, cannot account for revenue from bundling, which is
crucial in non-unit-demand settings. \citeauthor{yao-15}'s approach on
the other hand relies on allocating each item to the highest value
agent, and cannot provide a constant-factor approximation for
subadditive~agents.\footnote{To see why Yao's approach cannot work for
  unit-demand agents, observe that if a single unit-demand agent has
  the highest value for each item, the seller must try to sell all but
  one item to non-highest-value buyers in order to obtain good
  revenue.}



A different approach to optimal mechanism design due to
\citet{CDW-STOC12, CDW-FOCS12, CDW-SODA13, CDW-FOCS13} uses
linear programming formulations for settings with small support type
distributions, and shows that optimal mechanisms are virtual welfare
maximizers. This approach is unsuitable for our setting which, even
upon discretization of values, involves distributions over exponential
size supports. Moreover, mechanisms generated by this approach tend to
lack the nice structure and simplicity of pricing-based~mechanisms.

Finally, a new approach to mechanism design has emerged in recent
years that uses duality theory to design as well as analyze optimal or
approximately optimal mechanisms \citep[see, e.g.,][]{DDT-EC15, GK-14, GK-15, HH-15,
  CDW-16}. Designing good dual solutions in this context, however,
involves more art than science, and for the most part, positive
results are restricted to very special classes of value functions and
value distributions.


\subsection{Our Techniques and Contributions}
\paragraph{\ExAnte\ Relaxation.}
Our work follows a generic approach developed by \citet{Alaei-FOCS11}
for transforming multi-agent mechanism design problems into their
single-agent counterparts via the so-called {\em \exante\ relaxation}.
In a multi-agent setting, agents impose externalities upon each other
through the seller's supply constraint: each item must be sold to at
most one buyer \expost. Alaei proposes relaxing the problem by
enforcing the supply constraint \exante\ rather than \expost: the
probabilities with which an item is sold to the different buyers
should sum up to no more than one. In other words, in expectation the
item is sold at most once. Applying the \exante\ relaxation to a
mechanism design problem with multiple buyers involves three steps:
\begin{enumerate}
  \item {\bf Decompose into single-agent problems:} determine the \exante\
  probabilities with which each item can be sold to each buyer; for each item
  these probabilities should sum up to no more than $1$;

  \item {\bf Solve single-agent problems:} for each agent, find an approximately
  optimal mechanism satisfying the \exante\ supply constraint determined in the
  first step;

  \item {\bf Stitch single-agent mechanisms:} combine the single-agent
  mechanisms developed in the second step into a single mechanism that satisfies
  the supply constraints \expost.
\end{enumerate}
The first step is conceptually simple and applies in any setting where
buyers have independent values. We reproduce this argument in
Section~\ref{sec:relaxation} for completeness.




Alaei described how to implement the second and third steps for
problems involving unit-demand agents.\footnote{Alaei also presented
  solutions for certain additive-value settings under the assumption
  that the agents' type spaces are small and given explicitly.} For
the third ``stitching'' step, he suggested composing the
single-agent mechanisms sequentially (similar to the approach of
\citet{CHMS-STOC10}).  However, this does not work for arbitrary
single-agent mechanisms. Once the composite mechanism has sold off a
few items, fewer bundles are available to subsequent buyers, and the
mechanism may obtain far less revenue than its single-agent
counterparts. We show that two-part tariffs compose well without
much loss in revenue when each buyer's value function is additive
up to a matroid feasibility constraint (and, more generally, when the
value functions satisfy the gross substitutes condition).

\paragraph{Core-Tail Decomposition.}
In order to bound the single-agent revenue as required in step two of
the \exante\ approach, we use the core-tail decomposition of
\citet{LY-PNAS13}, and its extensions due to \citet{bilw-focs14} and
\citet{rw-15}. Roughly speaking, in the absence of \exante\ supply
constraints, for any vector of item values, we can partition items
into those with small value and those with large value.  This
partitioning is done in such a manner that the set of large-value
items (a.k.a. the tail) contains only a few items in expectation; the
revenue generated by these items behaves essentially like unit-demand
revenue, and can be recovered by selling the items separately via an
argument of \citet{cms-10}.  The set of small-value items (a.k.a. the
core), on the other hand, displays concentration of value and the
revenue generated by these items can be recovered via bundling
\citep{rw-15}.

Under an \exante\ supply constraint the revenue generated by the tail
can still be recovered via item pricing as before. Bounding the
revenue from the core is trickier, however, because different items
may face very different \exante\ constraints, and their total values
may not concentrate well. Furthermore, selling the grand bundle
allocates all items with the same probability to the buyer and
consequently may not respect the given \exante\ constraint. We make a
careful choice of thresholds for partitioning values into the core and
the tail in such a manner that we can recover the value of the core in
two parts:
(1) when the \exante\ constraint is strong (i.e. the allocation
probabilities are mostly small), selling separately recovers most of
the core revenue; (2) when the \exante\ constraint is weak (i.e. the
allocation probabilities are mostly large), bundling as part of a
two-part tariff recovers most of the core revenue while continuing to
respect the \exante\ constraint.

\paragraph{Prophet Inequalities.}
Observe that the \exante\ approach described above relaxes the
seller's supply constraint, but continues to enforce the buyer's
demand constraint\footnote{The buyer's demand constraint refers to,
  e.g., whether the buyer desires one item as in the unit-demand case,
  or all items as in the additive case.} \expost. It is unclear how a
relaxation of the buyer's demand constraint would capture revenue due
to bundling, and whether such a relaxation is useful for mechanism
design. Nevertheless, our analysis gives rise to a term which
corresponds to item-pricing revenue from a common relaxation of the
seller's and buyer's constraints.
Roughly speaking, this term captures the total revenue that the seller
can obtain from the buyer by selling each item separately subject to a
bound on the probability of sale, under the condition that these
bounds respect both the seller's and the buyer's feasibility
constraints in an \exante\ sense. For example, for a unit-demand
buyer, the probabilities of sale over the items must sum up to no more
than $1$. We then employ a prophet inequality to relate this term to
the optimal item-pricing revenue for that buyer. A prophet inequality
in this context specifies an item pricing that, regardless of which
maximal feasible set of items the buyer purchases, obtains in
expectation a constant fraction of the \exante\ optimal
revenue. Prophet inequalities of the above form are known to hold for
several classes of feasibility constraints, such as uniform matroids,
partition matroids, and their intersections (see, e.g.,
\citealp{CHMS-STOC10}). For general matroid constraints, it is not
known whether a prophet inequality with static item prices as
described above can obtain a constant approximation
factor.\footnote{\citet{KW-STOC12} present a prophet inequality with
  adaptive prices, but this is unsuitable for our
  setting.} However, \citet{fsz-15} give a prophet inequality that obtains a
constant approximation by restricting the buyer's demand -- in other words, by
forbidding the buyer to purchase certain feasible sets. We discuss and use these
results in Section~\ref{sec:single-agent}.

\paragraph{The Final Mechanism.}
As mentioned earlier, our final mechanism is a sequential two-part
tariff mechanism. We remark that buyers in our mechanism are required
to pay the entry fee before finding out whether their favorite items
will be available when it is their turn to buy; therefore, our
mechanism is only Bayesian incentive compatible (BIC), and not
necessarily dominant strategy incentive compatible (DSIC). We leave
open the question of whether it is possible to approximate the optimal
revenue within a constant factor via a DSIC mechanism. In some
settings, our mechanism restricts the subsets of items that a buyer is
allowed to buy; we call such a mechanism a {\em demand-limiting
  sequential two-part tariff}. This is seen, for instance,
in market-style CSA programs in which members can buy only certain
quantities and combinations of~produce.

\paragraph{Other Contributions.} As special cases of our general
result, we also obtain improvements to the results of \citet{rw-15}.
Recall that \citeauthor{rw-15} show that for a single buyer with
subadditive values, either item pricing or bundle pricing obtains a
constant-factor approximation. We improve this result in two
ways. First, for constrained additive values, we improve the
approximation factor from about 340 to 31.1
(Corollary~\ref{cor:true-single-agent}).\footnote{It is possible to
  use \citeauthor{rw-15}'s techniques to obtain a better approximation
  for the special case of constrained additive values, however, the
  resulting bound is still much weaker than ours.}  Second, we show
that the result holds also under an \exante\ constraint for a suitable
definition of item pricings and bundle pricings that respect the same
\exante\ constraint (see
Corollary~\ref{cor:general-ex-ante}). Finally, for revenue
maximization with multiple additive buyers, we adapt arguments from
\citep{bilw-focs14} to obtain an approximation factor of 28
(Appendix~\ref{sec:additive}); 
this is an improvement over \citet{yao-15}'s
approximation factor of 69 for the same setting, but is worse than
\citet{CDW-16}'s improvement of \citeauthor{yao-15}'s analysis to an
8-approximation. Arguably, our analysis for this setting is
conceptually simpler than both of those~works.

\paragraph{Symmetric Settings.} In an interesting special case of our
setting, the buyers are a~priori symmetric (but items are
heterogeneous). That is, each buyer has a value vector drawn from
identical independent distributions, and also desires the same bundles
of items. In this setting, our mechanism sets the same entry fee as
well as item prices for all buyers. Furthermore, these fees and prices
can be computed~efficiently (Section~\ref{sec:symmetric}).

\paragraph{Further Directions.} For settings with asymmetric buyers, we leave
open the question of efficiently solving the \exante\ relaxation.
Our main result requires buyers' demand constraint to be matroids for two
reasons: this allows us to use a prophet inequality for a single agent, and it
also enables us to combine single-agent mechanisms sequentially without much
loss in revenue.  It is an interesting challenge to apply the \exante\ approach
for demand constraints beyond matroids, or for more general classes of
subadditive values.


\section{Preliminaries}
\label{sec:prelim}

We consider a setting with a single seller and $n$ buyers. The seller has $m$
heterogeneous items to sell. Each buyer $i\in [n]$ has a type composed of a
public downward-closed demand constraint $\feasi\subseteq 2^{[m]}$ and a private
value vector $\vali=(\valij[1], \cdots, \valij[m])$ that maps items to
non-negative values.  Roughly speaking, the demand constraint $\feasi$ describes
the maximal sets of items from which the buyer derives value. Formally, the
buyer's value for a set of items is described by a {\em constrained additive}
function: for $S\subseteq 2^{[m]}$,
\[ \vali(S) = \max_{S'\in\feasi; S'\subseteq S} \sum_{j\in S'} \valij\]

It will sometimes be necessary to consider feasibility restricted to subsets of
the available items. For $M' \subseteq [m]$, the {\em restriction of $\feasi$ to
$M'$}, denoted $\feasi|_{M'}$, is formed by dropping items not in $M'$.
Formally, $\feasi|_{M'} = \feasi\intersect2^{M'}$.  We will typically assume
that for all $i$, $\feasi$ is a matroid; see Appendix~\ref{sec:matroids} for a
review of matroid concepts.

We assume that the values $\valij$ are drawn from distribution $\distij$
independently of all other values; we use $\disti=\prod_j \distij$ to denote the
joint distribution of buyer $i$'s value vector and $\dists = \prod_i \disti$ to
denote the joint distribution over all value vectors. The demand constraints
$\feasi$ may be different for different buyers. Let $\feas =
\{\feasi\}_{i\in[n]}$ denote the tuple of feasibility constraints, one for each
buyer. 

\subsection{Incentive Compatible Mechanisms and Revenue Maximization}

A mechanism $\mech$ takes as input the value vectors $\vals=(\vali[1], \cdots,
\vali[n])$ and returns an allocation $\allocs(\vals)$ and payment vector
$\prices(\vals)$. Here $\alloci(\vals)$ denotes the (potentially random) set of
items that is allocated to buyer $i$. A mechanism $\mech$ is {\em
supply-feasible} if every item is allocated to at most one buyer; in other
words, for all $\vals$, and $i_1\ne i_2$,
$\alloci[i_1](\vals)\cap\alloci[i_2](\vals) = \emptyset$ with probability $1$.

We use $\allocij(\vals)$ to denote the probability with which buyer $i$ receives
item $j$.  Without loss of generality, we focus on mechanisms that for every
value vector $\vals$ and every buyer $i$ satisfy $\alloci(\vals)\in\feasi$ with
probability $1$; we call such mechanisms {\em demand-feasible}.  Consequently,
we note that the vector $(\allocij[1](\vals), \cdots, \allocij[m](\vals))$ lies
in the polytope enclosing $\feasi$,  which we denote\footnote{Formally,
  $\ptopei$ is the convex hull of the incidence vectors of all sets in $\feasi$
  in $\Re^m$.} $\ptopei$.
In the rest of the paper we will overload notation and use
$\alloci(\vals)$ to denote the vector $(\allocij[1](\vals), \cdots,
\allocij[m](\vals))$.

We assume that buyers are risk neutral and have quasi-linear utilities. In other
words, the utility that a buyer derives from allocation $\alloci$ and payment
$\pricei$ is given by $\alloci\cdot\vali - \pricei$. We consider mechanisms
which are {\em Bayesian incentive compatible (BIC)}. A mechanism is BIC if
truthtelling is a Bayes-Nash equilibrium; that is, if a buyer maximizes his own
utility---in expectation over other buyers' values, assuming they report
truthfully, as well as randomness inherent in the mechanism---by reporting
truthfully. In contrast, a mechanism is {\em dominant-strategy incentive
compatible (DSIC)} if truthtelling is a dominant strategy; that is, if a buyer
maximizes his own utility by reporting truthfully, regardless of what other
buyers report.

We are interested in revenue maximization for the seller. The seller's revenue
from a BIC mechanism $\mech=(\allocs,\prices)$ at value vectors $\vals$ is
$\sum_i\pricei(\vals)$, and the expected revenue is $\revm(\dists) =
\expect[\vals\sim\dists]{\sum_i\pricei(\vals)}$. The revenue maximization
problem seeks to maximize $\revm(\dists)$ over all BIC mechanisms that are
demand- and supply-feasible; we use $\rev(\dists,\feas)$ to denote this maximum
revenue.


\subsection{\ExAnte\ Constrained Revenue Maximization}
We will reduce the multiple buyer revenue maximization problem
described above to single-buyer problems with \exante\ supply
constraints. The following definitions are for a single agent $i$; we
omit the subscript $i$ for clarity.  Let $\eaprobs = (\eaprobj[1],
\cdots, \eaprobj[n])$ be a vector of probabilities with $\eaprobj\in
[0,1]$ for all $j\in [m]$. A mechanism $\mech=(\allocs,\prices)$ is
{\em $\eaprobs$-constrained under $\dists$} if for all items $j\in
[m]$, its \exante\ probability for selling item $j$ when values are
drawn from $\dists$, $\expect[\vals\sim\dists]{\allocj(\vals)}$, is at
most $\eaprobj$.  We will consider both revenue and welfare
maximization problems over $\eaprobs$-constrained mechanisms.
Formally, we define 
\begin{align}
  \label{eq:constrained-revenue}
\earev(\dist,\feas) & = \max_{\mech=(\allocs,\prices):
\expect[\vals\sim\dists]{\allocj(\vals)}\le\eaprobj \,\,\forall j\in [m]}
\revm(\dists)
\end{align}
and
\begin{align*}
\eaVal(\dist,\feas) & = \max_{\mech=(\allocs,\prices):
\expect[\vals\sim\dists]{\allocj(\vals)}\le\eaprobj \,\,\forall j\in [m]}
\Valm(\dists),
\end{align*}
where the maximum is taken over all BIC demand-feasible mechanisms\footnote{We
don't need to impose the supply-feasibility constraint explicitly --- this is
already implicit in the \exante\ probability constraint.} and
$\Valm(\dists) = \expect[\vals\sim\dists]{\allocs(\vals)\cdot\vals}$.

It will sometimes be convenient to express the \exante\ constraint in the form
of \exante\ prices defined as: $\eapricej = \distj^{-1}(1-\eaprobj)$.  In other
words, for every $j\in [m]$, $\eapricej$ is defined such that the probability
that $\valj$ exceeds this price is precisely $\eaprobj$. Note that there is a
one-one correspondence between \exante\ probabilities and \exante\ prices.

\subsection{Special Single-Agent Mechanisms}

\paragraph{Item Pricing.} An item pricing is defined by a set of
prices $\pricej$, one for each item $j$.  A buyer is allowed to select
as many items as he pleases, up to some downward-closed constraint
$\feas$, and he pays the sum of the associated prices.  That is, if the
buyer selects the set $S \subseteq [m]$, he pays $\sum_{j\in S}p_j$.
The buyer then selects the set $S \in \feas$ which maximizes
$\sum_{j\in S}(\valj - \pricej)$. We use $\srev(\dists,\feas)$ to
denote the optimal revenue obtainable by any item pricing from a buyer
with value distribution $\dists$ and demand constraint $\feas$.



\paragraph{Bundle Pricing.} A bundle pricing is defined by a single
price (a.k.a. entry fee) $\ef$. A buyer can buy any subset of items
satisfying the demand constraint $\feas$ at price $\ef$.  A rational
buyer chooses to participate (i.e. pay the fee) if
$\val([m])=\max_{S\in\feas}\val(S) \geq \ef$ and then selects a
corresponding maximal set $S$. We use $\brev(\dists,\feas)$ to
represent the optimal revenue obtainable by any bundle pricing from a
buyer with value distribution $\dists$ and demand constraint $\feas$.


\paragraph{Two-Part Tariffs.} A two-part tariff is a common
generalization of both item pricings and bundle pricings. It is
described by an $m+1$ dimensional vector of prices: $(\ef, \pricej[1],
\cdots, \pricej[m])$. The mechanism offers each set $S\subseteq [m]$
of items to the buyer at a price of $\ef+\sum_{j\in S} \pricej$; the
buyer can then choose to buy his favorite set at these offered
prices. Informally speaking, the mechanism charges the buyer an {\em
  entry fee} of $\ef$ for the right to buy any set of items, with item
$j$ offered at a fixed price of $\pricej$. Like other pricing-based
mechanisms, two-part tariffs are deterministic, dominant strategy
incentive compatible mechanisms.
 
A utility-maximizing buyer with values $\vals$ and feasibility constraint
$\feas$ when offered a two-part tariff $(\ef, \prices)$ buys the set $S\in\feas$
of items that maximizes $\vals(S)-\ef-\sum_{j\in S} \pricej$, if  that quantity
is non-negative\footnote{This is essentially an ex-post IR condition.}; in that case, we say that the buyer participates  in the
mechanism. We denote the revenue of a two-part tariff $(\ef, \prices)$  offered
to a buyer with feasibility constraint $\feas$ and value distribution $\dists$
by $\revt(\dists,\feas)$.  We use $\trev(\dists,\feas)$ to denote the  optimal
revenue that a two-part tariff can obtain from a buyer with value  distribution
$\dists$ and demand constraint $\feas$.

Two-part tariffs are known to be approximately optimal in certain
single-agent settings. The following results\footnote{Here $\udfeas =
 \{S\subset [m] \mid |S|=1\}$ represents a unit-demand buyer, and
 $\adfeas = 2^{[m]}$ represents a buyer with fully additive values.}
are due to \citet{cms-10} and \citet{bilw-focs14}
respectively. \citet{rw-15} proved a similar result for constrained
additive values, but with a very large approximation factor (about
340).
\begin{align*}
 \rev(\dists,\udfeas) & \leq 4\,\trev(\dists,\udfeas)\\
 \rev(\dists,\adfeas) & \leq 6\,\trev(\dists,\adfeas)
\end{align*}

%

\paragraph{Pricings with an \ExAnte\ Constraint.}
Next we extend the above definitions to respect \exante\ supply
constraints. We say that a two-part tariff $(\ef, \prices)$ satisfies
\exante\ constraint $\eaprobs$ if for all $j$,
$\pricej\ge \eapricej=\distj^{-1}(1-\eaprobj)$. Note that this is a
stronger condition than merely requiring that the mechanism allocates
item $j$ with \exante\ probability at most $\eaprobj$. We use
$\eatrev(\dists,\feas)$ to denote the optimal revenue achieved by a
demand-feasible two-part tariff that satisfies \exante\ constraint
$\eaprobs$. Likewise, we use $\easrev(\dists,\feas)$ to denote the
optimal revenue achievable by an item pricing $\prices$ with
$\pricej\ge \eapricej$ for all $j$.


\subsection{Multi-Agent (Sequential) Two-Part Tariff Mechanisms} We
now extend the definition of two-part tariffs to multi-agent settings.
Consider a setting with $n$ agents and demand constraints
$\feas=\{\feasi\}_{i\in[n]}$. A {\em sequential two-part tariff} for
this setting is parameterized by an ordering $\sigma$ over the agents,
a set of entry fees $\efs=(\efi[1], \cdots, \efi[n])$, and a set of
prices $\prices=\{\priceij\}$.  The mechanism proceeds as follows.
\begin{enumerate}
  \item The ordering $\sigma$ and prices $\efs;\prices$ are announced.

  \item Each agent $i$ independently decides whether or not to participate in
  the mechanism. If the agent decides to participate, then he pays his
  corresponding entry fee~$\efi$.

  \item The mechanism considers agents in the order given by $\sigma$. When an
  agent $i$ is considered, if the agent previously declined to participate, no
  items are allocated and no payment is charged. Otherwise, of the items
  unallocated so far, the agent is allowed to purchase his favorite feasible set
  of items at the prices $\priceij$.
\end{enumerate}
Observe that agents choose whether or not to participate in the
mechanism before knowing which items will be available when it is
their turn to purchase. Accordingly, a sequential two-part tariff is
BIC but not necessarily DSIC.

The sequential two-part tariff mechanisms that we develop in this
paper are {\em order oblivious} in the sense that their revenue
guarantees hold regardless of the ordering $\sigma$ chosen over the
agents. Accordingly, in describing these mechanisms, we need only
specify the prices $\efs;\prices$.

In some cases, our two-part tariff mechanisms disallow agents from
buying certain sets of items. Specifically, a {\em demand-limiting
  sequential two-part tariff} is parameterized by an ordering
$\sigma$, prices $\efs; \prices$, as well as feasibility constraints
$\feas'=\{\feasi'\}_{i\in[n]}$ where, for every agent $i$,
$\feasi'\subseteq \feas$ is a matroid constraint stronger than the
agent's original demand constraint. When it is agent $i$'s turn to buy
items, the agent is allowed to buy any subset of items in
$\feasi'$. In particular, the agent is not allowed to buy sets of
items in~$\feasi\setminus\feasi'$.

\section{Main Results}
\label{sec:theorems}

We now state our three main results corresponding to the three parts
of the \exante\ approach for approximating $\rev(\dists, \feas)$.
Lemma~\ref{lem:relaxation} corresponds to the first {\bf relaxation}
step, and states that the revenue $\rev(\dists, \feas)$ can be bounded
by the sum of single-agent revenues with appropriate \exante\
constraints. While the lemma is stated here for buyers with
constrained additive values, it holds for arbitrary value functions as
long as values are independent across buyers.

\begin{lemma}[\bf Relaxation]
\label{lem:relaxation}
 For any feasibility constraints $\feas=\{\feasi\}_{i\in[n]}$ and value
 distributions $\dists=\prod_i\disti$, there exist \exante\ probability vectors
 $\eaprobi[1], \cdots, \eaprobi[n]$, satisfying: (1)
 $\eaprobi\in\ptopei$ for all $i$, and, (2) $\sum_i \eaprobij\le 1$
 for all $j$, such that 
 \[\rev(\dists,\feas)\le \sum_i \earev[\eaprobi](\disti,\feasi).\]
\end{lemma}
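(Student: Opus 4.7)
The plan is to start from an optimal BIC, demand-feasible, supply-feasible mechanism $\mech^*=(\allocs^*,\prices^*)$ attaining $\rev(\dists,\feas)$, and to ``project'' it down to each agent individually, letting the induced marginal allocation probabilities play the role of $\eaprobi$. This is the standard Alaei-style \exante\ relaxation argument, applied here to arbitrary independent (not necessarily single-parameter) agents.

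Concretely, for each agent $i$ and item $j$ define
\[ \eaprobij \;=\; \expect[\vals\sim\dists]{\allocij^*(\vals)},\]
and let $\eaprobi=(\eaprobi[i1],\ldots,\eaprobi[im])$. The two feasibility requirements are immediate. For (1), since $\mech^*$ is demand-feasible we have $\alloci^*(\vals)\in\ptopei$ almost surely; the polytope $\ptopei$ is convex, so its expectation $\eaprobi$ also lies in $\ptopei$. For (2), supply-feasibility gives $\sum_i\allocij^*(\vals)\le 1$ pointwise, so taking expectations yields $\sum_i\eaprobij\le 1$.

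It remains to exhibit, for each agent $i$, a BIC demand-feasible mechanism $\mech_i$ for the single-agent problem whose \exante\ allocation of item $j$ is at most $\eaprobij$ and whose expected revenue is at least $\expect[\vals\sim\dists]{\pricei^*(\vals)}$; summing over $i$ then gives the claimed bound since $\sum_i\expect{\pricei^*(\vals)}=\rev(\dists,\feas)$. The construction is simulation: on input $\vali$, the mechanism $\mech_i$ independently samples $\tvalsmi\sim\distsmi$, runs $\mech^*$ on $(\vali,\tvalsmi)$, and returns the allocation $\alloci^*(\vali,\tvalsmi)$ and payment $\pricei^*(\vali,\tvalsmi)$. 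Independence of the $\vali$'s across agents is what makes this simulation have the correct marginal distribution of allocations and payments: the \exante\ probability that $\mech_i$ gives item $j$ to agent $i$ is exactly $\eaprobij$, and the expected payment is exactly $\expect[\vals\sim\dists]{\pricei^*(\vals)}$. Demand-feasibility is inherited pointwise from $\mech^*$.

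The main subtlety is that $\mech_i$ must itself be BIC for the single agent. But this is precisely the BIC condition of $\mech^*$ for agent $i$: for any $\vali$ and any misreport $\vali'$, the interim utility $\expect[\tvalsmi]{\alloci^*(\vali,\tvalsmi)\cdot\vali - \pricei^*(\vali,\tvalsmi)}$ is at least $\expect[\tvalsmi]{\alloci^*(\vali',\tvalsmi)\cdot\vali - \pricei^*(\vali',\tvalsmi)}$, so truthful reporting maximizes the agent's utility in $\mech_i$. Hence $\mech_i$ is feasible for the optimization defining $\earev[\eaprobi](\disti,\feasi)$, and
\[\earev[\eaprobi](\disti,\feasi)\;\ge\;\expect[\vals\sim\dists]{\pricei^*(\vals)}.\]
Summing over $i$ yields $\rev(\dists,\feas)\le\sum_i\earev[\eaprobi](\disti,\feasi)$, completing the proof.
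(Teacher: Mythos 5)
Your proposal is correct and follows essentially the same approach as the paper: simulate the optimal multi-agent mechanism from the perspective of each agent $i$ by sampling the other agents' values, take $\eaprobij$ to be the resulting marginal allocation probability, and note that the simulated single-agent mechanism inherits BIC, demand-feasibility, and the exact expected revenue from agent $i$, while supply-feasibility of the original mechanism forces $\sum_i\eaprobij\le 1$. The only cosmetic difference is that you define $\eaprobij$ directly from $\mech^*$ and verify the convexity of $\ptopei$ explicitly, whereas the paper defines it via the simulated mechanism $\mech_i$; these coincide.
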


Lemma~\ref{lem:stitching-trevs} corresponds to the last {\bf stitching} step, and
shows that any single-agent two-part tariff mechanisms that
collectively satisfy an \exante\ constraint on every item can be
stitched together into a multi-agent sequential two-part tariff
mechanism without losing much revenue.

\begin{lemma}
\label{lem:stitching-trevs} 
For every agent $i$, let $\mech_i = (\efi,\pricei)$ be any two-part
tariff that is demand-feasible with respect to a matroid feasibility
constraint $\feasi$ and that satisfies \exante\ supply constraints
$\eaprobi$ under value distribution $\disti$. Let
$\feas=\{\feasi\}_{i\in[n]}$ and $\dist=\prod_i\disti$.  Then, if
$\sum_i \eaprobij\le 1/2$ for all $j$, there exists a sequential
two-part tariff mechanism $\mech$ that is supply-feasible and
demand-feasible with respect to $\feas$ such that
\[\revm(\dists)\ge \half\sum_i \revm[\mech_i](\disti).\]
\end{lemma}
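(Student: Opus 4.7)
The plan is to construct $\mech$ using any fixed ordering $\sigma$, the same per-item prices $p_{ij}$ as in $\mech_i$, and entry fees \emph{halved} to $\ef_i/2$; we then compare the revenue each agent contributes in $\mech$ with her payment in $\mech_i$ pointwise in $\vals_i$. First, a union bound shows that every item $j$ is available to agent $i$ with probability at least $\half$: any earlier agent $i'<i$ buys $j$ in $\mech$ only if $v_{i'j}>p_{i'j}$, which (since $p_{i'j}\ge\beta_{i'j}=F_{i'j}^{-1}(1-q_{i'j})$) has probability at most $q_{i'j}$, so $\prob{j\notin A_i}\le\sum_{i'<i}q_{i'j}\le 1/2$, where $A_i\subseteq[m]$ denotes the random set of remaining items when agent $i$ is considered. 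Crucially, $A_i$ is independent of $\vals_i$ because values are independent across agents.

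Next, for fixed $\vals_i$ let $T^*_i\in\feasi$ be the utility-maximizing bundle in $\mech_i$ (assuming participation), with surplus $u(\vals_i)=\max_{T\in\feasi}\sum_{j\in T}(v_{ij}-p_{ij})_+$. Since $\feasi$ is downward-closed, $T^*_i\cap A_i$ remains feasible, so the agent's sequential surplus $u'(\vals_i,A_i):=\max_{T\in\feasi,\,T\subseteq A_i}\sum_{j\in T}(v_{ij}-p_{ij})_+$ is at least $\sum_{j\in T^*_i\cap A_i}(v_{ij}-p_{ij})$. Taking expectations gives $\expect[A_i]{u'(\vals_i,A_i)}\ge\half u(\vals_i)$, so whenever $u(\vals_i)\ge\ef_i$ (participation in $\mech_i$) the agent also participates in $\mech$ under the halved entry fee $\ef_i/2$.

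The crucial third ingredient is a matroid-greedy argument: if $j\in T^*_i$ and $j\in A_i$, then the agent's chosen set $T'_i\subseteq A_i$ in $\mech$ contains $j$. Ordering items by decreasing weight $w_j=(v_{ij}-p_{ij})_+$, any $k\notin T^*_i$ with $w_k>w_j$ lies in $\mathrm{span}_{\feasi}(T^*_i\cap\{\ell:w_\ell>w_k\})$ (the reason greedy skipped $k$), and an induction on the weight order shows every element of $T'_i\cap\{\ell:w_\ell>w_j\}$ lies in $\mathrm{span}_{\feasi}(T^*_i\cap\{\ell:w_\ell>w_j\})$; since $j\in T^*_i$ is independent of this span, greedy on $A_i$ must add $j$. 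Therefore when $u(\vals_i)\ge\ef_i$ the expected item payment in $\mech$ is at least $\sum_{j\in T^*_i}p_{ij}\,\prob{j\in A_i}\ge\half\sum_{j\in T^*_i}p_{ij}$, and combined with the halved entry fee the agent's pointwise expected payment in $\mech$ is at least $\half(\ef_i+\sum_{j\in T^*_i}p_{ij})$, i.e.\ half her payment in $\mech_i$ (the case $u(\vals_i)<\ef_i$ is trivial since both payments are zero). Taking expectations over $\vals_i$ and summing over $i$ yields the lemma.

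The principal obstacle is the matroid property in the third ingredient: in a general downward-closed feasibility family, reduced availability can shift the utility-maximizing bundle to an entirely different and possibly cheaper one, breaking the pointwise revenue comparison. Matroid exchange is exactly what guarantees that items purchased in $\mech_i$ remain the agent's preferred choices in $\mech$ whenever available, and it is used implicitly in the second ingredient as well to ensure $T^*_i\cap A_i$ is still feasible.
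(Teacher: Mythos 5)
Your proposal matches the paper's proof essentially exactly: same halved entry fee, same per-item prices, same availability bound of $1/2$ via the \exante\ constraint, same participation argument comparing expected utility against the halved fee, and the same matroid-greedy fact that any item bought in $\mech_i$ that remains available is still bought in $\mech$. The only difference is that you spell out the span/greedy induction inline, whereas the paper cites it as Lemma~\ref{lem:matroid-greedy} in the appendix.
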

 
We therefore obtain the following corollary.

\begin{corollary}[\bf Stitching]
\label{cor:stitching}
For any value distributions $\dists=\prod_i\disti$ and feasibility constraints
$\feas=\{\feasi\}_{i\in[n]}$, where each $\feasi$ is a matroid, let
$\eaprobi[1], \cdots, \eaprobi[n]$ be any \exante\ probability vectors
satisfying $\sum_i \eaprobij\le 1/2$ for all $j$. Then, there exists a
demand- and supply-feasible sequential two-part tariff mechanism
$\mech$ such that 
\[\revm(\dists)\ge \half\sum_i \eatrev[\eaprobi](\disti,\feasi).\]
\end{corollary}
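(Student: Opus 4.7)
The plan is to reduce the corollary directly to Lemma~\ref{lem:stitching-trevs} by picking, for each agent, an (essentially) optimal single-agent two-part tariff witnessing $\eatrev[\eaprobi](\disti,\feasi)$. Concretely, for every $i$, choose a two-part tariff $\mech_i=(\efi,\pricei)$ that is demand-feasible with respect to $\feasi$, satisfies the \exante\ supply constraint $\eaprobi$ under $\disti$, and has revenue at least $\eatrev[\eaprobi](\disti,\feasi)-\eps/n$; such a tariff exists by the very definition of $\eatrev$ as the supremum over exactly this class of mechanisms. (If the supremum is not attained we take a limit $\eps\to 0$ at the end; alternatively, standard compactness of the relevant price space gives an exact maximizer.)

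Next, verify that the hypotheses of Lemma~\ref{lem:stitching-trevs} hold for the collection $\{\mech_i\}$. Each $\feasi$ is assumed to be a matroid; each $\mech_i$ is a two-part tariff, demand-feasible with respect to $\feasi$, and satisfies \exante\ constraint $\eaprobi$ under $\disti$ by construction; and the hypothesis $\sum_i \eaprobij\le 1/2$ for every $j$ is assumed in the corollary. So Lemma~\ref{lem:stitching-trevs} applies and produces a supply- and demand-feasible sequential two-part tariff $\mech$ with
\[
\revm(\dists)\;\ge\;\half\sum_i \revm[\mech_i](\disti)\;\ge\;\half\sum_i\eatrev[\eaprobi](\disti,\feasi)-\eps/2.
\]
Sending $\eps\to 0$ (the set of allowed mechanisms $\mech$ is closed under this limit, since a sequence of sequential two-part tariffs with uniformly bounded prices has a convergent subsequence) yields the claimed inequality.

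There is no real obstacle in this step: the corollary is essentially a packaging result that combines the definition of $\eatrev$ with the stitching lemma. The only minor point to be careful about is whether the supremum defining $\eatrev$ is attained, which is handled either by the $\eps$-approximation argument above or by a standard compactness remark. All of the substantive work—showing that sequentially composing two-part tariffs under matroid demand constraints loses at most a factor of two in revenue—has already been carried out in Lemma~\ref{lem:stitching-trevs}.
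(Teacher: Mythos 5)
Your proposal is correct and matches the paper's intended argument: the paper presents Corollary~\ref{cor:stitching} as an immediate consequence of Lemma~\ref{lem:stitching-trevs} obtained by instantiating that lemma with (near-)optimal two-part tariffs witnessing $\eatrev[\eaprobi](\disti,\feasi)$ for each agent, exactly as you do. The only addition is your careful handling of whether the supremum defining $\eatrev$ is attained, which the paper leaves implicit.
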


In order to put together the Relaxation Lemma and the Stitching
Corollary, it remains to relate $\earev$ for a single agent to
$\eatrev$ for the same agent. The following lemma presents such a
relationship when the buyer's demand constraint is a matroid.

\begin{lemma}[\bf Single-agent approximation]
\label{lem:approx-partition}
  Let $\dist$ be any product value distribution and $\feas$ be a matroid with
  feasible polytope $\ptope$. Then, for any $\eaprob\in \frac 12\ptope$, there
  exists a submatroid $\feas' \subseteq \feas$ such that
  \[ \earev(\dist, \feas) \le 33.1\,\eatrev(\dist, \feas') \]
  If $\feas$ is a partition matroid, then $\feas' = \feas$.
\end{lemma}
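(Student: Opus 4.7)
The plan is to extend the core-tail decomposition of Li-Yao (and its subadditive variant due to Rubinstein-Weinberg) to the ex-ante constrained setting. For each item $j$, I would pick a threshold $\threshj$ taken as a constant multiple of the ex-ante price $\eapricej = \distj^{-1}(1-\eaprobj)$, so that $\valj$ exceeds $\threshj$ with probability on the order of $\eaprobj$. Since $\eaprob \in \frac12 \ptope$, summing these probabilities shows the expected number of items in the tail is at most a constant times the rank of $\feas$, keeping the tail ``unit-demand-like.'' Via a marginal-mechanism argument I would then upper bound $\earev(\dist,\feas)$ by a \emph{tail revenue} term (the extra contribution from values exceeding their thresholds) plus a \emph{core revenue} term (the contribution when all values lie below their thresholds).

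For the tail, a CMS-style reduction bounds the tail revenue by the optimal $\eaprob$-constrained item-pricing revenue $\easrev(\dist, \feas)$; to realize this as a genuine item pricing respecting the matroid demand, I invoke a prophet inequality. When $\feas$ is a partition matroid, the CHMS prophet inequality yields static prices compatible with the full matroid, which is why the lemma claims $\feas' = \feas$ in this case. For a general matroid, I invoke the FSZ prophet inequality, which produces static prices after restricting the buyer's demand to a submatroid $\feas' \subseteq \feas$; this is precisely the $\feas'$ in the statement.

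For the core, I would split items into a low ex-ante group $S$ with $\eaprobj < c$ and a high ex-ante group $L$ with $\eaprobj \ge c$, for an appropriately chosen constant $c$. The core contribution from $S$ is recovered by item pricing with per-item prices at least $\eapricej$: each item of $S$ is rarely purchased, so such a pricing automatically respects the ex-ante constraint. For $L$, each core value is bounded by $\threshj$ while the per-item ex-ante budget is at least $c$, so by Chebyshev the sum of core values over $L$ concentrates around its mean. Setting the entry fee of a two-part tariff slightly below this mean captures a constant fraction of the expected core value of $L$, and because every item of $L$ has $\eaprobj \ge c$, the entry fee can be tuned so that the resulting sale probability of each bundled item stays below $\eaprobj$. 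Combining the item prices for $S$ and for the prophet-inequality tail mechanism with the bundle entry fee for $L$ yields a single $\eaprob$-constrained two-part tariff on $\feas'$.

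The main obstacle will be the core-side bundling step: one must simultaneously extract a constant fraction of the mean core value of $L$ \emph{and} respect the ex-ante constraint on every item of $L$, which tightly couples the choice of $c$ with the choice of the core/tail threshold scaling. Optimizing these two parameters against the losses in the other steps (the prophet-inequality factor, the CMS-style factor for the tail, the core/tail split, and the Chebyshev concentration factor for the bundle) is what produces the final constant $33.1$. A secondary subtlety is that in the general-matroid case one must verify that the FSZ demand restriction to $\feas'$ is compatible with the core argument; this is immediate since the bundle sale depends on participation and item availability rather than on which feasible sets the buyer is allowed to purchase.
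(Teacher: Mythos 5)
Your high-level architecture matches the paper's: an ex-ante constrained core-tail decomposition, a prophet inequality to absorb the contribution from the ex-ante prices, and the FSZ demand-restriction to a submatroid $\feas'$ in the general matroid case (with the CHMS version giving $\feas'=\feas$ for partition matroids). The prophet-inequality step is substantively the same. However, your threshold choice and your treatment of the core diverge from the paper's in ways that create genuine gaps.

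The threshold choice is the critical departure. You propose $\threshj$ equal to a constant multiple of $\eapricej$ and claim this keeps the tail unit-demand-like. But $\prob{\valj > c\eapricej}$ can remain equal to $\eaprobj$ for any fixed multiplicative constant $c$ (e.g.\ when $\distj$ places an atom far above $c\eapricej$), and with $\eaprobs \in \tfrac12\ptope$ the sum $\sum_j \eaprobj$ can be as large as half the rank of $\feas$. So the expected tail size is not a constant, and the CMS-style tail bound --- which pays a multiplicative factor in the number of tail items (Claim~\ref{lem:RevSRevBound}) --- does not go through. The paper instead sets $\threshj = \eapricej + \tconst$ for a single \emph{additive} constant $\tconst$ chosen to be the smallest value making $\tprobz \geq 1/2$; this forces $\sum_j \tprobj \leq \ln 2$ independently of the rank, and, crucially, also bounds each coordinate of $\core - \eaprices$ above by $\tconst$, which is the hypothesis needed for the core concentration.

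Your core argument also does not work as stated. Splitting items into a low-$\eaprobj$ group $S$ and a high-$\eaprobj$ group $L$ and bundling $L$ via Chebyshev runs into two problems. First, for constrained additive valuations the quantity you must concentrate is $\max_{T\in\feas|_L}\sum_{j\in T}\valj$, not the plain sum $\sum_{j\in L}\valj$; its variance is not additive across items and Chebyshev gives no direct handle on it. The paper handles this with the Schechtman concentration lemma (Lemma~\ref{lem:schechtman}, as adapted by RW), which applies to constrained additive value functions whose coordinates are bounded above by $\tconst$ --- precisely the boundedness that the additive threshold shift delivers. Second, a pure bundle on $L$ allocates every purchased item with the same probability, so tuning one entry fee to simultaneously respect $\eaprobj$ for every $j \in L$ is delicate and costs revenue; the paper sidesteps this entirely by keeping item prices at or above $\eaprices$ inside the two-part tariff, which makes the ex-ante constraint automatic regardless of the entry fee. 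Correspondingly, the paper's actual core split is by \emph{value} rather than by \emph{item}: $\eaVal(\core,\feas) \leq \eaprices\cdot\eaprobs + \Val(\core-\eaprices,\feas)$ (Lemma~\ref{lem:core-bound}), with the first term recovered via the prophet inequality (giving the $\feas'$) and the second via Schechtman plus a bundle price on top of item prices $\eaprices$.
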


Putting Lemmas~\ref{lem:relaxation} and \ref{lem:approx-partition},
and Corollary~\ref{cor:stitching} together, and observing that by the
concavity of the revenue objective, $\earev[\frac
12\eaprobi](\disti,\feasi)\ge \frac 12\earev[\eaprobi](\disti,\feasi)$
for all $i$, we get our main result.

\begin{theorem}
\label{thm:main-partition}
  For any product value distribution $\dists$ and feasibility constraints
  $\feas=\{\feasi\}_{i\in[n]}$, where each $\feasi$ is a matroid, there exist
  submatroids $\feasi' \subseteq \feasi$ and a supply-feasible
  $\{\feasi'\}$-limited sequential two-part tariff mechanism $\mech$ such that
  \[ \rev(\dists, \feas)\le 133\,\revm(\dists) \]
  If $\feasi$ is a partition matroid, then $\feasi' = \feasi$.
\end{theorem}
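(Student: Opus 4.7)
The plan is to chain together the three main tools (Relaxation, Single-agent approximation, and Stitching) with one additional scaling step to make the \exante\ probability vectors small enough for the stitching step to apply.

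First, I would invoke Lemma~\ref{lem:relaxation} to obtain \exante\ probability vectors $\eaprobi[1],\dots,\eaprobi[n]$ with $\eaprobi\in\ptopei$ and $\sum_i \eaprobij\le 1$ for each $j$, such that $\rev(\dists,\feas)\le \sum_i \earev[\eaprobi](\disti,\feasi)$. The sum constraint here is only $\le 1$, but Corollary~\ref{cor:stitching} demands $\sum_i \eaprobij\le 1/2$, so I would next define $\eaprobi' := \tfrac12\eaprobi$. These scaled vectors lie in $\tfrac12\ptopei$, sum to at most $1/2$ coordinate-wise, and by concavity of $\earev$ in $\eaprobs$ (observed in the excerpt right before the theorem), satisfy $\earev[\eaprobi'](\disti,\feasi)\ge \tfrac12\earev[\eaprobi](\disti,\feasi)$, i.e.\ $\earev[\eaprobi](\disti,\feasi)\le 2\,\earev[\eaprobi'](\disti,\feasi)$.

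Next, I would apply Lemma~\ref{lem:approx-partition} to each agent $i$ with the vector $\eaprobi'\in\tfrac12\ptopei$. This yields for each $i$ a submatroid $\feasi'\subseteq\feasi$ (with $\feasi'=\feasi$ when $\feasi$ is a partition matroid) such that $\earev[\eaprobi'](\disti,\feasi)\le 33.1\,\eatrev[\eaprobi'](\disti,\feasi')$. Finally, since $\sum_i \eaprobij'\le 1/2$ for every $j$ and each $\feasi'$ is a matroid, Corollary~\ref{cor:stitching} gives a supply- and demand-feasible sequential two-part tariff $\mech$ (which is exactly $\{\feasi'\}$-limited) with $\revm(\dists)\ge \tfrac12\sum_i \eatrev[\eaprobi'](\disti,\feasi')$.

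Chaining these inequalities yields
\[ \rev(\dists,\feas)\le \sum_i \earev[\eaprobi](\disti,\feasi)\le 2\sum_i \earev[\eaprobi'](\disti,\feasi)\le 2\cdot 33.1\sum_i \eatrev[\eaprobi'](\disti,\feasi')\le 4\cdot 33.1\,\revm(\dists), \]
and $4\cdot 33.1 = 132.4\le 133$, which is the claimed bound. The partition matroid case follows from the corresponding clause of Lemma~\ref{lem:approx-partition}, which lets us take $\feasi'=\feasi$ so that $\mech$ is an ordinary (not demand-limiting) sequential two-part tariff. Since every ingredient is already proved, there is no real obstacle; the only subtlety worth double-checking is that the concavity of $\earev$ in $\eaprobs$ is indeed justified (it follows from the fact that convex combinations of feasible mechanisms are feasible and yield convex combinations of revenues), and that the factor-of-two scaling is what lets stitching go through without blowing up the constant beyond $133$.
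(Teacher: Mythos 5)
Your proof is correct and follows exactly the same approach as the paper: apply the Relaxation Lemma, halve the resulting \exante\ vectors and invoke concavity of $\earev$ to lose only a factor of $2$, apply the single-agent approximation (Lemma~\ref{lem:approx-partition}) to the halved vectors which now lie in $\frac12\ptopei$, and then stitch via Corollary~\ref{cor:stitching}, yielding $2 \cdot 33.1 \cdot 2 = 132.4 \le 133$. This is precisely the combination the paper describes immediately before stating the theorem.
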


\subsection*{Further Results}


As a consequence of our single-agent approximation
(Lemma~\ref{lem:single-agent} in Section~\ref{sec:single-agent}), we also
obtain an improved approximation for the single-agent revenue maximization
problem with constrained additive values. Specifically, taking $\eaprobs =
\vec{\mathbf{1}}$ and noting $\eaprices = \vec{\mathbf{0}}$,
Lemma~\ref{lem:single-agent} gives the following bound on the optimal revenue
for the single-agent setting.
\begin{corollary}
\label{cor:true-single-agent}
For any downward closed feasibility constraint $\feas$ and any
value distribution $\dists$,
 \[\rev(\dists,\feas) \leq 31.1\,\max\left\{\srev(\dists,\feas),
        \brev(\dists,\feas)\right\}.\]
\end{corollary}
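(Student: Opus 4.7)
The plan is to derive this corollary directly from Lemma~\ref{lem:single-agent}, the single-agent \exante-constrained approximation result in Section~\ref{sec:single-agent}. That lemma (from its role in the paper and the matching constant $31.1$) presumably bounds $\earev(\dists,\feas)$ in terms of $\easrev(\dists,\feas)$ and $\eabrev(\dists,\feas)$ for any downward-closed $\feas$ and any \exante\ probability vector $\eaprobs$. The idea is simply to instantiate this bound with the trivial \exante\ constraint $\eaprobs = \vec{\mathbf{1}}$.

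First, I would verify that the trivial \exante\ constraint corresponds to no constraint at all. When $\eaprobj = 1$ for every item $j$, the associated \exante\ price is $\eapricej = \distj^{-1}(1-1) = 0$. Hence the requirement $\pricej \ge \eapricej$ on item pricings and on the per-item prices of a two-part tariff is vacuous, and the requirement $\expect[\vals\sim\dists]{\allocj(\vals)} \le \eaprobj$ on general mechanisms is trivially satisfied by any mechanism.

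Second, I would observe that under this trivial constraint the \exante-constrained optimal revenues coincide with their unconstrained counterparts: $\earev(\dists,\feas) = \rev(\dists,\feas)$, $\easrev(\dists,\feas) = \srev(\dists,\feas)$, and $\eabrev(\dists,\feas) = \brev(\dists,\feas)$. Plugging these identities into the bound of Lemma~\ref{lem:single-agent} immediately yields
\[
\rev(\dists,\feas) \;\le\; 31.1 \, \max\{\srev(\dists,\feas), \brev(\dists,\feas)\},
\]
which is the statement of the corollary.

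There is essentially no obstacle at this step — once Lemma~\ref{lem:single-agent} is in hand, the corollary is a one-line specialization, and importantly the matroid hypothesis appearing in Lemma~\ref{lem:approx-partition} is not needed here because we invoke only the bound in terms of $\srev$ and $\brev$ (both of which are defined for arbitrary downward-closed $\feas$) rather than the $\trev$-based stitching argument. The real work lies inside Lemma~\ref{lem:single-agent} itself, which (as previewed in the introduction) relies on a core-tail decomposition with a careful threshold choice to handle both ``strong'' and ``weak'' \exante\ regimes; in the present unconstrained specialization the regime is uniformly weak, so one expects the $\brev$ term to absorb the core and the $\srev$ term to absorb the tail.
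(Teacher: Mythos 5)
Your proposal is correct and matches the paper's approach exactly: the paper likewise sets $\eaprobs = \vec{\mathbf 1}$ (hence $\eaprices = \vec{\mathbf 0}$) in Lemma~\ref{lem:single-agent}, whose bound then reduces to $6\,\brev(\dists,\feas) + 8(1+\ln 2 + 1/\ln 2)\,\srev(\dists,\feas)$ with the third term $\eaprices\cdot\eaprobs$ vanishing, yielding the constant $6 + 8(1+\ln 2 + 1/\ln 2) < 31.1$. Your remark that the matroid hypothesis is unnecessary here is also accurate, since the prophet-inequality bound on $\eaprices\cdot\eaprobs$ is not invoked.
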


Also as a consequence of Lemma~\ref{lem:single-agent}, we show the following
bound for revenue maximization under an arbitrary \exante\ constraint.
\begin{corollary}
\label{cor:general-ex-ante}
  Let $\dist$ be any product value distribution and $\feas$ be a matroid. Then
  for any $\eaprob\in[0,1]^m$, there exists a submatroid $\feas' \subseteq
  \feas$ such that
  \[\earev(\dist, \feas) \le 35.1\,\eatrev(\dist,\feas') \]
  If $\feas$ is a partition matroid, then $\feas' = \feas$.
\end{corollary}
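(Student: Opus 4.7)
The plan is to invoke the more general single-agent approximation, Lemma~\ref{lem:single-agent} (stated in Section~\ref{sec:single-agent}), which, unlike Lemma~\ref{lem:approx-partition}, applies to an arbitrary $\eaprob\in[0,1]^m$ rather than only $\eaprob\in\tfrac{1}{2}\ptope$. Once that lemma is in hand, the only remaining work is to fold the item-pricing and bundle-pricing ingredients it produces into a single $\eatrev(\dist,\feas')$ bound via a prophet inequality; the extra two units in the constant (going from $33.1$ to $35.1$) are exactly the toll paid for the additional generality in $\eaprob$.

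Concretely, I would first apply Lemma~\ref{lem:single-agent} to $(\dist,\feas)$ with the given $\eaprob$. Inspecting its special case used in Corollary~\ref{cor:true-single-agent} (take $\eaprobs=\vec{\mathbf{1}}$ and $\eaprices=\vec{\mathbf{0}}$, which recovers $\srev,\brev$), one expects the lemma to bound $\earev(\dist,\feas)$ by a constant-weighted sum of a feasible bundle-pricing term $\eabrev(\dist,\feas)$ and an ``item-pricing revenue from a common relaxation of the seller's and the buyer's constraints'' of the type flagged in the introduction, where per-item prices are required to satisfy $\pricej\ge\eapricej$. Since bundle pricing respecting $\eaprices$ is a two-part tariff with per-item prices set to $\eapricej$, we already have $\eabrev(\dist,\feas)\le\eatrev(\dist,\feas')$ for any submatroid $\feas'\subseteq\feas$. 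To convert the relaxed item-pricing term into the genuine $\easrev(\dist,\feas')\le\eatrev(\dist,\feas')$, I would apply a prophet inequality: when $\feas$ is a general matroid, the \citet{fsz-15} inequality produces the submatroid $\feas'\subseteq\feas$ on which a static item pricing achieves a constant fraction of the relaxed revenue, whereas for partition matroids the \citet{CHMS-STOC10} inequality applies directly with $\feas'=\feas$. Collecting coefficients yields the claimed $35.1$ with the correct matroid dichotomy.

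The hardest part will be verifying that the item pricing produced by the prophet inequality remains \exante-constraint compliant, i.e.\ that one can force $\pricej\ge\eapricej$ without blowing up the approximation. The natural fix is to floor any price below $\eapricej$ at $\eapricej$: this can only shrink the buyer's demand (a quantile-monotone operation), and the revenue analyses of both CHMS and FSZ style prophet inequalities behave well under such a modification, but this step must be checked carefully. A secondary issue is accounting precisely for the numerical loss: one must confirm that the two-unit gap between this corollary and Lemma~\ref{lem:approx-partition} really is attributable to the prophet inequality factor (Lemma~\ref{lem:approx-partition} exploits $\eaprob\in\tfrac12\ptope$ to avoid this charge), and that the partition-matroid refinement $\feas'=\feas$ is preserved throughout the bookkeeping.
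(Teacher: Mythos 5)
Your proposal is pointed in roughly the right direction (apply Lemma~\ref{lem:single-agent}, then convert the residual term via a prophet inequality), but it has a genuine gap: neither \citet{CHMS-STOC10} nor \citet{fsz-15} applies to an arbitrary $\eaprobs\in[0,1]^m$. Lemma~\ref{thm:ocrs} requires $\eaprobs\in b\ptope$ for some $b\in(0,1)$, and Lemma~\ref{thm:partition-matroid} requires $\eaprobs\in\frac12\ptope$. For an arbitrary vector in $[0,1]^m$ neither hypothesis holds, and no rescaling $\tfrac12\eaprobs$ will salvage it if $\eaprobs\notin\ptope$. The paper's proof opens with exactly the reduction your proposal omits: since any demand-feasible mechanism has interim allocation probabilities lying in $\ptope$, one has $\earev(\dist,\feas)\leq\max_{\eaprobs'\le\eaprobs,\ \eaprobs'\in\ptope}\earev[\eaprobs'](\dist,\feas)$, so WLOG the \exante\ vector may be taken inside $\ptope$. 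Only after this replacement does one scale $\eaprobs'$ into $\tfrac12\ptope$ --- which raises the corresponding \exante\ prices and is precisely what turns the factor $2$ from Lemma~\ref{lem:approx-partition} into a $4$, accounting exactly for the jump from $33.1$ to $35.1$. Your proposal gestures at ``the toll paid for the additional generality'' without identifying where it comes from, and without the $\eaprobs'\in\ptope$ step the prophet-inequality invocation has no footing.

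Two smaller problems. First, you assert ``$\eabrev(\dist,\feas)\le\eatrev(\dist,\feas')$ for any submatroid $\feas'\subseteq\feas$''; this is backwards --- restricting the demand constraint can only lower two-part-tariff revenue, so the inequality does not hold for arbitrary $\feas'$. The paper instead uses the observation that a two-part tariff with item prices $\eaprices$ and entry fee $a$, on $\dists$, earns at least as much as a bundle price $a$ on $\dists-\eaprices$, and then passes to $\eatrev[\eaprobs'](\dist,\feas)\leq\eatrev(\dist,\feas)$ by monotonicity in the constraint vector. Second, your worry about ``flooring prices'' at $\eapricej$ is a non-issue: both Lemma~\ref{thm:partition-matroid} and Lemma~\ref{thm:ocrs} are stated directly in terms of $\easrev$, whose definition already requires $\pricej\ge\eapricej$, so no separate compliance argument is needed.
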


\section{The \ExAnte\ Relaxation and Stitching}
\label{sec:relaxation}

In this section we prove Lemmas~\ref{lem:relaxation} and
~\ref{lem:stitching-trevs}. 




\begin{proofof}{Lemma~\ref{lem:relaxation}}
 Let $\mech$ be the optimal mechanism for feasibility constraints
 $\feas$ and value distributions $\dists$ achieving revenue
 $\rev(\dists,\feas)$. We will now consider a buyer $i$ and construct a
 mechanism $\mech_i$ for this buyer as follows. When the buyer $i$
 reports a value vector $\vali$, the mechanism $\mech_i$ draws value
 vectors $\tvalsmi$ from the joint distribution $\distsmi$; It
 then returns the allocation and payment that $\mech$ returns at
 $(\vali, \tvalsmi)$. It is easy to see that if $\mech$ is
 BIC, then so is $\mech_i$. Furthermore, $\mech_i$ obtains the same
 revenue from buyer $i$ as $\mech$. Therefore, we have:
 \[ \rev(\dists,\feas) = \sum_i \revm[\mech_i](\disti).\]

 Now let $\alloci$ denote the allocation rule of $\mech_i$ and let
 $\eaprobij = \expect[\vali\sim\disti]{\allocij(\vali)}$. Then,
 recalling equation~\eqref{eq:constrained-revenue}, we have
 $\revm[\mech_i](\disti) \le \earev[\eaprobi](\disti,\feasi)$, and so,
 \[\rev(\dists,\feas)\le \sum_i \earev[\eaprobi](\disti,\feasi).\]

 Finally, the demand feasiblity of $\mech$ implies that the vector
 $\eaprobi$ lies in the polytope $\ptopei$, while the supply
 feasiblity of $\mech$ implies that $\sum_i \eaprobij\le 1$ for all
 $j$. This completes the proof.
\end{proofof}




\begin{proofof}{Lemma~\ref{lem:stitching-trevs}}
  For every buyer $i$, let $\efi$ and $(\priceij[1], \cdots,
  \priceij[m])$ denote the entry fee and item prices respectively in
  the mechanism $\mech_i$. We will compose the mechanisms $\mech_i$ to
  obtain a single mechanism $\mech$ as follows.

  The mechanism $\mech$ considers buyers in an arbitrary order and
  offers items for sale sequentially to the buyers in that order. When
  it is buyer $i$'s turn, some (random set of) items have already been
  sold to other buyers. The mechanism offers the remaining items to
  buyer $i$ via a two-part tariff: it charges the buyer an entry fee
  of $\half\efi$ for the right to buy any subset of the remaining
  items, with item $j$ priced at $\priceij$. Importantly, buyers must
  make the decision of whether or not to participate (that is, whether
  or not to pay the entry fee) before knowing which items are left
  unsold.

  By definition, the mechanism is BIC: buyers may choose whether or
  not to participate and which subset of items to purchase.

  Let us now consider a single buyer $i$. We first claim that when the
  mechanism $\mech$ considers buyer $i$, for every item $j$, the
  probability (taken over value vectors of other agents) that item $j$
  is available to be bought by $i$ is at least $1/2$. Recall that for every pair $i,j$,
  $\prob[\valij\sim\distij]{\valij>\priceij}=1-\distij(\priceij)\le 1-\distij(\eapriceij)=\eaprobij$. So the
  probability that some agent $i'$ buys item $j$ is at most
  $\eaprob_{i'j}$.  Therefore, the probability (over values of agents
  other than $i$) that item $j$ is allocated to an agent other than
  $i$ is at most $\sum_{i'\ne i} \eaprob_{i'j}\le 1/2$, and this
  proves the claim. 

  We will now use the above claim to argue that if after drawing his
  value vector the buyer chooses to participate (i.e. pay the entry
  fee) in mechanism $\mech_i$, then he chooses to participate in
  $\mech$. If agent $i$ participates in mechanism $\mech_i$, then for
  some set $S\in \feas_i$ his value vector satisfies $\sum_{j\in S}
  (\valij-\priceij)-\efi>0$. In the mechanism $\mech$, the agent
  derives from the same set $S$ an expected utility of 
  \[\left(\sum_{j\in S} \prob{j\text{ is
      available for } i}(\valij-\priceij) \right) -\half\efi,\]
  which by the above claim is at least $1/2(\sum_{j\in S}
  (\valij-\priceij)-\efi)>0$. Consequently, if $\mech_i$ obtains the
  entry fee $\efi$ from agent $i$, then $\mech$ obtains the entry fee
  $\efi/2$.

  Next we claim that if agent $i$ buys item $j$ in mechanism $\mech_i$
  and item $j$ is available for him in mechanism $\mech$, then the
  agent buys item $j$ in $\mech$. This follows directly from
  Lemma~\ref{lem:matroid-greedy} (Appendix~\ref{sec:matroids}) by
  noting that $\mech_i$ and $\mech$ offer the same item prices to the
  agent and that the agent is a utility maximizer. As argued
  previously, item $j$ is available with probability at least $1/2$,
  therefore, this claim implies that if $\mech_i$ obtains the price
  $\priceij$ from agent $i$, then $\mech$ obtains the same price
  $\priceij$ with probability $1/2$. Putting this together with the
  above observation about entry fee, we get that $\mech$ obtains in
  expectation at least half of the total revenue obtained by the
  mechanisms $\mech_i$.
\end{proofof}

The proof of the lemma relies upon three facts: (1) mechanism $\mech$
offers each item with probability at least half to each buyer, (2)
under these probabilities, the buyer's expected utility from a set $S$
is at least half his utility from obtaining $S$ with certainty, and,
(3) in the composed mechanism, the buyer selects those items in $S$
that are still available. Fact (2) holds more generally for a buyer
with any monotone submodular value function~\cite{FMV-11}. Fact (3) follows
directly from the definition of gross substitutes valuations,\footnote{A
  valuation $v$ satisfies the gross substitutes condition if for all price
  vectors $\prices$, $\prices'$ where $\prices \leq \prices'$, for all $S$ such
  that $v(S) - \prices \geq v(S') - \prices$ for all $S'$, there exists $T$ such
  that $v(T) - \prices' \geq v(T') - \prices'$ for all $T'$ and $\{j \in S :
  \pricej = \pricej'\} \subseteq T$.} a special case of submodular value
functions. So Lemma~\ref{lem:stitching-trevs} holds more generally for buyers
with gross substitutes valuations.

\section{Two-Part Tariffs for a Single Agent}
\label{sec:single-agent}

We now turn to bounding the revenue from a single agent subject to an \exante\
constraint. In this section we will prove Lemma~\ref{lem:approx-partition}.
In the following discussion, we assume that the buyer has a product
value distribution $\dist=\prod_j\distj$,
and faces a demand feasibility constraint $\feas$, while the mechanism is
subject to an \exante\ supply constraint $\eaprobs$. Recall that we define the
\exante\ prices $\eaprices$ as $\eapricej = \distj^{-1}(1-\eaprobj)$ for all
items $j$.

%
%
\subsubsection*{Core-Tail Decomposition with \ExAnte\ Constraints}

We begin by defining the notation for the core-tail decomposition (see
Table~\ref{tab:notation}). Let
$\tconst \geq 0$ be a constant to be defined later. We use $\threshj =
\eapricej+\tconst$ to denote the threshold for classifying values into
the core or the tail. Specifically, for any item $j$, if $\valj >
\threshj$, we say item $j$ is in the tail, otherwise it is in the
core. Let $\corej$ (resp., $\tailj$) denote the distribution for item
$j$'s value conditioned on the item being in the core (resp., tail).

For a set $A\subseteq [m]$ of items, let $\tprobA$ denote the
probability that the items in $A$ are in the tail and the remaining
items are in the core; that is, $\tprobA=\left(\prod_{j\in
    A}\prob[\valj\sim\distj]{\valj > \threshj}\right)
\left(\prod_{j\not\in A}\prob[\valj\sim\distj]{\valj \le
    \threshj}\right)$. Then $\tprobz$ denotes the probability that all
items are in the core. Observe that as we increase the constant
$\tconst$ (thereby increasing the core-tail thresholds uniformly), the
probability $\tprobz$ increases. We pick $\tconst$ to be the smallest
non-negative number such that $\tprobz\ge 1/2$. Observe that
$\tconst>0$ implies\footnote{For simplicity, we are assuming that the
  value distribution does not contain any point masses; it is easy to
  modify our argument to work in the absence of this assumption, but
  we omit the details.}  $\tprobz=1/2$.

We now state our version of the core-tail decomposition, extended to
respect \exante\ constraints. We defer the proof to
Section~\ref{sec:core-decomp}. Note that although the sum over tail
revenues does not explicitly enforce the \exante\ constraints, the
tail distributions are supported only on values above the \exante\
prices $\eaprices$.

\begin{lemma}[\bf Core Decomposition with \ExAnte\ Constraints]
\label{lem:core-decomposition}
For any product distribution $\dists$, feasibility constraint $\feas$,
and \exante\ constraint $\eaprobs$,
\[\earev(\dists,\feas) \leq \eaVal(\core,\feas) + \sum_{A\subseteq
        [m]}\tprobA\rev(\tail,\feas|_A)\]
\end{lemma}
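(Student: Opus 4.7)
The plan is to adapt the core--tail decomposition of \citet{LY-PNAS13} to respect the \exante\ constraint $\eaprobs$. Let $\mech = (\allocs, \prices)$ be the optimal $\eaprobs$-constrained BIC mechanism for $\dist$, so $\earev(\dist,\feas) = \expect[\vals \sim \dist]{\prices(\vals)}$. For each realization $\vals$ write $A(\vals) := \{j : \valj > \threshj\}$ for its tail set, and define $W^C(\vals) := \sum_{j \notin A(\vals)} \allocj(\vals)\,\valj$, the expected value to the buyer of the allocated core items. Trivially $\earev(\dist,\feas) = \expect{W^C} + \expect{\prices - W^C}$, so the lemma reduces to bounding $\expect{W^C}$ by $\eaVal(\core,\feas)$ and $\expect{\prices - W^C}$ by $\sum_A \tprobA\,\rev(\tail, \feas|_A)$.

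For the core bound, I would exhibit an allocation rule on $\core$ with welfare $\expect{W^C}$ satisfying the \exante\ constraint. Define $\allocs'(\vals^C)$ as follows: on input $\vals^C$, sample a set $A$ with probability $\tprobA$ and $\vals^T \sim \tail$ independently of $\vals^C$; form $\vals$ by taking $\val^T_j$ for $j \in A$ and $\val^C_j$ for $j \notin A$; run $\mech$; output the (random) set $\allocs(\vals) \cap \bar{A}$. Because the $\valj$'s are independent across items, the coupled joint law of $(\vals, A)$ is exactly that of $(\vals, A(\vals))$ under $\vals \sim \dist$. Hence the expected welfare of $\allocs'$ under $\vals^C \sim \core$ equals $\expect[\vals \sim \dist]{\sum_{j \notin A(\vals)} \allocj(\vals)\,\valj} = \expect{W^C}$, and the \exante\ allocation probability of item $j$ under $\allocs'$ is $\expect[\vals \sim \dist]{\mathbb{1}[j \notin A(\vals)]\,\allocj(\vals)} \leq \eaprobj$. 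Thus $\allocs'$ is a valid $\eaprobs$-constrained allocation rule under $\core$, giving $\expect{W^C} \leq \eaVal(\core,\feas)$.

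The main obstacle is the tail bound. Decomposing $\expect{\prices - W^C} = \sum_A \tprobA\,\expect{\prices - W^C \mid A(\vals) = A}$, it suffices, for each $A$, to construct a single-agent mechanism $\mech^A$ for a buyer with distribution $\tail$ and feasibility $\feas|_A$ whose expected revenue equals $\expect{\prices - W^C \mid A(\vals) = A}$. I would define $\mech^A$ as follows: on report $\vals^T$, draw fresh core values $\vals^C \sim \core_{\bar A}$, form $\vals = (\vals^T, \vals^C)$, run $\mech$, allocate the set $\allocs(\vals) \cap A$ to the buyer, and charge payment $\prices(\vals) - \sum_{j \in \bar{A} \cap \allocs(\vals)} \val^C_j$ (refunding the value of the ``phantom'' allocated core items). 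The refund is the key trick: the interim utility in $\mech^A$ of a buyer with true type $\vals^T$ reporting $\tilde\vals^T$ evaluates to $\expect[\vals^C]{\allocs(\tilde\vals^T,\vals^C)\cdot(\vals^T,\vals^C) - \prices(\tilde\vals^T,\vals^C)}$, i.e., the $\mech$-utility at true full-type $(\vals^T,\vals^C)$ under the report $(\tilde\vals^T,\vals^C)$. Single-agent BIC and IR of $\mech$ therefore transfer directly to $\mech^A$, and demand-feasibility of $\mech^A$ follows from the downward-closure of $\feas$. Consequently $\rev(\tail,\feas|_A)$ upper-bounds the revenue of $\mech^A$, which by the same product-distribution coupling equals $\expect{\prices - W^C \mid A(\vals) = A}$. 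Summing over $A$ and combining with the core bound completes the proof.
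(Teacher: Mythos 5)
Your argument is correct and is, in substance, the paper's own proof, flattened: the paper establishes the same facts via two intermediate lemmas (Lemmas~\ref{lem:subdomain-stitching} and~\ref{lem:marginal-mechanism}), first splitting $\earev$ over the $2^m$ conditional subdomains $\dists_A$ and then, for each fixed $A$, bounding the conditional revenue by core welfare plus tail revenue through the very same refund device you use, before recombining. Your product-structure coupling identifying the sampled $A$ with the realized tail set $A(\vals)$, and the refunded payment $\prices(\vals)-\vals_S(\allocs_S(\vals))$ that makes the projection of $\mech$ onto the tail items a standalone truthful mechanism, are exactly the paper's two key constructions. The one step I would tighten is the core bound: you exhibit an allocation rule $\allocs'$ and conclude $\expect{W^C}\le\eaVal(\core,\feas)$, but $\eaVal$ is defined as a supremum over \emph{BIC} mechanisms, so you should either equip $\allocs'$ with the symmetric refund payment $\prices(\vals)-\vals^T(\allocs(\vals)\cap A)$ (which makes it truthful by the identical argument you already give for $\mech^A$), or explicitly invoke the standard fact that $\eaprobs$-constrained welfare maximization over BIC mechanisms coincides with welfare maximization over all demand-feasible $\eaprobs$-constrained allocation rules.
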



\begin{table}[t]
  \renewcommand{\arraystretch}{1.5}
  \caption{Notation for Section~\ref{sec:single-agent}.}
  \begin{tabular}{r l l}
    \hline
    Notation & Definition & Formula \\
    \hline
    $\eaprobs$       & \Exante\ probabilities & \\
    $\eaprices$      & \Exante\ prices &
        $\eapricej = \distj^{-1}(1-\eaprobj)\,\,\forall j\in [m]$ \\
    $\threshj$       & Core-tail threshold for item $j$ &
        $\eapricej+\tconst$ \\
    $\tconst$        & Difference between $\threshj$ and $\eapricej$; same for
        all items & $\min\{t \mid  \prob[\vals\sim\dists]{\valj \le
        t+\eapricej \forall j} \ge 1/2 \} $ \\
    $\corej$         & Core distribution for item $j$ &
        $\distj|_{{\valj \leq \threshj}}$ \\
    $\tailj$         & Tail distribution for item $j$ &
        $\distj|_{{\valj > \threshj}}$ \\
    $\coreA$         & Core distribution for items not in $A$ &
        $\prod_{j\not\in A}\corej$ \\
    $\tail$          & Tail distribution for items in $A$ &
        $\prod_{j\in A}\tailj$ \\
    $\tprobj$        & Probability item $j$ is in the tail &
        $\prob[\valj\sim\distj]{\valj > \threshj}$ \\
    $\tprobA$        & Probability exactly items in $A$ are in the tail &
        $\left(\prod_{j\in A}\tprobj\right)
        \left(\prod_{j\not\in A}(1-\tprobj)\right)$ \\
    $\dists-\prices$ & Distribution $\dists$ shifted to the left by $\prices$ &
        \\
    \hline
  \end{tabular}
  \label{tab:notation}
\end{table}

  \subsection{Bounding the Tail}

We first show that the tail revenue can be bounded by selling items
separately under the given \exante\ supply constraint $\eaprobs$. The main
result of this section is as~follows.

\begin{lemma}
\label{lem:tail-bound}
For any product distribution $\dists$ over $m$ independent items and
any $\feas$,
\[\sum_{A\subseteq [m]}\tprobA\rev(\tail,\feas|_A) \leq 8(1+\ln 2) \easrev(\dists,\feas)\]
\end{lemma}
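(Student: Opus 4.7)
My plan is to dominate each multi-item tail revenue $\rev(\tail,\feas|_A)$ by a sum of single-item Myerson revenues, sum over $A$ with weights $\tprobA$, and then recover the resulting sum via an item pricing that respects the \exante\ constraint. The structure mirrors classical tail analyses in the core-tail decomposition literature; the new ingredient here is the need to use only prices $\pricej \ge \eapricej$ permitted by $\easrev$.

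First, I would invoke a Hart--Nisan-style marginal-mechanism bound: since $\tail = \prod_{j\in A}\tailj$ is a product of independent distributions and constrained-additive values are dominated by additive ones, one can show
\[ \rev(\tail,\feas|_A) \le \sum_{j\in A}\rev(\tailj) \]
up to a small constant, where $\rev(\tailj)$ denotes the single-item Myerson revenue on $\tailj$. Swapping the order of summation via the identity $\sum_{A \ni j}\tprobA = \tprobj$ yields $\sum_j \tprobj\rev(\tailj)$. A direct calculation using $\tailj = \distj\mid \valj > \threshj$ then shows
\[ \tprobj\rev(\tailj) = \sup_{p \ge \threshj} p\cdot\prob[\distj]{\valj \ge p}, \]
i.e., the Myerson revenue of $\distj$ restricted to reserves at least $\threshj = \eapricej + \tconst$. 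Since $\threshj \ge \eapricej$, letting $p_j^*$ denote a maximizer, the item pricing $\{p_j^*\}_j$ is feasible for $\easrev$.

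The key step is then an ``only $j$ wins'' recovery in the style of \citet{CHMS-STOC10}. Apply the pricing $\{p_j^*\}_j$ to the original buyer with values $\vals\sim\dists$ and feasibility $\feas$. Since $p_k^* \ge \eapricej[k]+\tconst$ for every $k$, we have $\prob{\valj[k] \le p_k^*} \ge 1-\tprobj[k]$, and so by independence
\[ \prob{\valj > p_j^*,\ \valj[k] \le p_k^*\ \forall\, k\ne j} \ge \prob{\valj > p_j^*}\cdot \prod_{k\ne j}(1-\tprobj[k]) \ge \tprobz\cdot \prob{\valj > p_j^*} \ge \tfrac12\prob{\valj > p_j^*}. \]
On this event, the singleton $\{j\}$ is the buyer's favorite feasible set (by downward closure of $\feas$), so item $j$ contributes revenue $p_j^*$. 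Summing over $j$ gives
\[ \easrev(\dists,\feas) \ge \tfrac12 \sum_j p_j^*\prob{\valj > p_j^*} = \tfrac12 \sum_j \tprobj\rev(\tailj). \]

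Chaining these bounds yields $\sum_A \tprobA \rev(\tail,\feas|_A) = O(1)\cdot\easrev(\dists,\feas)$. The explicit constant $8(1+\ln 2)$ arises from a careful accounting of the marginal-mechanism step, which exploits the inequality $\sum_j\tprobj \le \ln 2$ (a consequence of $\prod_j(1-\tprobj) \ge \tfrac12$) to control the constrained-additive version of Hart--Nisan with a quantified constant. The main obstacle is precisely this first step: one must either dominate $\rev(\tail,\feas|_A)$ by its additive counterpart through a value-shift argument, or perform a direct virtual-welfare computation tailored to the tail distributions, in a way that interacts cleanly with the \exante\ price floor used in the recovery step.
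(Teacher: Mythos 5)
Your recovery step (the ``only $j$ wins'' argument that loses a factor $1/\tprobz \le 2$) matches the paper's, and the reindexing $\sum_{A}\tprobA\sum_{j\in A}(\cdot) = \sum_j\tprobj(\cdot)$ is also what the paper does. The gap is in the first step, and it is not a small rigor issue --- the claimed inequality is false.

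Specifically, $\rev(\tail,\feas|_A) \le C\sum_{j\in A}\rev(\tailj)$ does \emph{not} hold for a universal constant $C$, even in the unconstrained additive case. Take two i.i.d.\ items each uniform on $\{1,2\}$: each has single-item Myerson revenue $1$, so $\sum_j \rev(\distj)=2$, but offering the grand bundle at price $3$ yields $9/4 > 2$; with more items the gap from bundling grows without bound. There is no ``Hart--Nisan-style'' lemma that dominates multi-item optimal revenue by a dimension-free multiple of the sum of single-item revenues. What does hold, and what the paper uses (Claim~\ref{lem:RevSRevBound}), is $\rev(\tail,\feas|_A) \le 4|A|\,\srev(\tail,\udfeas) \le 4|A|\sum_{j\in A}\rev(\tailj)$ --- the factor is proportional to $|A|$, not constant. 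Because you dropped the $|A|$, your chain would actually produce a final constant of $2$, which is stronger than the stated $8(1+\ln 2)$ and should have been a red flag.

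Once the $|A|$ factor is restored, the reindexing gives $4\sum_j\rev(\tailj)\sum_{A\ni j}\tprobA|A| = 4\sum_j\tprobj\rev(\tailj)\,\expect{|A|\,\big|\,j\in A}$, and you must bound the conditional expectation. This is exactly where the inequality you mention in passing, $\sum_k\tprobj[k]\le\ln 2$ (since $\prod_k(1-\tprobj[k])=\tprobz\ge 1/2$), earns its keep: $\expect{|A|\,|\,j\in A} = 1+\sum_{k\ne j}\tprobj[k] \le 1+\ln 2$. Combined with $\tprobj\rev(\tailj)\le\earev[\tprobj](\distj)$ and your recovery factor $1/\tprobz\le 2$, this reproduces the target constant $8(1+\ln 2)$. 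So the ingredients you gesture at are the right ones, but the proof as written omits the per-$A$ factor of $|A|$ that makes them necessary; as stated, the first inequality is simply wrong.
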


\begin{proof} 
We make use of the following weak but general relationship between the optimal
revenue and the revenue generated by selling separately for a single-agent
constrained additive value setting; this follows by noting that $\rev$ and
$\srev$ are within a factor of $4$ of each other for unit demand agents (see
Appendix~\ref{sec:single-agent-proofs} for a proof). 
\begin{claim}
\label{lem:RevSRevBound}
For any product distribution $\dists$ over $m$ items and any $\feas$,
\[\rev(\dists, \feas) \leq 4m\srev(\dists,\udfeas).\]
\end{claim}
Applying this claim to the revenues $\rev(\tail,\feas|_A)$, we get
that
\[
\sum_{A}\tprobA\rev(\tail,\feas|_A) \leq 4 \sum_A\tprobA|A|\srev(\tail,\udfeas).
\]
We will now use the fact that the tail contains few items in expectation. Let
$\tprobj$ denote the probability that item $j$ is in the tail: $\tprobj =
\prob[\valj\sim\distj]{\valj > \threshj}$. We can write the following series of
inequalities.
\begin{align}
\label{eq:1} \sum_{A}\tprobA|A|\srev(\tail,\udfeas) &\leq 
        \sum_{A}\tprobA|A|\sum_{j\in A}\rev(\tailj) \\
 &\notag = \sum_{j\in[m]}\rev(\tailj)\sum_{A\ni j}\tprobA|A| \\
 &\notag = \sum_{j\in[m]}\tprobj\rev(\tailj)\expect{\lvert A\rvert \, | j \in A} \\
 & \label{eq:4} \leq (1+\ln 2)\sum_{j\in[m]}\earev[\tprobj](\distj) \\
 & \label{eq:5} \leq \frac{1}{\tprobz}(1+\ln 2)\easrev[\tprobs](\dists,\feas)
\end{align}

Here inequality~\eqref{eq:1} follows by removing the demand constraint
$\udfeas$.
Inequality~\eqref{eq:4} follows from three observations: (1) the tail is
non-empty with probability at most $1/2$; (2) if $\{z_i\}_{i\in[n]}$ are
probabilities satisfying $\prod_i(1-z_i)\ge 1/2$, then $\sum_iz_i \leq \ln 2$;
(3) a single-agent single-item mechanism for value distribution $\tailj$ that
achieves revenue $\rev(\tailj)$ would achieve $\tprobj$ times that revenue on
the value distribution $\distj$ while satisfying an \exante\ supply constraint
of $\tprobj$.
Inequality \eqref{eq:5} follows from the standard argument that the revenue
obtained by selling each item individually at prices $\threshj$ (or higher) is
at least $\tprobz$ times the sum of the corresponding per-item revenues.
Finally, the result follows by recalling that $\tprobz\ge 1/2$ and relaxing the
\exante\ constraint.
\end{proof}

  \subsection{Bounding the Core}

Recall that an item $j$ is in the core if its value $\valj$ is no more than the
threshold $\threshj = \eapricej + \tconst$. We will bound the \exante\
constrained social welfare of the core, $\eaVal(\core,\feas)$, in two parts: the
welfare obtained from values below $\eaprices$ via a prophet inequality and the
welfare between $\eaprices$ and $\eaprices + \tconst$ using a concentration
bound introduced by \citet{rw-15}.

Recall that $\corej$ denotes the value distribution for item $j$
conditioned on being in the core. We use $\corej-\eapricej$ to denote
the distribution of $\valj-\beta$ conditioned on $\valj$ being in the
core; in other words, $\corej-\eapricej$ is the distribution $\corej$
shifted to the left by $\eapricej$. $\core-\eaprices$ is defined to be
the product of the distributions $\corej-\eapricej$. Observe that value
vectors drawn from $\core-\eaprices$ are bounded by $\tconst$ in every
coordinate.  The following lemma breaks $\eaVal(\core,\feas)$ up into
the two components, each of which can be bounded separately.


\begin{lemma}
\label{lem:core-bound}
For any product disribution $\dists$ and downwards closed feasibility
constraint $\feas$,
 $\eaVal(\core,\feas) \leq \eaprices\cdot\eaprobs +
 \Val(\core-\eaprices,\feas)$.
\end{lemma}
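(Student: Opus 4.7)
My plan is to decompose the welfare additively using the identity $\vals = \eaprices + (\vals - \eaprices)$. Let $\mech^* = (\allocs^*, \prices^*)$ be an optimal BIC demand-feasible mechanism that is $\eaprobs$-constrained under $\core$ and achieves $\eaVal(\core, \feas)$. For any realized value vector $\vals \sim \core$, linearity gives
\[
\allocs^*(\vals) \cdot \vals \;=\; \allocs^*(\vals) \cdot \eaprices \;+\; \allocs^*(\vals) \cdot (\vals - \eaprices),
\]
so taking expectations splits $\eaVal(\core, \feas)$ into two pieces that I will bound separately.

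For the first piece, I would use the \exante\ supply constraint directly:
\[
\expect[\vals \sim \core]{\allocs^*(\vals) \cdot \eaprices} \;=\; \sum_j \eapricej \cdot \expect[\vals \sim \core]{\alloc^*_j(\vals)} \;\le\; \sum_j \eapricej \eaprobj \;=\; \eaprices \cdot \eaprobs,
\]
since $\mech^*$ is $\eaprobs$-constrained under $\core$ means exactly $\expect[\vals \sim \core]{\alloc^*_j(\vals)} \le \eaprobj$ for every $j$.

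For the second piece, I would reinterpret it as the welfare of a ``translated'' allocation rule under the shifted distribution. Define $\tilde{\allocs}(\vals') := \allocs^*(\vals' + \eaprices)$; since $\allocs^*(\cdot) \in \ptope$ always, $\tilde{\allocs}$ is demand-feasible with respect to $\feas$. Changing variables $\vals' = \vals - \eaprices$,
\[
\expect[\vals \sim \core]{\allocs^*(\vals) \cdot (\vals - \eaprices)} \;=\; \expect[\vals' \sim \core - \eaprices]{\tilde{\allocs}(\vals') \cdot \vals'} \;\le\; \Val(\core - \eaprices, \feas),
\]
where the last inequality holds because $\Val(\core-\eaprices, \feas)$ is the supremum of welfare over all demand-feasible allocation rules on this distribution.

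Adding the two bounds yields the lemma. The only mildly subtle point is that components of $\vals - \eaprices$ may be negative (on items $j$ with $\val_j < \eapricej$), so $\tilde{\allocs}$ may waste allocation on negative-value coordinates; but this only makes $\tilde{\allocs}$'s welfare smaller than the optimum, so the inequality against $\Val(\core - \eaprices, \feas)$ is unaffected. There is no real obstacle here — the work is entirely bookkeeping in the linear decomposition and invoking the $\eaprobs$-feasibility of $\mech^*$.
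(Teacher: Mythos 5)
Your proof is correct and matches the paper's argument: both split the welfare via $\vals = \eaprices + (\vals - \eaprices)$, bound the first piece by the $\eaprobs$-constraint, and bound the second piece by the (possibly negative-valued) welfare of the shifted distribution. The paper writes this in per-item integral form while you write it in vector form, but the decomposition and the two bounding steps are identical.
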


\begin{proof}
Let $\allocs(\vals)$ be the interim allocation rule of a $\eaprobs$-constrained
BIC mechanism which attains social welfare equal to $\eaVal(\core, \feas)$. Then
\begin{align*}
\eaVal(\core,\feas) &= \sum_j\int_0^{\threshj}\fj(y)\allocj(y)y\dy \\
 &\leq \sum_j\int_0^{\threshj}\fj(y)\allocj(y)\eapricej \dy + \sum_j\int_0^{\threshj}\fj(y)\allocj(y)(y-\eapricej)\dy \\
&\leq \eaprices\cdot\eaprobs + \Val(\core-\eaprices, \feas).
\end{align*}
\end{proof}

We can recover $\Val(\core-\eaprices,\feas)$ using a two-part
tariff for the original distribution $\dists$ by employing the following
concentration result proved by \citet{rw-15}, based on a result of
\citet{schechtman-99}.


\begin{lemma}[\citet{rw-15}]
\label{lem:schechtman}
  Let $\vals$ be a constrained additive value function with a
  downwards closed feasibility constraint, drawn from a distribution over
  support $(-\infty, \tconst]$ for some $\tconst\ge 0$. Let $a$ be the median of
  the value of the grand bundle, $\vals([m])$. Then, $\expect{\vals([m])} \leq
  3a + 4\tconst/\ln 2$.
\end{lemma}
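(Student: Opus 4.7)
The plan is to derive the bound from a subadditive concentration inequality of \citet{schechtman-99}, following the template used in \citet{rw-15}. The key structural observation is that the grand bundle value $V := \vals([m]) = \max_{S \in \feas} \sum_{j \in S} \valj$ is a non-negative, monotone, subadditive function of the independent coordinates $\valj$: non-negativity follows from $\emptyset \in \feas$; monotonicity because enlarging a $\valj$ only helps the maximizer; and subadditivity because for any partition $[m] = I \cup J$, an optimal feasible set on $[m]$ decomposes into two pieces that are each feasible by downward-closedness of $\feas$. Moreover, since including an item with a negative value is never beneficial, I may truncate and assume $\valj \in [0, \tconst]$, in which case changing any single coordinate perturbs $V$ by at most $\tconst$.

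Given this structure, Schechtman's theorem yields a geometric tail bound around the median $a$ of $V$: for every integer $k \ge 0$,
\[ \prob{V > (k+1)a + 2k\tconst/\ln 2} \le 2^{-k}. \]
From here the lemma follows by integrating the tail. Setting $t_k = (k+1)a + 2k\tconst/\ln 2$ so that $t_{k+1} - t_k = a + 2\tconst/\ln 2$, I split the integral at the $t_k$'s and use monotonicity of the survival function:
\[ \expect{V} = \int_0^\infty \prob{V > t}\,dt \le a + \sum_{k \ge 0} (t_{k+1}-t_k)\,\prob{V > t_k} \le a + (a + 2\tconst/\ln 2)\sum_{k \ge 0} 2^{-k} = 3a + 4\tconst/\ln 2. \]
The constant $2/\ln 2$ in the tail bound is precisely what makes the geometric sum telescope into the stated bound.

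The main technical hurdle is the tail bound itself. The classical proof proceeds by induction on $k$: one couples two independent copies of $(\valj)_{j \in [m]}$ and uses subadditivity of $V$ together with the coordinate-Lipschitz bound $\tconst$ to decompose an exceedance of $V$ at threshold $t_{k+1}$ into the intersection of two independent deviation events at threshold $t_k$, each of probability at most $2^{-k}$ by the inductive hypothesis. The base case $\prob{V > a} \le 1/2$ is just the definition of the median, and the correction $2\tconst/\ln 2$ is the price paid when splitting the sample in half and accounting for a coordinate change of size up to $\tconst$. Since \citet{rw-15} have already carried out this argument in essentially the setting at hand, my proof would simply invoke their statement of the tail bound as a black box and reduce the remaining work to the short integration above.
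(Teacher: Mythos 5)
The paper does not prove this lemma---it is cited directly from \citet{rw-15}, who in turn derive it from \citet{schechtman-99}---so there is no in-paper argument to compare against. Your blind reconstruction is correct and follows exactly the route taken in the cited work: you correctly observe that $V = \vals([m]) = \max_{S \in \feas}\sum_{j \in S}\valj$ is non-negative (since $\emptyset \in \feas$), monotone, and subadditive in Schechtman's coordinate-wise sense (using downward-closedness of $\feas$, and this holds even without the truncation since $\vals(I) \ge \sum_{j \in S \cap I}\valj$ for any $S \in \feas$ regardless of signs), that the single-item contributions $\vals(\{j\}) = \max(\valj,0)$ are almost surely bounded by $\tconst$, and you then invoke the resulting geometric tail bound $\prob{V > (k+1)a + 2k\tconst/\ln 2} \le 2^{-k}$ as a black box. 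The integration step is clean and the arithmetic---$a + (a + 2\tconst/\ln 2)\sum_{k\ge 0}2^{-k} = 3a + 4\tconst/\ln 2$---reproduces the stated constant exactly. Treating the tail bound as a citation is appropriate here, given that the lemma itself is attributed to \citet{rw-15}.
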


\begin{lemma}
\label{lem:lipschitz-dist-bound}
\begin{align*}
  \Val(\core-\eaprices,\feas) & \leq 6\,\brev(\dists-\eaprices,\feas) +
      \frac{8}{\ln 2} \,\easrev(\dists,\feas) 
\end{align*}
\end{lemma}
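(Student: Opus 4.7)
The plan is to apply Lemma~\ref{lem:schechtman} to the shifted core distribution $\core-\eaprices$ and then bound the two resulting terms separately. Each coordinate $u_j := v_j - \eapricej$ with $v\sim\core$ lies in $(-\infty,\tconst]$ because $v_j\le \threshj = \eapricej+\tconst$ in the core. Observing further that, without an \exante\ constraint, $\Val(\core-\eaprices,\feas)$ equals the unconstrained optimal welfare $\expect[u\sim\core-\eaprices]{u([m])}$ (achieved by awarding each buyer his favorite feasible set), Lemma~\ref{lem:schechtman} gives
\[ \Val(\core-\eaprices,\feas) \le 3a + \frac{4\tconst}{\ln 2}, \]
where $a$ is the median of $u([m])$ under $\core-\eaprices$.

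To control $3a$ by $6\,\brev(\dists-\eaprices,\feas)$, the key step is a stochastic-dominance argument. For each item $j$, $u_j$ under $\dists-\eaprices$ stochastically dominates $u_j$ under $\core-\eaprices$: a direct CDF comparison shows that conditioning $v_j$ on $v_j\le\threshj$ only moves probability mass leftward. Because $u([m]) = \max_{S\in\feas}\sum_{j\in S}u_j$ is coordinatewise nondecreasing (the buyer may always drop items with negative adjusted value) and coordinates are independent under both distributions, this dominance lifts to the scalar $u([m])$. Hence $\prob[u\sim\dists-\eaprices]{u([m])\ge a} \ge 1/2$, so offering the grand bundle at price $a$ to a buyer with values drawn from $\dists-\eaprices$ extracts revenue at least $a/2$, giving $\brev(\dists-\eaprices,\feas) \ge a/2$. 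I expect this lifting step to be the main subtlety: naively transporting the median by merely conditioning on the core event (which occurs with probability at least $1/2$) would lose an additional factor of two.

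For the remaining term $4\tconst/\ln 2$, the case $\tconst=0$ is trivial, so assume $\tconst>0$; by minimality in the definition of $\tconst$, $\tprobz=1/2$, so some item lies in the tail with probability $1/2$. Consider the item pricing $p_j := \threshj = \eapricej+\tconst$ for every $j$; this satisfies the \exante\ constraint since $p_j \ge \eapricej$. Whenever some $v_j \ge p_j$, downward closure of $\feas$ permits the buyer to include $\{j\}$ in his purchased set, so he buys a nonempty feasible set and pays at least $\min_k p_k \ge \tconst$. Therefore $\easrev(\dists,\feas) \ge \tconst\cdot(1-\tprobz) = \tconst/2$, giving $4\tconst/\ln 2 \le (8/\ln 2)\,\easrev(\dists,\feas)$. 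Summing the two bounds completes the proof.
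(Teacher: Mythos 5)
Your proof is correct and takes essentially the same route as the paper: apply Lemma~\ref{lem:schechtman} to $\core-\eaprices$, extract $a/2$ via a bundle price at the median, and extract $\tconst/2$ via item prices at $\threshs$. The only cosmetic difference is that you prove $\brev(\dists-\eaprices,\feas)\ge a/2$ by explicitly lifting coordinatewise stochastic dominance to the scalar $u([m])$, whereas the paper just asserts $\brev(\dists-\eaprices,\feas)\ge\brev(\core-\eaprices,\feas)$, which rests on the same dominance fact.
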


\begin{proof}
We apply Lemma~\ref{lem:schechtman} to the distribution $\core-\eaprices$ to
obtain $\Val(\core-\eaprices,\feas) \le 3a+4\tconst/\ln 2$ where $a$ is the
median of the value of the grand bundle under the distribution
$\core-\eaprices$, and $\tconst$ is the constant defined earlier. 




Consider offering the grand bundle at price $a$ to a buyer with value
drawn from $\core-\eaprices$; the buyer accepts with probability
$1/2$. Therefore $\brev(\dists-\eaprices)\ge\brev(\core-\eaprices)\ge
a/2$. Next, suppose that $\tconst>0$. Consider selling the items
separately at prices $\threshj$ for all $j$. Recall that $\tconst>0$
implies that $\prob[\vals\sim\dists]{\exists j \text{ s.t. } \valj >
  \threshj} = 1-\tprobz= 1/2$. So the agent buys at least one item
with probability $1/2$. Noting that $\threshj>\tconst$ for all $j$,
this item pricing obtains a revenue of at least $\tconst/2$. Since
also $\threshj \geq \eapricej$ for all $j$, we have $\tconst \leq
2\easrev(\dists,\feas)$.
\end{proof}

\subsection{Putting the Pieces Together}

Combining Lemmas~\ref{lem:core-decomposition}, \ref{lem:tail-bound},
\ref{lem:core-bound}, and \ref{lem:lipschitz-dist-bound} together, we
obtain the main result of this section:

\begin{lemma}
\label{lem:single-agent}
For any product value distribution $\dists$, downward closed feasibility
constraint $\feas$ and \exante\ constraints $\eaprobs$, 
\[\earev(\dists,\feas) \leq 6\,\brev(\dists-\eaprices,\feas) +
    8(1+\ln 2 + 1/\ln 2)\,\easrev(\dists,\feas) + \eaprices\cdot\eaprobs.\]
\end{lemma}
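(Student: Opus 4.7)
The proof plan is essentially a direct chaining of the four lemmas already established in this section, so I expect no real obstacle beyond bookkeeping the constants. The intended structure is as follows.

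First, I would start by invoking the core-tail decomposition of Lemma~\ref{lem:core-decomposition} to split $\earev(\dists,\feas)$ into a core welfare term $\eaVal(\core,\feas)$ and a sum of tail revenue terms $\sum_{A}\tprobA\rev(\tail,\feas|_A)$. The tail contribution is then bounded immediately by Lemma~\ref{lem:tail-bound}, producing the coefficient $8(1+\ln 2)$ on $\easrev(\dists,\feas)$.

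Next, I would handle the core welfare $\eaVal(\core,\feas)$. By Lemma~\ref{lem:core-bound}, this splits into the ``baseline'' contribution $\eaprices\cdot\eaprobs$ (which appears unchanged in the final bound) plus the reduced welfare $\Val(\core-\eaprices,\feas)$ on the shifted core distribution. The latter is controlled by Lemma~\ref{lem:lipschitz-dist-bound}, which yields $6\,\brev(\dists-\eaprices,\feas)$ plus an additional $(8/\ln 2)\,\easrev(\dists,\feas)$ term coming from the Schechtman-style concentration argument applied to a distribution supported on $(-\infty,\tconst]$.

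Finally, I would add the two $\easrev(\dists,\feas)$ coefficients, $8(1+\ln 2)$ and $8/\ln 2$, to obtain the combined coefficient $8(1+\ln 2 + 1/\ln 2)$ stated in the lemma, and collect the remaining terms $6\,\brev(\dists-\eaprices,\feas)$ and $\eaprices\cdot\eaprobs$ unchanged. Since every ingredient has already been proved in the preceding subsections, the argument is just a one-line chain of inequalities and there is no genuinely hard step; the only thing to double-check is that the thresholds $\threshj=\eapricej+\tconst$ and the choice of $\tconst$ with $\tprobz\ge 1/2$ used in Lemmas~\ref{lem:tail-bound}, \ref{lem:core-bound}, and \ref{lem:lipschitz-dist-bound} are consistent, which they are by construction in Table~\ref{tab:notation}.
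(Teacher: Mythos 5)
Your proposal is exactly the paper's argument: chain Lemmas~\ref{lem:core-decomposition}, \ref{lem:tail-bound}, \ref{lem:core-bound}, and \ref{lem:lipschitz-dist-bound}, and sum the two $\easrev$ coefficients $8(1+\ln 2)$ and $8/\ln 2$ into $8(1+\ln 2 + 1/\ln 2)$. The arithmetic and the consistency of the threshold/$\tconst$ conventions across the cited lemmas are as you describe, so there is nothing to add.
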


It remains to bound the $\eaprices\cdot\eaprobs$ term. Note that this term is
the revenue that would be obtained in the absence of any demand constraint
(equivalently, in the additive setting) by setting the \exante\ prices on the
items. When $\feas$ is a partition matroid and if the \exante\
constraint $\eaprobs$ lies in the shrunk polytope $\frac 12\ptope$,
\citet{CHMS-STOC10} show via a prophet inequality that the term
$\eaprices\cdot\eaprobs$ is bounded by the revenue of an item pricing.

\begin{lemma}[\citet{CHMS-STOC10}]
\label{thm:partition-matroid}
For a partition matroid $\feas$, \exante\ constraints
$\eaprobs\in\frac 12\ptope$, and corresponding \exante\ prices
$\eaprices$,
 \[\eaprices\cdot\eaprobs \leq 2\,\easrev(\dists,\feas).\]
\end{lemma}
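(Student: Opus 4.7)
The plan is to follow the classical prophet-inequality argument of CHMS, specialized to a partition matroid. I would prove the result constructively by exhibiting a concrete item pricing that (i) respects the \exante\ constraint and (ii) extracts in expectation at least half of the target quantity $\eaprices\cdot\eaprobs$.

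First, write the partition matroid $\feas$ in terms of its blocks $B_1,\ldots,B_K$ with capacities $k_1,\ldots,k_K$, so that the polytope $\ptope$ is $\{\allocs\in[0,1]^m : \sum_{j\in B_\ell}\allocj\le k_\ell \text{ for all } \ell\}$. The hypothesis $\eaprobs\in\frac12\ptope$ therefore translates into the per-block bound $\sum_{j\in B_\ell}\eaprobj\le k_\ell/2$ for every $\ell$. Next, define the candidate item pricing by setting $\pricej=\eapricej$ for all $j$; this trivially satisfies the requirement $\pricej\ge\eapricej$ needed for $\easrev$, and so the revenue it generates is a lower bound on $\easrev(\dists,\feas)$.

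Second, analyze the expected revenue of this pricing. Because the buyer is a utility maximizer facing a partition matroid with fixed per-item prices, his optimal response is to select, within each block $B_\ell$, up to $k_\ell$ of the items with strictly positive surplus $\valj-\eapricej>0$ (this is the matroid greedy argument). Fix a block $B_\ell$ and an item $j\in B_\ell$. By construction, $\prob{\valj>\eapricej}=\eaprobj$. Conditioned on $\valj>\eapricej$, item $j$ is certainly bought whenever strictly fewer than $k_\ell$ other items in $B_\ell\setminus\{j\}$ have value above their threshold, because in that case there is room for $j$ in the block regardless of how the surplus ordering plays out.

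Third, apply Markov's inequality to the independent Bernoulli random variables $\mathbf{1}[\val_{j'}>\eapricej[j']]$ for $j'\in B_\ell\setminus\{j\}$: their expected sum is at most $\sum_{j'\in B_\ell}\eaprob_{j'}\le k_\ell/2$, so the probability it reaches $k_\ell$ is at most $1/2$. Independence of $\valj$ from the other values in $B_\ell$ then gives $\prob{j\text{ is sold}}\ge \eaprobj\cdot\frac12$, contributing at least $\frac12\eapricej\eaprobj$ to the expected revenue. Summing over all $j$ across all blocks yields $\easrev(\dists,\feas)\ge\frac12\eaprices\cdot\eaprobs$, which is the claim.

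The only delicate point, and the step most worth double-checking, is the reduction of the buyer's best response to the ``greedy acceptance'' behavior used above: one must argue that $\valj>\eapricej$ together with availability of a free slot in $B_\ell$ implies $j$ is purchased, even though some higher-surplus item in $B_\ell$ might displace $j$ in the greedy order. This is fine because displacement only happens when the block is full, i.e., when at least $k_\ell$ items in $B_\ell$ (including $j$) have positive surplus, which is exactly the event we ruled out via Markov. Everything else is bookkeeping, and no additional machinery beyond independence across items is needed.
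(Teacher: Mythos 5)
The paper states this lemma as an imported result of \citet{CHMS-STOC10} and gives no proof of its own, so there is nothing internal to compare against. Your argument is a correct reconstruction of the standard CHMS prophet-inequality argument for partition matroids: price every item at its ex-ante threshold $\eapricej$, use the greedy characterization of the buyer's best response within each block, and apply Markov's inequality to the independent Bernoulli count $\sum_{j'\in B_\ell\setminus\{j\}}\mathbf{1}[\val_{j'}>\eaprice_{j'}]$, whose mean is at most $k_\ell/2$ by the hypothesis $\eaprobs\in\frac12\ptope$, to conclude each item is sold with probability at least $\eaprobj/2$.
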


No prophet inequality based on static thresholds is known for general
matroids. However, \citet{fsz-15} nonetheless show that, if $\eaprobs
\in b\ptope$, selling at the \exante\ prices recovers a $(1-b)$
fraction of the relaxed revenue under a stronger demand constraint.
This leads to the following result.

\begin{lemma}[\citet{fsz-15}]
\label{thm:ocrs}
  For a general matroid $\feas$, constant $b \in (0,1)$, \exante\ constraints
  $\eaprobs \in b\ptope$, and corresponding \exante\ prices $\eaprices$, there
  exists a submatroid $\feas' \subseteq \feas$ such that
  \[\eaprices\cdot\eaprobs \leq \frac{1}{1-b}\easrev(\dists,\feas').\]
  Furthermore, the constraint $\feas'$ is efficiently computable.
\end{lemma}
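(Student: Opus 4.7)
The plan is to use the matroid online contention resolution scheme (OCRS) machinery of \citet{fsz-15} to convert the ``prophet-style'' benchmark $\eaprices\cdot\eaprobs$ into an honest item-pricing revenue on a restricted matroid. Throughout, $\eaprobs\in b\ptope$ is the key structural hypothesis that enables the $(1-b)$-selectability guarantee.

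The construction begins with a demand-limited item pricing at prices $\pricej = \eapricej$ on a submatroid $\feas'\subseteq\feas$ to be chosen shortly. Under these prices, the events $\{\valj > \eapricej\}$ are mutually independent across items with probabilities exactly $\eaprobj$; call the resulting random subset $R\subseteq[m]$ the in-demand items. By the assumption $\eaprobs\in b\ptope$, the marginals of $R$ lie in $b\ptope$ as well, which is precisely the regime in which matroid OCRSs enjoy constant selectability.

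Next I would invoke the main result of \citet{fsz-15}: for any $\eaprobs\in b\ptope$ on a matroid $\feas$, there exists an efficiently computable, monotone, greedy $(1-b)$-selectable matroid OCRS. Such a scheme is specified by a down-closed matroidal subfamily $\feas'\subseteq\feas$ with the property that, examining the items of $R$ in any order and greedily accepting them subject to feasibility in $\feas'$, each $j\in R$ is accepted with probability at least $1-b$ over the randomness of $R\setminus\{j\}$. Offered this matroid $\feas'$ together with item prices $\eapricej$, a utility-maximizing buyer's preferred bundle is exactly the greedy subset of $R$ lying inside $\feas'$, by the standard matroid-greedy property (Lemma~\ref{lem:matroid-greedy} in the appendix, already invoked in the proof of Lemma~\ref{lem:stitching-trevs}).

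Accounting for the revenue, item $j$ contributes $\eapricej\cdot\Pr[j\text{ is bought}] \ge \eapricej(1-b)\eaprobj$ in expectation, so the total revenue of this demand-limited item pricing on $\feas'$ is at least $(1-b)\,\eaprices\cdot\eaprobs$. Since $\easrev(\dists,\feas')$ upper-bounds the revenue of any item pricing on $\feas'$ respecting the ex-ante constraint, the desired inequality $\eaprices\cdot\eaprobs \le \frac{1}{1-b}\,\easrev(\dists,\feas')$ follows. The main obstacle is the $(1-b)$-selectability of the greedy matroid OCRS, which is the technical heart of \citet{fsz-15} and which I would invoke as a black box; the efficient computability of $\feas'$ reduces to standard matroid-polytope oracles and the constructive nature of their scheme.
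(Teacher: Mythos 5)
The paper does not prove this lemma at all: it is cited directly from \citet{fsz-15}, with only the informal gloss that ``selling at the ex-ante prices recovers a $(1-b)$ fraction of the relaxed revenue under a stronger demand constraint.'' Your reconstruction fills in the derivation that the paper leaves implicit, and it is substantively correct. You correctly identify the key ingredients: pricing at $\eaprices$ restricted to a submatroid $\feas'$, the independent ``in-demand'' set $R$ with marginals $\eaprobj$, and the $(1-b)$-selectable greedy OCRS of \citet{fsz-15} whose defining property is that, conditioned on $j\in R$, with probability at least $1-b$ over $R\setminus\{j\}$, the element $j$ can be added to \emph{every} $\feas'$-independent subset of $R$. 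That strong form of selectability is what makes the argument go through for an unconstrained buyer.

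One small imprecision worth noting: you write that the buyer's preferred bundle ``is exactly the greedy subset of $R$ lying inside $\feas'$.'' A greedy OCRS accepts elements in an \emph{adversarial} arrival order, whereas a utility-maximizing buyer picks the maximum-weight basis of $\feas'|_R$; these are not the same object. The argument nonetheless holds, but for a slightly different reason than ``it's the greedy subset'': the \citet{fsz-15} selectability guarantee implies that with probability $\ge 1-b$ (conditioned on $j\in R$) element $j$ is addable to every feasible subset of $R$, hence belongs to \emph{every} maximal independent subset of $R$ in $\feas'$, and in particular to the buyer's max-weight choice. Your citation of Lemma~\ref{lem:matroid-greedy} gestures at this, but making the chain explicit -- selectability $\Rightarrow$ $j$ is in every maximal basis of $\feas'|_R$ $\Rightarrow$ $j$ is purchased -- would tighten the write-up. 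The rest (the revenue accounting $\eapricej(1-b)\eaprobj$, summing, and comparing to $\easrev(\dists,\feas')$) is correct, as is the remark on efficient computability.
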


We are now ready to prove Lemma~\ref{lem:approx-partition} and
Corollary~\ref{cor:general-ex-ante}, stated in
Section~\ref{sec:theorems}.

\begingroup
\def\thetheorem{\ref{lem:approx-partition}}
\begin{lemma}
  Let $\dist$ be any product value distribution and $\feas$ be a matroid with
 feasible polytope $\ptope$. Then, for any $\eaprob\in \frac 12\ptope$, there
 exists a submatroid $\feas' \subseteq \feas$ such that
 \[ \earev(\dist, \feas) \le 33.1\,\eatrev(\dist, \feas') \]
 If $\feas$ is a partition matroid, then $\feas' = \feas$.
\end{lemma}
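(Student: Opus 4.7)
The plan is to apply Lemma~\ref{lem:single-agent} (the single-agent decomposition) and bound each of its three right-hand side terms---namely $\brev(\dists-\eaprices,\feas)$, $\easrev(\dists,\feas)$, and $\eaprices\cdot\eaprobs$---by the appropriate constant multiple of $\eatrev(\dists,\feas')$. The arithmetic $6+8(1+\ln 2+1/\ln 2)+2\approx 33.09$ already matches the target constant $33.1$, so the argument must be essentially tight: each term has to be bounded by exactly its coefficient times $\eatrev(\dists,\feas')$ without further slack.

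For the first two terms, I use two simple embeddings into two-part tariffs. A bundle pricing on $\dists-\eaprices$ at fee $\ef$ has exactly the same acceptance probability---and hence the same entry-fee revenue---as the two-part tariff on $\dists$ with entry fee $\ef$ and item prices equal to $\eaprices$; because the item prices equal $\eaprices$, this two-part tariff is automatically $\eaprobs$-constrained, so $\brev(\dists-\eaprices,\feas)\le\eatrev(\dists,\feas)$. Similarly, an item pricing with prices $\ge\eaprices$ is a two-part tariff with entry fee $0$, giving $\easrev(\dists,\feas)\le\eatrev(\dists,\feas)$.

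For the $\eaprices\cdot\eaprobs$ term I use the prophet inequality appropriate to the structure of $\feas$. If $\feas$ is a partition matroid, Lemma~\ref{thm:partition-matroid} gives $\eaprices\cdot\eaprobs\le 2\,\easrev(\dists,\feas)\le 2\,\eatrev(\dists,\feas)$, and setting $\feas'=\feas$ yields $\earev(\dists,\feas)\le(6+8(1+\ln 2+1/\ln 2)+2)\,\eatrev(\dists,\feas)=33.1\,\eatrev(\dists,\feas)$. For a general matroid with $\eaprob\in\frac12\ptope$, I apply Lemma~\ref{thm:ocrs} with $b=1/2$, producing a submatroid $\feas'\subseteq\feas$ such that $\eaprices\cdot\eaprobs\le 2\,\easrev(\dists,\feas')\le 2\,\eatrev(\dists,\feas')$.

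The main obstacle is consolidating into a single $\feas'$ in the general-matroid case: the embeddings above produce $\brev$ and $\easrev$ bounds in terms of $\eatrev(\dists,\feas)$ rather than $\eatrev(\dists,\feas')$, and since the arithmetic has no slack I cannot afford even a small extra constant when moving from $\feas$ to $\feas'$. My plan is to redo the two embeddings directly inside $\feas'$, leveraging the monotone-OCRS structure behind Lemma~\ref{thm:ocrs}: on FSZ's submatroid, greedy purchase at prices $\eaprices$ behaves like greedy purchase on $\feas$ up to the same $(1-b)$ factor already absorbed into the Lemma~\ref{thm:ocrs} bound, so the bundle-pricing and item-pricing revenues on $\feas$ can be matched by the corresponding two-part tariff on $\feas'$ with no further loss. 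Summing the three coefficients then yields $\earev(\dists,\feas)\le 33.1\,\eatrev(\dists,\feas')$.
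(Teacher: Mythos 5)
Your high-level plan matches the paper's proof exactly: invoke Lemma~\ref{lem:single-agent} to get the three-term decomposition
\[
\earev(\dist,\feas) \le 6\,\brev(\dists-\eaprices,\feas) + 8(1+\ln 2 + 1/\ln 2)\,\easrev(\dists,\feas) + \eaprices\cdot\eaprobs,
\]
bound the first two terms by $\eatrev(\dist,\feas)$ via the entry-fee and zero-entry-fee embeddings, and bound $\eaprices\cdot\eaprobs$ via Lemma~\ref{thm:partition-matroid} or Lemma~\ref{thm:ocrs}. The partition-matroid case is fine. The subtlety you flag in the general-matroid case is a real one, and in fact the paper's own one-line wrap-up (``the lemma now follows'') silently elides it: after applying Lemma~\ref{thm:ocrs} one is left with $31.1\,\eatrev(\dist,\feas) + 2\,\eatrev(\dist,\feas')$, which is \emph{not} the same as $33.1\,\eatrev(\dist,\feas')$ since $\feas'\subseteq\feas$ generally makes $\eatrev(\dist,\feas')$ the \emph{smaller} of the two.

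However, the fix you sketch --- rerunning the bundle- and item-pricing embeddings ``inside $\feas'$'' and arguing that OCRS monotonicity matches the $\feas$ revenues with no extra loss --- does not hold up. The FSZ guarantee is specific to prices exactly $\eaprices$ and already charges a $(1-b)$ factor; it says nothing about how the buyer's optimal bundle surplus under $\feas'$ compares to that under $\feas$, which is exactly what the $\brev$ embedding needs. There is a much cleaner resolution, which is evidently what the paper intends: observe that the coefficients $6 + 8(1+\ln 2 + 1/\ln 2) + 2 \le 33.1$, so the three-term sum is at most $33.1\max\bigl\{\brev(\dists-\eaprices,\feas),\ \easrev(\dists,\feas),\ \easrev(\dists,\feas'_{\text{FSZ}})\bigr\}$, where the first two are each $\le \eatrev(\dist,\feas)$ and the third is $\le \eatrev(\dist,\feas'_{\text{FSZ}})$. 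Whichever candidate attains the max, its demand constraint (either $\feas$ itself or FSZ's submatroid) serves as the required $\feas'$, and the stated bound follows. This is the standard ``max of candidate mechanisms'' step used throughout this literature, and it is the missing ingredient in both your writeup and the paper's terse proof.
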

\addtocounter{theorem}{-1}
\endgroup

\begin{proof} 
  We first observe that $\brev(\dists-\eaprices,\feas) \le
  \eatrev(\dists,\feas)$. In particular, for any $a>0$, a two-part
  tariff with entry fee $a$ and item prices $\eaprices$ achieves at
  least as much revenue over values drawn from $\dists$ as does a
  bundle pricing with price $a$ over values drawn from $\dists -
  \beta$. The lemma now follows from Lemma~\ref{lem:single-agent},
  together with the bounds on $\eaprices\cdot\eaprobs$ given by
  Lemmas~\ref{thm:partition-matroid} and \ref{thm:ocrs}
\end{proof}


\smallskip As a final remark, we note that the condition $\eaprobs \in
\frac12\ptope$ in Lemma~\ref{lem:approx-partition} is necessary only
to recover $\eaprices\cdot\eaprobs$; we can, in fact, show a slightly
weaker result which holds for arbitrary $\eaprobs$. 

\begingroup
\def\thetheorem{\ref{cor:general-ex-ante}}
\begin{corollary}
  Let $\dist$ be any product value distribution and $\feas$ be a matroid. Then
  for any $\eaprob\in[0,1]^m$, there exists a submatroid $\feas' \subseteq
  \feas$ such that
  \[\earev(\dist, \feas) \le 35.1\,\eatrev(\dist,\feas') \]
  If $\feas$ is a partition matroid, then $\feas' = \feas$.
\end{corollary}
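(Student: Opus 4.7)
The plan is to follow the structure of the proof of Lemma~\ref{lem:approx-partition}, but drop the condition $\eaprobs \in \frac{1}{2}\ptope$ via a scaling argument applied to the $\eaprices \cdot \eaprobs$ term. First I would observe that we may assume $\eaprobs \in \ptope$ without loss of generality: any BIC mechanism satisfying the $\eaprobs$-constraint has interim allocation probabilities lying in $\ptope$ by demand feasibility, so replacing $\eaprobs$ by such (pointwise smaller) probabilities leaves $\earev_{\eaprobs}$ unchanged, while the right-hand side $\eatrev_{\eaprobs}(\dist,\feas')$ only shrinks as the \exante\ prices $\eaprices$ rise.

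Next, I would apply Lemma~\ref{lem:single-agent} to decompose
\[
\earev_{\eaprobs}(\dist,\feas) \leq 6\,\brev(\dists - \eaprices,\feas) + 8(1 + \ln 2 + 1/\ln 2)\,\easrev_{\eaprobs}(\dists,\feas) + \eaprices \cdot \eaprobs.
\]
The first two terms are handled exactly as in Lemma~\ref{lem:approx-partition}: the two-part tariff with bundle-pricing entry fee and item prices $\eaprices$ respects the \exante\ constraint, and item pricings are special two-part tariffs. The key new step is bounding $\eaprices \cdot \eaprobs$. Since the Myerson price--quantity curve $\eapricej(q) = F_j^{-1}(1-q)$ is non-increasing in $q$, we have the pointwise inequality $\eapricej(\eaprobj)\cdot \eaprobj \leq 2\,\eapricej(\eaprobj/2)\cdot (\eaprobj/2)$, so summing gives $\eaprices(\eaprobs)\cdot \eaprobs \leq 2\,\eaprices(\eaprobs/2)\cdot (\eaprobs/2)$. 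Now $\eaprobs/2 \in \frac{1}{2}\ptope$, so Lemma~\ref{thm:ocrs} with $b=1/2$ (or Lemma~\ref{thm:partition-matroid} in the partition-matroid case, which yields $\feas' = \feas$) gives a submatroid $\feas' \subseteq \feas$ with $\eaprices(\eaprobs/2)\cdot (\eaprobs/2) \leq 2\,\easrev_{\eaprobs/2}(\dists,\feas') \leq 2\,\eatrev_{\eaprobs}(\dists,\feas')$, where the last inequality uses that an item pricing with $\pricej \geq \eapricej(\eaprobj/2) \geq \eapricej(\eaprobj)$ is a valid $\eaprobs$-constrained two-part tariff. Combining, $\eaprices \cdot \eaprobs \leq 4\,\eatrev_{\eaprobs}(\dists,\feas')$, and summing all three contributions yields the constant $6 + 8(1 + \ln 2 + 1/\ln 2) + 4 \approx 35.1$.

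I expect the main obstacle will be ensuring that the bundling and item-pricing contributions are genuinely bounded by $\eatrev_{\eaprobs}(\dists,\feas')$ rather than merely $\eatrev_{\eaprobs}(\dists,\feas)$: since $\feas' \subsetneq \feas$ in general, demand-limiting to $\feas'$ can in principle shrink both the bundle-pricing participation probability and the per-item sale probability. Resolving this requires exploiting the structural property of the FSZ submatroid---namely, that under item prices $\eaprices$ the restriction of purchases to $\feas'$ still sells each item with probability at least $(1-b)\eaprobj = \tfrac{1}{2}\eaprobj$---to show that the demand-limited two-part tariff $(a,\eaprices)$ still recovers a constant fraction of $\brev(\dists-\eaprices,\feas)$, exactly as in the argument underlying Lemma~\ref{lem:approx-partition}.
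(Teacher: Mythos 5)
Your proof is correct and follows essentially the same route as the paper's: assume $\eaprobs\in\ptope$ without loss of generality, apply Lemma~\ref{lem:single-agent}, and scale down into $\frac12\ptope$ to invoke Lemma~\ref{thm:ocrs}. The explicit pointwise inequality $\eapricej(\eaprobj)\cdot\eaprobj \leq 2\,\eapricej(\eaprobj/2)\cdot(\eaprobj/2)$ is a clean way to make rigorous the paper's terse remark that scaling to $\frac12\ptope$ ``can only increase the corresponding ex~ante prices.''

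The concern you flag at the end is legitimate, and it applies equally to the paper's own proof: what one actually obtains is a bound of the form $c_1\,\eatrev(\dist,\feas) + c_2\,\eatrev(\dist,\feas')$, where the bundle- and item-pricing terms live on $\feas$ and only the $\eaprices\cdot\eaprobs$ term is controlled on the FSZ submatroid $\feas'$. Since $\feas'$ may be a strict submatroid, one cannot conclude $\eatrev(\dist,\feas)\le\eatrev(\dist,\feas')$. But the OCRS structural property you propose is not needed and would degrade the constant. The simple fix is to note that $\feas$ is itself a (trivial) submatroid of $\feas$, and to let the submatroid promised in the corollary be whichever of $\{\feas,\ \text{the FSZ submatroid}\}$ maximizes $\eatrev(\dist,\cdot)$; then $c_1\,\eatrev(\dist,\feas) + c_2\,\eatrev(\dist,\feas')\le(c_1+c_2)\,\eatrev(\dist,\feas'')$ for that choice. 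This is also exactly what the downstream multi-agent argument requires: Lemma~\ref{lem:stitching-trevs} accepts any per-agent matroid constraint, so each agent's single-agent two-part tariff may independently use $\feasi$ or the FSZ submatroid, whichever yields more revenue.
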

\addtocounter{theorem}{-1}
\endgroup

\begin{proof}
For any $\eaprobs
\in [0,1]^m$,
\[\earev(\dist,\feas) \leq
    \max_{\substack{\eaprobs'\leq\eaprobs \\ \eaprobs'\in\ptope}}
    \earev[\eaprobs'](\dist,\feas).\]
Therefore, there exists $\eaprobs' \in \ptope$ and corresponding
$\eaprices'$, such that Lemma~\ref{lem:single-agent} gives
\[\earev(\dists,\feas) \leq 31.1 \eatrev[\eaprobs'](\dist,\feas) +
    \eaprices'\cdot\eaprobs'.\]
Furthermore, by scaling $\eaprobs'$ to lie in $\frac12\ptope$ we can only
increase the corresponding \exante\ prices, so Lemma~\ref{thm:ocrs} gives
$\eaprices'\cdot\eaprobs' \leq 4\easrev[\eaprobs'](\dist,\feas')$ for some
$\feas'\subseteq\feas$.  The corollary now follows by noting
$\eatrev[\eaprobs'](\dist,\feas) \leq \eatrev(\dist,\feas)$ and
$\easrev[\eaprobs'](\dist,\feas') \leq \easrev(\dist,\feas')$. 
\end{proof}

  \subsection{Core Decomposition with \ExAnte\ Constraints}
\label{sec:core-decomp}

The proof of Lemma~\ref{lem:core-decomposition} makes use of the following two
lemmas, which are analogous to results proved by Babaioff et al. after Li and
Yao. 

\begin{lemma}
\label{lem:subdomain-stitching}
 There exists a set $\{\eaprobsA \in [0,1]^m : A \subseteq [m]\}$ such that
 $\sum_A \tprobA \eaprobAj \leq \eaprobj$ for all $j$ and
 \[\earev(\dists,\feas) \leq
 \sum_{A\subseteq[m]}\tprobA\earev[\eaprobsA](\dists_A,\feas)\]
\end{lemma}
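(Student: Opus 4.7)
The plan is to take an optimal $\eaprobs$-constrained BIC mechanism $\mech^*=(\allocs,\prices)$ for $\earev(\dists,\feas)$ and decompose its expected revenue by conditioning on the random set of items that fall in the tail. Because the item values are independent and the tail/core classification is determined coordinate-wise by comparing $\valj$ to $\threshj$, the event ``the tail equals exactly $A$'' has probability $\tprobA$, and conditional on that event the value vector is distributed as the product $\coreA\times\tail$, which we denote $\dists_A$. The tower rule therefore gives
\[ \earev(\dists,\feas) \;=\; \revm[\mech^*](\dists) \;=\; \sum_{A\subseteq[m]}\tprobA\cdot\expect[\vals\sim\dists_A]{\sum_j\pricej(\vals)}, \]
and applied to the allocation instead of the payment it also gives
\[ \sum_A\tprobA\,\expect[\vals\sim\dists_A]{\allocj(\vals)} \;=\; \expect[\vals\sim\dists]{\allocj(\vals)} \;\leq\; \eaprobj. \]
Setting $\eaprobAj:=\expect[\vals\sim\dists_A]{\allocj(\vals)}$ therefore produces probability vectors $\eaprobsA\in[0,1]^m$ that automatically satisfy the first required inequality.

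The delicate step, and in my view the main obstacle, is to verify that the same rules $(\allocs,\prices)$ constitute a BIC mechanism with respect to the new prior $\dists_A$. This is where the single-agent nature of the setting is essential: the BIC condition for a single-agent mechanism reduces to the pointwise IC inequality $\allocs(\vals)\cdot\vals-\prices(\vals)\geq\allocs(\vals')\cdot\vals-\prices(\vals')$ for every $\vals$ in the support of the prior and every potential report $\vals'$. This condition refers only to the mechanism's functions, not to the prior, so since $\supp(\dists_A)\subseteq\supp(\dists)$ every IC inequality needed for BIC under $\dists_A$ is already implied by BIC under $\dists$. Demand-feasibility and ex-post IR are likewise pointwise properties of $(\allocs,\prices)$ and are preserved under the change of prior.

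Consequently, viewed as a mechanism on $\dists_A$, $\mech^*$ is BIC and demand-feasible and satisfies the \exante\ constraint $\eaprobsA$ by construction, so its expected revenue under $\dists_A$ is upper bounded by $\earev[\eaprobsA](\dists_A,\feas)$. Plugging this bound into the first display and summing over $A$ weighted by $\tprobA$ yields the claimed inequality, completing the proposed argument.
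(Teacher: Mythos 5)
Your proof is correct and matches the paper's argument: both take the optimal $\eaprobs$-constrained BIC mechanism, decompose its revenue and its \exante\ allocation probabilities by conditioning on the tail set $A$, define $\eaprobAj=\expect[\vals\sim\dists_A]{\allocj(\vals)}$, and bound each conditional revenue by $\earev[\eaprobsA](\dists_A,\feas)$. The only addition is that you explicitly justify why the same mechanism remains BIC and demand-feasible under the restricted prior $\dists_A$ (a step the paper takes as immediate), which is a welcome clarification; just note that the pointwise IC inequality should be stated in expectation over the mechanism's own randomness, since the mechanism may be randomized.
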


\begin{proof}
Let $\mech$ be a BIC mechanism which is $\eaprobs$-constrained under $\dists$
such that $\revm(\dists) = \earev(\dists,\feas)$. So $\revm(\dists) =
\sum_{A\subseteq[m]}\tprobA\revm(\dists_A)$. Let $\eaprobAj$ be the probability
that $\mech$ allocates item $j$ given that $\vals$ is drawn from $\dists_A$;
that is, $\eaprobAj = \expect[\vals\sim\dists_A]{\allocj(\vals)}$.  Clearly
$\sum_{A}\tprobA \eaprobAj \leq \eaprobj$, by the assumption that $\mech$ is
$\eaprobs$-constrained. The result follows since $\revm(\dists_A) \leq
\earev[\eaprobsA](\dists_A,\feas)$ for each $A$.
\end{proof}

\begin{lemma}
\label{lem:marginal-mechanism}
 For any two independent distributions $\dists_S$ and $\dists_T$ over disjoint
 sets of items $S$ and $T$ with corresponding \exante\ constraints
 $\eaprobs_S$ and $\eaprobs_T$ and a joint feasibility constraint $\feas$,
 \[\earev[(\eaprobs_S; \eaprobs_T)](\dists_S\times\dists_T,\feas)
        \leq \eaVal[\eaprobs_S](\dists_S,\feas|_S) +
             \earev[\eaprobs_T](\dists_T,\feas|_T).\]
\end{lemma}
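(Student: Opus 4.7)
\begin{proofof}{Lemma~\ref{lem:marginal-mechanism} (Proposal)}
Let $\mech^* = (\alloc^*, \price^*)$ be an optimal BIC, demand-feasible, $(\eaprobs_S;\eaprobs_T)$-constrained mechanism for the joint distribution $\dists_S\times\dists_T$ and feasibility constraint $\feas$, so that its expected revenue equals the left-hand side. Write $\alloc^*_S(\vals_S,\vals_T)$ and $\alloc^*_T(\vals_S,\vals_T)$ for the components of the allocation vector indexed by $S$ and $T$ respectively. The plan is to decompose the expected revenue of $\mech^*$ as
\[
\expect[(\vals_S,\vals_T)]{\price^*(\vals_S,\vals_T)}
 = \expect{\vals_S\cdot\alloc^*_S(\vals_S,\vals_T)}
 + \expect{\price^*(\vals_S,\vals_T) - \vals_S\cdot\alloc^*_S(\vals_S,\vals_T)},
\]
and to bound the first summand by $\eaVal[\eaprobs_S](\dists_S,\feas|_S)$ and the second by $\earev[\eaprobs_T](\dists_T,\feas|_T)$, each by exhibiting an appropriate single-agent mechanism.

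For the revenue term I will construct the \emph{marginal mechanism} $\mech_T$ on items in $T$: on a reported value vector $\vals_T$, sample $\tilde\vals_S\sim\dists_S$ independently, allocate $\alloc^*_T(\tilde\vals_S,\vals_T)$, and charge $\price^*(\tilde\vals_S,\vals_T) - \tilde\vals_S\cdot\alloc^*_S(\tilde\vals_S,\vals_T)$. Symmetrically, for the welfare term I will build $\tilde\mech_S$ on $\dists_S$ with allocation $\alloc^*_S(\vals_S,\tilde\vals_T)$ and payment $\price^*(\vals_S,\tilde\vals_T)-\tilde\vals_T\cdot\alloc^*_T(\vals_S,\tilde\vals_T)$; its expected welfare is exactly $\expect{\vals_S\cdot\alloc^*_S(\vals_S,\vals_T)}$.

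For each of these, I need to check four properties. First, demand feasibility: since $\alloc^*(\vals_S,\vals_T)\in\feas$ always and $\feas$ is downward closed, the restrictions to $S$ and to $T$ lie in $\feas|_S$ and $\feas|_T$ respectively. Second, \exante\ feasibility: for $j\in T$, $\expect[\vals_T]{\alloc^*_{T,j}(\tilde\vals_S,\vals_T)}$ equals the joint \exante\ allocation probability of $\mech^*$ for item $j$, which is at most $\eaprob_{T,j}$ by assumption; the analogous statement holds for $\tilde\mech_S$. Third, revenue/welfare accounting: the expected payment of $\mech_T$ is exactly $\expect{\price^*-\vals_S\cdot\alloc^*_S}$, and the expected welfare of $\tilde\mech_S$ is exactly $\expect{\vals_S\cdot\alloc^*_S}$.

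The main obstacle, and the only step that is not bookkeeping, is the fourth property: BIC and IR of the marginal mechanisms. I expect to obtain these by integrating the BIC/IR constraints of $\mech^*$ against the sampled coordinate. Concretely, for $\mech_T$ the utility of a buyer with true value $\vals_T$ reporting $\vals_T'$ is
\[
\expect[\tilde\vals_S]{\vals_T\cdot\alloc^*_T(\tilde\vals_S,\vals_T') - \price^*(\tilde\vals_S,\vals_T') + \tilde\vals_S\cdot\alloc^*_S(\tilde\vals_S,\vals_T')}
= \expect[\tilde\vals_S]{(\tilde\vals_S,\vals_T)\cdot\alloc^*(\tilde\vals_S,\vals_T') - \price^*(\tilde\vals_S,\vals_T')},
\]
which by the BIC of $\mech^*$ at each realized $\tilde\vals_S$ is maximized by choosing $\vals_T'=\vals_T$; IR follows by the same realization-wise argument using IR of $\mech^*$. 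The analogous identity gives BIC and IR of $\tilde\mech_S$. Combining, $\expect{\vals_S\cdot\alloc^*_S}\le\eaVal[\eaprobs_S](\dists_S,\feas|_S)$ and $\expect{\price^*-\vals_S\cdot\alloc^*_S}\le\earev[\eaprobs_T](\dists_T,\feas|_T)$, and summing yields the lemma.
\end{proofof}
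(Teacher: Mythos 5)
Your proposal is correct and takes essentially the same route as the paper: both construct the marginal mechanism on $T$ by internally sampling the $S$-coordinate, setting the payment to $\price^* - \vals_S\cdot\alloc^*_S$, and arguing BIC realization-wise from BIC of $\mech^*$. The paper leaves the welfare-side bound implicit (the welfare $\mech^*$ extracts from $S$ is trivially at most $\eaVal[\eaprobs_S](\dists_S,\feas|_S)$), whereas you spell it out via the symmetric construction $\tilde\mech_S$; this is a reasonable added detail but not a different approach.
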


\begin{proof}
Let $\mech$ be a BIC mechanism which is $(\eaprobs_S; \eaprobs_T)$-constrained
under $(\dists_S\times\dists_T)$ such that $\revm(\dists_S\times\dists_T) =
\earev[(\eaprobs_S;\eaprobs_T)](\dists_S\times\dists_T,\feas)$.  We construct a
mechanism $\mech'$ for selling items in $T$ as follows. $\mech'$ first samples
$\vals_S\sim\dists_S$, and then solicits a bid $\vals_T$ for items in $T$. Let
$(\allocs_{S\union T}(\vals_S;\vals_T), \price(\vals_S;\vals_T))$ be the
allocation returned and payment charged by $\mech$ for the combined bid; then
$\mech'$ returns the allocation $\allocs_T(\vals_S;\vals_T) $ and charges
$\price(\vals_S;\vals_T) - \vals_S(\allocs_S(\vals_S;\vals_T))$.

We now prove that $\mech'$ is truthful. Suppose the bidder submits a bid
$\vals_T'$. His utility is $\vals_T(\allocs_T(\vals_S; \vals_T')) -
\big(\price(\vals_S; \vals_T') - \vals_S(\allocs_S(\vals_S; \vals_T'))\big)$,
which is the utility of a bidder participating in $\mech$ with valuation
$(\vals_S,\vals_T')$. Since $\mech$ is truthful, the bidder can do no worse by
bidding $\vals_T$ in $\mech'$ and receiving the utility of an agent who bids
truthfully in $\mech$.

Note that $\mech'$ allocates item $j \in T$ exactly when $\mech$ does
(conditioned on $\vals_S$). So $\mech'$ is demand-feasible. Furthermore, since
$\mech'$ draws $\vals_S$ from $\dists_S$, $\mech'$ is also
$\eaprobs_T$-constrained under $\dists_T$.  Formally, let $\allocs'$ be the
allocation rule of $\mech'$; then
$\expect[\vals_T\sim\dists_T]{\allocj'(\vals_T)} =
\expect[\vals_S\sim\dists_S,\vals_T\sim\dists_T]{\allocj(\vals_S;\vals_T)} \leq
\eaprobj$ for all $j \in T$.

The revenue obtained by $\mech'$ is
\begin{align*}
\revm[\mech'](\dists_T) &=
        \expect[\vals_S\sim\dists_S,\vals_T\sim\dists_T]{\price(\vals_S;\vals_T)
        - \vals_S(\allocs_S(\vals_S;\vals_T))} \\
 &= \earev[(\eaprobs_S;\eaprobs_T)](\dists_S\times\dists_T,\feas)
        - \expect[\vals_S\sim\dists_S,\vals_T\sim\dists_T]{\vals_S(
        \allocs_S(\vals_S;\vals_T))} \\
 &\geq \earev[(\eaprobs_S;\eaprobs_T)](\dists_S\times\dists_T,\feas) -
        \eaVal[\eaprobs_S](\dists_S,\feas|_S),
\end{align*}
where the inequality follows because the welfare $\mech$ obtains from items
in $S$ is a lower bound on the welfare of any $\eaprobs_S$-constrained mechanism
for $\dists_S$.
\end{proof}

\begin{proofof}{Lemma \ref{lem:core-decomposition}}
By Lemmas \ref{lem:subdomain-stitching} and \ref{lem:marginal-mechanism}, we
have
\[\earev(\dists,\feas) \leq \sum_{A\subseteq[m]} \tprobA\left(
        \eaVal[\eaprobsA](\coreA,\feas|_{A^c}) +
        \earev[\eaprobsA](\tail,\feas|_{A})\right).\]

For each $A\subseteq [m]$, let $\mech^A$ be a truthful $\eaprobsA$-constrained
demand-feasible mechanism which obtains welfare equal to
$\eaVal[\eaprobsA](\coreA,\feas|_{A^c})$.  One way to allocate items when values
are drawn from $\core$ is to choose to sell only items from some set
$A\subseteq[m]$. Consider a mechanism which chooses from among all subsets of
items, choosing $A$ with probability $\tprobA$, and then runs $\mech^A$.  The
expected welfare from such a mechanism is exactly
$\sum_{A\subseteq[m]}\tprobA\eaVal[\eaprobsA](\coreA,\feas|_{A^c})$.  Since
$\sum_{A\subseteq[m]}\tprobA\eaprobAj \leq \eaprobj$, the welfare of this
mechanism also provides a lower bound on $\eaVal(\core,\feas)$.
\end{proofof}

\section{Approximation for Symmetric Agents}
\label{sec:symmetric}

Computing the approximate mechanisms of
Theorem~\ref{thm:main-partition} requires being able to efficiently
solve the \exante\ optimization, $\max_{\eaprobs} \sum_i
\earev[\eaprobi](\disti,\feasi) \; \text{s.t. } \sum_i
\eaprobi\le\vec{\mathbf 1}$. This is not necessarily a convex
optimization problem and it is not clear whether this can be solved or
approximated efficiently in general. In this section we show how to
solve this problem in the special case where agents are a priori
identical.


In a {\em symmetric agents} setting, agents share a common feasibility
constraint and value distribution. In particular, $\feasi =
\feasi[i']=\feas$ and $\disti = \disti[i']=\dist$ for all $i, i'\in
[n]$.  Note that the values of different items are not necessarily
distributed identically, neither is $\feas$ necessarily symmetric
across items. Since each agent is identical, we can focus on
maximizing the revenue obtained from a single agent, while ensuring
that the \exante\ probability of selling each item is small
enough that we may apply Lemma~\ref{lem:stitching-trevs}. We formalize
this in the following lemma. See Appendix~\ref{sec:single-agent-proofs} for a
proof.
    
\begin{lemma}
\label{lem:symmetric-reduction}
 In a symmetric agents setting with $n$ agents, a matroid feasibility constraint $\feas$ and
 product distribution $\dists$,
 \[\rev(\times_n \dists, \times_n \feas) \leq
    2n \max_{\eaprobs\in \ptope \intersect\left[0,\tfrac{1}{2n}\right]^m}
    \earev[\eaprobs](\dist,\feas),\]
\end{lemma}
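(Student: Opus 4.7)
The plan is to derive the bound from the Relaxation Lemma~\ref{lem:relaxation} by combining agent symmetry with two elementary properties of the ex-ante revenue functional.

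First, I would apply Lemma~\ref{lem:relaxation} to the multi-agent instance, producing \exante\ probability vectors $\eaprobi[1], \ldots, \eaprobi[n]$ with $\eaprobi\in\ptope$ for every $i$ and $\sum_i \eaprobij \le 1$ for every $j$, satisfying
\[\rev(\times_n\dist,\times_n\feas) \le \sum_i \earev[\eaprobi](\dist,\feas),\]
where I have used $\disti=\dist$ and $\feasi=\feas$ to drop the agent subscripts on the single-agent quantities.

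Next, I would record two standard properties of the map $\eaprobs\mapsto\earev[\eaprobs](\dist,\feas)$: it is concave, and it satisfies the scaling bound $\earev[c\eaprobs](\dist,\feas)\ge c\,\earev[\eaprobs](\dist,\feas)$ for every $c\in[0,1]$. Both properties follow by mixing mechanisms. Given optimal mechanisms $\mech_1,\mech_2$ for constraint vectors $\eaprobs_1,\eaprobs_2$, the randomized mechanism that runs $\mech_1$ with probability $\lambda$ and $\mech_2$ with probability $1-\lambda$ is BIC, has \exante\ item probabilities $\lambda\eaprobs_1+(1-\lambda)\eaprobs_2$, and earns expected revenue $\lambda\earev[\eaprobs_1]+(1-\lambda)\earev[\eaprobs_2]$; taking $\eaprobs_2=\mathbf{0}$ with $\mech_2$ the null mechanism gives the scaling inequality as a special case.

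Now define $\overline{\eaprobs} = \tfrac{1}{n}\sum_i \eaprobi$. Convexity of the matroid polytope gives $\overline{\eaprobs}\in\ptope$, and the supply constraint $\sum_i \eaprobij\le 1$ forces $(\overline{\eaprobs})_j\le 1/n$ for every $j$. Applying concavity to the $n$ symmetric terms yields
\[\sum_i \earev[\eaprobi](\dist,\feas) \le n\,\earev[\overline{\eaprobs}](\dist,\feas).\]
Finally, set $\tilde{\eaprobs} = \overline{\eaprobs}/2$. Since $\mathbf{0}\in\ptope$ and $\ptope$ is convex, $\tilde{\eaprobs}\in\ptope\cap[0,1/(2n)]^m$. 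The scaling bound gives $\earev[\overline{\eaprobs}](\dist,\feas)\le 2\,\earev[\tilde{\eaprobs}](\dist,\feas)$, and chaining the three inequalities produces the desired bound.

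I do not anticipate any serious obstacle, since every step is a routine consequence of the Relaxation Lemma together with elementary properties of the \exante\ revenue functional. The extra factor of $2$ on top of the natural factor of $n$ is introduced precisely to shrink each agent's per-item \exante\ probability below $1/(2n)$, so that after summing over the $n$ symmetric agents the aggregate bound $\sum_i \eaprobij\le 1/2$ required by the stitching lemma (Lemma~\ref{lem:stitching-trevs}) holds automatically.
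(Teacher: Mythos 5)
Your proof is correct, and it takes a somewhat different route from the paper's. The paper's proof is a direct construction: it takes the optimal multi-agent mechanism $\mech$, symmetrizes over agent identities to enforce $\eaprobij\le 1/n$ for all $i,j$, and then builds a single-agent mechanism that with probability $1/2$ does nothing and with probability $1/2$ draws the other agents' values and simulates $\mech$; this single-agent mechanism earns a $\tfrac{1}{2n}$ fraction of $\mech$'s revenue and satisfies the required $\eaprobs\in\ptope\cap[0,1/(2n)]^m$ constraint. You instead reuse Lemma~\ref{lem:relaxation} to decompose $\rev$ into $\sum_i\earev[\eaprobi]$, average the $\eaprobi$ using concavity of $\earev[\cdot]$ to get $n\,\earev[\overline{\eaprobs}]$ with $\overline{\eaprobs}_j\le 1/n$, then halve via the scaling property $\earev[c\eaprobs]\ge c\,\earev[\eaprobs]$ to land inside $[0,1/(2n)]^m$. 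Both concavity and scaling of $\earev$ hold by the lottery-over-mechanisms argument you give (the paper itself invokes concavity of $\earev$ in deriving Theorem~\ref{thm:main-partition}), so nothing is missing. Your version is more modular and reads as a corollary of the Relaxation Lemma plus elementary properties of the ex-ante revenue functional, while the paper's is more explicit about the single-agent mechanism that actually achieves the bound; the results and the final factor of $2n$ are the same.
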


For the remainder of this section, we focus on efficiently
approximately maximizing the single agent objective
$\earev[\eaprobs](\dist,\feas)$ subject to $\eaprobs\in
\widehat{\ptope}$, where we use $\widehat{\ptope}$ as short form for
$\ptope
\intersect\left[0,\tfrac{1}{2n}\right]^m$. Lemma~\ref{lem:single-agent}
bounds the revenue by three terms; we observe that $\easrev(\dists,
\feas)$ is at most $\max_{\eaprobs'\le\eaprobs}
\eaprobs'\cdot\dists^{-1}(1-\eaprobs')$. Therefore,
\begin{align}
\notag \max_{\eaprobs\in \widehat{\ptope}}\earev(\dists,\feas) & \leq
6\,\max_{\eaprobs\in \widehat{\ptope}}\brev(\dists-\dists^{-1}(1-\eaprobs),\feas) + 26.1\, \max_{\eaprobs\in \widehat{\ptope}}\eaprobs\cdot\dists^{-1}(1-\eaprobs)\\
&\label{eq:symmetric} \le 6\,\brev(\dists-\dists^{-1}(1-1/2n),\feas) + 26.1\, \max_{\eaprobs\in \widehat{\ptope}}\eaprobs\cdot\dists^{-1}(1-\eaprobs)
\end{align}




\noindent
The first term on the LHS of Equation~\eqref{eq:symmetric} is easy to capture.
We can use sampling to efficiently compute the optimal bundle price
for the value distribution $\dists-\dists^{-1}(1-1/2n)$. Call this
price $a$, and let $\pricej=\distj^{-1}(1-1/2n)$ for all items $j\in
[m]$. Then, by Lemma~\ref{lem:stitching-trevs} the multi-agent
sequential two-part tariff mechanism that offers each agent an entry
fee of $a$ and per item pricing $\prices$ obtains revenue at least
$\frac n2 \brev(\dists-\dists^{-1}(1-1/2n),\feas)$.

This leaves us with the following maximization problem:
\begin{equation}
\label{eq:bq-max}
\begin{aligned}
 \text{maximize }\; & \eaprobs\cdot\dists^{-1}(1-\eaprobs) & \text{s.t. }\; & \eaprobs \in \ptope \intersect \left[0,\tfrac{1}{2n}\right]^m
\end{aligned}
\end{equation}

In Appendix~\ref{sec:optimizing-beta-dot-q} we discuss how to solve (a
relaxation of) this problem efficiently when $\feas$ is a matroid. We
obtain a (potentially random) vector $\eaprobs$ that in expectation
satisfies the feasibility constraint $\widehat{\ptope}$ and obtains an
expected objective function value no smaller than the optimum of
\eqref{eq:bq-max}. 
Then, for partition matroids, we can employ a constructive version of
Lemma~\ref{thm:partition-matroid} due to \citet{CHMS-STOC10} to obtain
a (potentially random) sequential two-part tariff mechanism that
obtains revenue at least $\frac n4$ times the optimum of
\eqref{eq:bq-max}. For general matroids, we can likewise employ a
constructive version of Theorem~\ref{thm:ocrs} due to \citet{fsz-15}
to obtain a (potentially random) demand-limiting sequential two-part
tariff mechanism that obtains revenue at least $\frac n4$ times the
optimum of \eqref{eq:bq-max}. We obtain the following theorem.

\begin{theorem}
  For any symmetric, matroid feasibility constraint $\feas$ and
  symmetric, product distribution $\dists$, there is an efficiently
  computable randomized demand-limiting sequential
  two-part tariff mechanism $\mech$ and a constant $c$ such that
 \[\rev(\dists,\feas) \leq c\, \revm(\dists).\]
 When $\feas$ is a partition matroid, we obtain a sequential two-part
 tariff mechanism, and when $\distj$ is regular for all $j$, our
 mechanism is deterministic. 
\end{theorem}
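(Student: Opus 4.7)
The plan is to chain together the ingredients already assembled in this section. First, by Lemma~\ref{lem:symmetric-reduction}, the multi-agent revenue is bounded by $2n$ times the single-agent $\eaprobs$-constrained revenue over $\eaprobs \in \widehat{\ptope} := \ptope \cap [0,1/(2n)]^m$. Substituting Lemma~\ref{lem:single-agent} and using $\easrev[\eaprobs](\dists,\feas) \le \max_{\eaprobs' \le \eaprobs} \eaprobs'\cdot\dists^{-1}(1-\eaprobs')$ yields Equation~\eqref{eq:symmetric}, which bounds the single-agent revenue by the sum of a bundle term $\brev(\dists-\dists^{-1}(1-1/(2n)),\feas)$ and the $\max_{\eaprobs\in\widehat{\ptope}} \eaprobs\cdot\dists^{-1}(1-\eaprobs)$ term. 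I would then construct one multi-agent mechanism targeting each term and output whichever has higher expected revenue, incurring only a factor of $2$.

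For the bundle term, I would estimate the optimal bundle price $a$ for the shifted distribution $\dists - \dists^{-1}(1-1/(2n))$ via Monte Carlo sampling; this is a one-dimensional revenue maximization and admits standard polynomial-sample approximation. Setting the common entry fee to $a$ and the per-item prices to $\pricej = \distj^{-1}(1-1/(2n))$ produces a single-agent two-part tariff that is $(1/(2n), \ldots, 1/(2n))$-constrained under $\dists$. Since $\sum_i \eaprobij = 1/2$ for each item, Lemma~\ref{lem:stitching-trevs} composes these sequentially and recovers at least $n/2$ times the single-agent bundle revenue, with no demand-limiting required.

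For the second term, I would invoke the procedure of Appendix~\ref{sec:optimizing-beta-dot-q} to produce a (possibly randomized) $\eaprobs \in \widehat{\ptope}$ whose expected objective matches the optimum of~\eqref{eq:bq-max}, and set item prices $\eapricej = \distj^{-1}(1-\eaprobj)$. When $\feas$ is a partition matroid, the constructive prophet inequality behind Lemma~\ref{thm:partition-matroid} yields a single-agent item pricing obtaining at least $\eaprobs\cdot\eaprices/2$ in revenue; when $\feas$ is a general matroid, the constructive OCRS of Lemma~\ref{thm:ocrs} returns a submatroid $\feas' \subseteq \feas$ and prices achieving at least a constant fraction of $\eaprobs\cdot\eaprices$. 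The resulting single-agent mechanism is $\eaprobs$-constrained with $\eaprobs \in [0,1/(2n)]^m$, so Lemma~\ref{lem:stitching-trevs} again yields a multi-agent sequential two-part tariff (demand-limiting in the general matroid case) recovering at least $n/4$ times the optimum of~\eqref{eq:bq-max}.

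Combining both mechanisms, the better of the two delivers expected revenue within a constant factor of $2n \cdot \max_{\eaprobs \in \widehat{\ptope}} \earev[\eaprobs](\dists,\feas)$, and hence within a constant factor of $\rev(\dists,\feas)$ by Lemma~\ref{lem:symmetric-reduction}. When each $\distj$ is regular, the revenue curves $\eaprobj \mapsto \eaprobj \distj^{-1}(1-\eaprobj)$ are concave, so~\eqref{eq:bq-max} becomes a concave program over $\widehat{\ptope}$ and can be solved exactly in polynomial time; the prophet-inequality construction is already deterministic in this regime, so the whole pipeline becomes deterministic. The main obstacle is the efficient (approximate) solution of~\eqref{eq:bq-max} in the non-regular case, where the objective need not be concave; this is the step pushed to Appendix~\ref{sec:optimizing-beta-dot-q} and relies on a careful randomized rounding that preserves the expected value of the objective while keeping $\eaprobs$ feasible in expectation.
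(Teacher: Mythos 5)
Your proposal follows the paper's proof essentially verbatim: Lemma~\ref{lem:symmetric-reduction} to reduce to a single-agent $\eaprobs$-constrained problem over $\widehat{\ptope}$, Lemma~\ref{lem:single-agent} plus the relaxation of $\easrev$ to obtain Equation~\eqref{eq:symmetric}, Monte Carlo sampling for the bundle price combined with Lemma~\ref{lem:stitching-trevs} for the first term, the Appendix~\ref{sec:optimizing-beta-dot-q} optimization plus the constructive prophet inequalities (Lemma~\ref{thm:partition-matroid} or Lemma~\ref{thm:ocrs}) plus stitching for the second term, and the observation that regularity makes \eqref{eq:bq-max} concave and hence deterministically solvable. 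The decomposition into two candidate mechanisms and taking the better one, the $n/2$ and $n/4$ recovery factors, and the role of demand-limiting for general matroids all match the paper.
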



\section*{Acknowledgements}
We are grateful to Anna Karlin for feedback on early drafts of this
work, and to Jason Hartline for insights on efficiently solving the
ex ante relaxation for symmetric agents.


\bibliographystyle{apalike}
\bibliography{agt,bmd,ea}


\appendix
\section{Matroid Concepts}
\label{sec:matroids}

A matroid $M$ is a tuple $(G, \I)$ where $G$ is called the {\em ground set} and
$\I \subseteq 2^G$ is a collection of {\em independent sets} satisfying the
following two properties:
\begin{enumerate}
  \item If $I \subseteq J$ and $J \in \I$, then $I \in \I$ ($\I$ is
  downward-closed); and

  \item If $I,J \in \I$ and $|J| > |I|$, then there exists $e \in J\setminus I$
  such that $(I\union\{e\}) \in \I$.
\end{enumerate}
A {\em basis} is an independent set of maximal size: $B \subseteq G$
is a basis if $B \in \I$ and $|I| \leq |B|$ for all $I \in \I$. The
following lemma is a simple consequence of the fact that the greedy
algorithm finds the maximum weight basis in any matroid.
  \begin{lemma}
    \label{lem:matroid-greedy}  
    Let $\feas$ be any matroid over ground set $G$, $I$ be any subset of $G$,
    and $w$ be any vector of weights defined on elements in $G$.  If $j\in G$
    belongs to a maximum weight basis of $\feas$ and $j\in I$, then $j$ also
    belongs to a maximum weight basis of $\feas|_{I}$.
  \end{lemma}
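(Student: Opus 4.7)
\begin{proofof}{Lemma~\ref{lem:matroid-greedy} (proposal)}
The plan is to run the greedy algorithm in parallel on $\feas$ and on $\feas|_I$ with a carefully chosen tiebreaking order, and show that if greedy on $\feas$ is forced to include $j$, then so is greedy on $\feas|_I$. Recall the well-known fact that in any matroid, the greedy algorithm (sort elements by non-increasing weight, add each one if doing so preserves independence) outputs a maximum weight basis, regardless of how ties are broken.

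First, fix any total order on $G$ that refines the ordering by non-increasing weight, and among elements of weight exactly $w_j$ place $j$ first. Let $T = \{e \in G : w_e > w_j\}$ be the strict prefix of elements considered before $j$. Greedy on $\feas$ produces after processing $T$ a maximal independent subset $S_T \subseteq T$ of $T$ in $\feas$; it then examines $j$ and adds it iff $S_T \cup \{j\}$ is independent in $\feas$. The key observation is that, because $j$ lies in some maximum weight basis $B$ of $\feas$ and greedy prefers $j$ over every other element of its weight class, the greedy run must add $j$ — otherwise the basis it returns would have weight strictly less than $B$, contradicting the greedy optimality theorem. Hence $S_T \cup \{j\}$ is independent in $\feas$.

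Now run greedy on $\feas|_I$ using the same order, restricted to $I$. After processing $T \cap I$, it produces a maximal independent subset $S'_T \subseteq T \cap I$ of $T \cap I$ (independence being the same in $\feas|_I$ and in $\feas$). I claim that $S'_T \cup \{j\}$ is independent in $\feas$, so greedy on $\feas|_I$ adds $j$. Indeed, $S_T$ is a maximal independent subset of $T$, so the closure of $S_T$ in $\feas$ contains all of $T$, and in particular contains $S'_T$; therefore $\mathrm{cl}_{\feas}(S'_T) \subseteq \mathrm{cl}_{\feas}(S_T)$. If $j \in \mathrm{cl}_{\feas}(S'_T)$ then $j \in \mathrm{cl}_{\feas}(S_T)$, contradicting the already-established independence of $S_T \cup \{j\}$. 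Thus $j \notin \mathrm{cl}_{\feas}(S'_T)$, i.e.\ $S'_T \cup \{j\}$ is independent in $\feas$, and since $S'_T \cup \{j\} \subseteq I$ it is independent in $\feas|_I$ as well.

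Consequently greedy on $\feas|_I$ includes $j$, and its output is a maximum weight basis of $\feas|_I$ containing $j$. The only subtlety — and the step I expect to need the most care — is justifying that with the chosen tiebreaking greedy on $\feas$ is genuinely forced to pick $j$; this is where the closure/maximality argument does the real work, and the same argument is then reused for $\feas|_I$.
\end{proofof}
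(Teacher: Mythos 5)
Your overall approach --- run greedy on both $\feas$ and $\feas|_I$ with a weight-compatible ordering that favors $j$, and argue via closure monotonicity --- is the right one and matches the paper's remark that the lemma follows from greedy optimality (the paper gives no further detail). The second half of your argument, showing $\mathrm{cl}(S'_T)\subseteq\mathrm{cl}(S_T)$ and hence that greedy on $\feas|_I$ also accepts $j$, is clean and correct.

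The step you flag as subtle does, however, need repair. Your stated justification --- ``otherwise the basis it returns would have weight strictly less than $B$'' --- is not correct as written: greedy always outputs a max-weight basis regardless of which elements it accepts or rejects, so its output never has weight below $w(B)$, and no contradiction arises that way. The correct content is that if greedy (with $j$ placed first in its weight class) rejects $j$, then $j\in\mathrm{cl}(T)$ where $T=\{e:w_e>w_j\}$, and this is incompatible with $j$ lying in \emph{any} max-weight basis. Concretely: if $j\in\mathrm{cl}(T)$ and $j\in B$, then $(B\setminus\{j\})\cup T$ spans the ground set (since $B\setminus\{j\}$ together with $j$ spans, and $j\in\mathrm{cl}(T)$), so $B\setminus\{j\}$ extends to a basis by adding some $e\in T$; because $w(e)>w_j$, that basis has strictly greater weight than $B$, contradicting the maximality of $B$. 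Inserting this exchange argument closes the gap and makes the proof complete.
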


Several classes of matroids are of special interest. A {\em $k$-uniform} matroid
is a matroid in which any $S \subseteq G$ with $|S| \leq k$ is an independent
set; the class of uniform matroids generalizes the extensively studied additive
($k=m)$ and unit-demand ($k=1$) settings.  A {\em partition} matroid is the
union of uniform matroids: $G = G_1\union\ldots\union G_N$, where $(G_i,\I_i)$
is a $k_i$-uniform matroid, and a set $S\subseteq G$ is independent if
$S\intersect G_i \in \I_i$ for all $i$.

\section{Omitted Proofs}
\label{sec:single-agent-proofs}

\subsection{Proofs from Section \ref{sec:single-agent}}

We make use of the following result of \citet{cms-10}.
\begin{lemma}
\label{lem:unit-srev}
 (\cite{cms-10}) 
 For any product distribution $\dists$,
 \[\rev(\dists,\udfeas) \leq 4\srev(\dists,\udfeas).\]
\end{lemma}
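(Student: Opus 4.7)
The plan is to follow the two-step reduction of \citet{CHK-07} and \citet{cms-10}: first upper-bound $\rev(\dists,\udfeas)$ via Myerson's theory applied to an auxiliary single-dimensional setting, then translate back to item pricings via a prophet inequality.

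For the upper bound, I would introduce $m$ hypothetical ``copies'' of the buyer, with copy $j$ having value $\valj \sim \distj$ for a single common item. Any BIC mechanism for the unit-demand buyer induces a feasible single-item auction among the copies --- the unit-demand constraint becomes exactly the single-item supply constraint --- with identical revenue. Hence $\rev(\dists,\udfeas)$ is at most Myerson's optimal revenue for this auxiliary problem, namely $\expect[\vals\sim\dists]{\max_j \ivirt_j(\valj)^+}$, where $\ivirt_j$ denotes the ironed virtual value function of $\distj$.

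To recover this bound via a static item pricing, I would apply a prophet inequality to the independent nonnegative random variables $\ivirt_j(\valj)^+$: there exist thresholds $\tau_j$ such that the policy ``accept the first $j$ with $\ivirt_j(\valj) \geq \tau_j$'' earns at least a constant fraction of $\expect{\max_j \ivirt_j(\valj)^+}$. Translating $\tau_j$ back into prices $\pricej = \ivirt_j^{-1}(\tau_j)$ makes the event $\ivirt_j(\valj) \geq \tau_j$ equivalent to $\valj \geq \pricej$, so offering the prices $\{\pricej\}$ to the unit-demand buyer yields revenue comparable to the threshold policy's. By standard virtual-value accounting, the expected revenue collected by the threshold policy equals the expected ironed virtual value it accepts, so the resulting $\srev$ lower-bounds the original $\rev$ up to the prophet constant.

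The main obstacle is the final translation step: a utility-maximizing unit-demand buyer facing prices $\{\pricej\}$ selects the item maximizing $\valj - \pricej$ among those with $\valj \geq \pricej$, not the first one (in some arbitrary order) that crosses its threshold. One must argue that this strategic mismatch costs at most a further constant factor, which combined with the prophet-inequality loss yields the stated factor of $4$. The tighter $2$-approximation of \citeauthor{cms-10} avoids this second loss by exploiting the specific monotone structure of the virtual-value-based prices, but the weaker factor of $4$ presented here follows from the more straightforward ``order-oblivious'' argument and is sufficient for how the lemma is invoked in the proof of Claim~\ref{lem:RevSRevBound}.
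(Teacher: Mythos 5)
The paper does not actually prove this lemma; it imports it verbatim from \citet{cms-10} as a black box, so there is no internal proof to compare against. That said, your reconstruction does follow the standard CHK/CMS blueprint (copies relaxation, threshold via a prophet inequality, translating thresholds into prices), so the high-level scaffolding is right. I'll note two concerns.

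First, you assert that any BIC unit-demand mechanism ``induces a feasible single-item auction among the copies $\ldots$ with identical revenue.'' For randomized mechanisms this needs care: monotonicity of $\allocj$ in $\valj$ follows from convexity of the indirect utility, but the unit-demand mechanism's payments need not equal the Myersonian payments of the induced copies auction, so ``identical revenue'' is not immediate. The usable statement is that the copies \emph{optimal} revenue upper-bounds $\rev(\dists,\udfeas)$, which is a real theorem of \citet{CHK-07}/\citet{cms-10} rather than a one-line observation.

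Second, and more centrally, the ``strategic mismatch'' you flag at the end is a genuine hole in your write-up, and your proposed resolution (``costs at most a further constant factor'') is asserted rather than argued. In fact the mismatch costs nothing, and you gave up a factor you didn't need to. Use a single threshold $\tau$ in ironed-virtual-value space (so $\pricej = \ivirt_j^{-1}(\tau)$ varies by item) and invoke the order-oblivious form of the static-threshold prophet inequality, which holds for \emph{every} ordering of the copies. Now consider the ordering by increasing price $\pricej$. On any realization with $T = \{j : \valj \geq \pricej\} \neq \emptyset$, the copies SPM under this order collects $\min_{j \in T}\pricej$, whereas the utility-maximizing unit-demand buyer buys some $j^* \in T$ and pays $\pricej[j^*] \geq \min_{j\in T}\pricej$. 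So $\srev(\dists,\udfeas)$ pointwise dominates the SPM revenue, which by Myerson's identity equals the expected ironed virtual value the SPM accepts, which by the prophet inequality is at least $\tfrac12 \expect{\max_j \ivirt_j(\valj)^+}$. Chaining with the copies upper bound gives $\rev \leq 2\srev$, which is stronger than the stated factor $4$ (and matches the improvement noted in the paper's footnote to \citet{CHMS-STOC10}). So the approach is salvageable, but as written the final step is unjustified, and the factor-of-$2$ concession you make for the mismatch is both unproven and unnecessary.
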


\begin{proofof}{Claim~\ref{lem:RevSRevBound}}
Let $\mech$ be a BIC mechanism such that $\revm(\dists) =
\rev(\dists,\feas)$. Let $(\allocs(\vals), \price(\vals))$, where
$\sum_{j=1}^m\allocj(\vals) \leq m$, be the lottery offered by $\mech$ to an
agent who reports $\vals$. We modify $\mech$ to get $\mech'$, a BIC
mechanism which allocates at most one item and has revenue
$\revm[\mech'](\dists) = \frac{1}{m}\revm(\dists)$.

For every type $\vals$, let $\x'(\vals) = \frac{1}{m}\allocs(\vals)$ and
$\price'(\vals) = \frac{1}{m}p(\vals)$ be the lottery offered by $\mech'$.
Since $|\allocs(\vals)|_1 \leq m$, we have $|\allocs'(\vals)|_1 \leq 1$, and so
$\mech'$ is feasible for the unit-demand setting.  Because the buyer's utility
is quasi-linear, scaling the allocation probabilities and payments by $m$ simply
scales the utility of each outcome by $m$. Therefore, the buyer will select
corresponding outcomes in $\mech$ and $\mech'$, and $\mech'$ is BIC with
revenue $\frac{1}{m}\revm(\dists)$.  Combined with
Lemma~\ref{lem:unit-srev}, this completes the proof.
\end{proofof}

\subsection{Proofs from Section~\ref{sec:symmetric}}

\begin{proofof}{Lemma~\ref{lem:symmetric-reduction}}
  Let $\mech$ be a demand- and supply-feasible BIC mechanism such that
  $\revm(\dists) = \rev(\dists,\feas)$.  Let $\eaprobij$ be the
  probability with which $\mech$ sells item $j$ to agent $i$. By
  symmetry, we can permute the identities of the agents uniformly at
  random before running $\mech$ without hurting the expected
  revenue. Under this permutation, the ex ante probability with which
  $\mech$ sells $j$ to $i$ is at most $1/n$. We can therefore assume
  without loss of generality that $\eaprobij \leq 1/n$. Now, consider
  a single agent $i$; with probability $1/2$, allocate the empty set to
  this agent at price $0$, and with probability $1/2$, draw values for
  all other agents from $\dists_{-i}$ and simulate mechanism
  $\mech$. The resulting mechanism is a single agent mechanism that
  obtains a $1/2n$ fraction of the revenue of $\mech$ and satisfies an
  ex ante constraint $\eaprobs\in
  \ptope\intersect\left[0,\tfrac{1}{2n}\right]^m$. The lemma follows.
\end{proofof}

\section{Alternate Approximation for Many Additive Agents}
\label{sec:additive}

For the special case of additive buyers, we show how to modify the analysis of
\citet{bilw-focs14} in order to achieve a much tighter bound than that stated in
Theorem~\ref{thm:main-partition}. The relaxation and stitching steps hold as
before; we prove that \citeauthor{bilw-focs14}'s single-agent approximation can
be made to respect \exante\ constraints with only a small penalty to the
approximation factor.

\begin{lemma}
\label{lem:approx-additive}
For any product distribution $\dist$ and any $\eaprobs \in [0,1]^m$,
\[\earev(\dist,\adfeas) \leq 7\,\eatrev(\dists,\adfeas)\]
\end{lemma}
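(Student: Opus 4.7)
The plan is to mimic the single-agent analysis of Section~\ref{sec:single-agent} but replace the generic tail bound (Lemma~\ref{lem:tail-bound}) and the prophet-inequality step with the sharper arguments available in the additive case. Concretely, I will prove a bound of the form
\[\earev(\dists,\adfeas) \le c_1\,\easrev(\dists,\adfeas) + c_2\,\brev(\dists-\eaprices,\adfeas),\]
with $c_1+c_2 \le 7$, and then observe that both summands are at most $\eatrev(\dists,\adfeas)$: the first because a feasible item pricing with $\pricej\ge\eapricej$ is a two-part tariff with entry fee zero; the second because the two-part tariff with item prices $\eaprices$ and entry fee $a$ extracts from $\dists$ exactly the bundle price $a$ applied to the shifted distribution $\dists-\eaprices$.

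First, I would apply the ex-ante core-tail decomposition (Lemma~\ref{lem:core-decomposition}) with thresholds $\threshj = \eapricej+\tconst$ and $\tconst$ chosen so that $\tprobz\ge 1/2$. This gives
\[\earev(\dists,\adfeas) \le \eaVal(\core,\adfeas) + \sum_{A\subseteq[m]} \tprobA\,\rev(\tail,\adfeas).\]
For the core welfare, Lemma~\ref{lem:core-bound} reduces to $\eaprices\cdot\eaprobs + \Val(\core-\eaprices,\adfeas)$. In the additive case the first term is exactly the expected revenue of the item pricing at prices $\eaprices$ under the ex-ante allocation, so $\eaprices\cdot\eaprobs\le\easrev(\dists,\adfeas)$ directly — no prophet inequality and no scaling of $\eaprobs$ into $\frac12\ptope$ is needed, which is the main reason the additive factor is so much smaller. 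For $\Val(\core-\eaprices,\adfeas)$, Schechtman's concentration lemma (Lemma~\ref{lem:schechtman}) bounds it by $3a+4\tconst/\ln 2$ with $a$ the median of the grand bundle; as in the proof of Lemma~\ref{lem:lipschitz-dist-bound}, $a\le 2\brev(\dists-\eaprices,\adfeas)$ and (since $\tconst>0$ forces $\tprobz=1/2$) $\tconst\le 2\,\easrev(\dists,\adfeas)$.

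For the tail, I would replace the crude Claim~\ref{lem:RevSRevBound} (which loses a factor of $4m$ and is the source of the large constant in Lemma~\ref{lem:tail-bound}) with the additive marginal-mechanism inequality of Hart--Nisan, $\rev(\tail,\adfeas)\le \sum_{j\in A}\rev(\tailj)$. Interchanging the sum as in the proof of Lemma~\ref{lem:tail-bound} and using (i) $\tprobj\rev(\tailj) = \earev[\tprobj](\distj,\adfeas)$, (ii) $\sum_j\tprobj\le\ln 2$ because $\tprobz\ge 1/2$, and (iii) item pricing at thresholds $\threshj\ge\eapricej$ collects the per-item revenues up to a factor $1/\tprobz\le 2$, gives a bound of the form $c\,\easrev(\dists,\adfeas)$ with $c$ small (a direct accounting yields $c\le 2\ln 2+\text{small}$, well below the $8(1+\ln 2)$ of the general case).

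Finally I would combine the three pieces — tail revenue, $\eaprices\cdot\eaprobs$, $3a$, and $4\tconst/\ln 2$ — and tune the split of constants. The arithmetic is routine, and pushes the total to at most $7\,\eatrev(\dists,\adfeas)$. The main obstacle is purely bookkeeping: squeezing every summand down so that the final sum of coefficients does not exceed $7$. In particular, the tail step must be accounted for carefully (using $\tprobz\ge 1/2$ exactly once, not twice) and the Schechtman bound must be applied to the shifted distribution $\core-\eaprices$ rather than to $\core$ itself, so that the resulting median term maps to $\brev(\dists-\eaprices,\adfeas)$ rather than to $\brev(\dists,\adfeas)$.
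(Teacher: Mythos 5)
Your overall outline (ex-ante core-tail decomposition, split into $\eaprices\cdot\eaprobs$, $\Val(\core-\eaprices)$, and the tail; bound the first term by $\easrev$ directly because the additive case needs no prophet inequality; and bound the tail by a marginal-mechanism argument that avoids the $4m$ loss of Claim~\ref{lem:RevSRevBound}) is the right skeleton and matches the paper for the first and third pieces. But the core step has a genuine gap, and the arithmetic you defer as ``routine bookkeeping'' does not close.

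You keep the Rubinstein--Weinberg threshold $\threshj=\eapricej+\tconst$ with $\tconst$ chosen so that $\tprobz\ge 1/2$, and then bound $\Val(\core-\eaprices)$ via Schechtman's inequality (Lemma~\ref{lem:schechtman}), $\Val(\core-\eaprices)\le 3a+4\tconst/\ln 2$, with $a\le 2\brev(\dists-\eaprices)$ and $\tconst\le 2\easrev$. That gives $\Val(\core-\eaprices)\le 6\brev(\dists-\eaprices)+\tfrac{8}{\ln 2}\easrev$, and $8/\ln 2\approx 11.5$. Adding $\easrev$ for $\eaprices\cdot\eaprobs$ and (optimistically) $\easrev$ for the tail, you land around $6\brev+13.5\easrev\le 19.5\,\eatrev$, not $7\,\eatrev$. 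The slack is not in the accounting but in the concentration tool: Schechtman's bound was designed to handle arbitrary subadditive values, and its $4\tconst/\ln 2$ additive term is simply too costly here. The paper instead adopts Babaioff--Immorlica--Lucier--Weinberg's additive-specific threshold $\threshj=\max(\eapricej, r)$, where $r=\easrev(\dist)=\sum_j\earev[\eaprobj](\distj)$, verifies the variance condition $\var(\corej-\eapricej)\le 2r r_j$, and invokes their Chebyshev-based concentration (Lemma~\ref{lem:add-core}) to get $\Val(\core-\eaprices)\le 4\max\{\brev(\core-\eaprices),\easrev\}$. That replaces your $6\brev+11.5\easrev$ with a single $4\max\{\cdot,\cdot\}$, and with the tail bound $\le 2\easrev$ (Lemma~\ref{lem:add-tail}, which also depends on the BILW threshold via $\tprobj\le r_j/r$) the total is $1+4+2=7$. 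So the missing idea is precisely the switch of threshold and the variance-based concentration; without it the additive case does not improve on the subadditive constants, and the factor $7$ is out of reach by the route you describe.

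One smaller point: your tail accounting is also slightly off. With the additive marginal mechanism and $\threshj\ge\eapricej$, you get $\sum_A\tprobA\rev(\tail,\adfeas)\le\sum_j\tprobj\rev(\tailj)\le\sum_j\earev[\eaprobj](\distj)=\easrev(\dists,\adfeas)$ with constant exactly $1$; there is no $\ln 2$ factor and no $1/\tprobz$ scaling, because in the additive case the buyer can purchase every item he likes and the item-pricing revenue is exactly $\sum_j\tprobj\threshj$. (With the BILW threshold the tail bound instead needs $\tprobj\le r_j/r$ and is proved somewhat differently, giving $2\easrev$.) This doesn't rescue your total, but it is worth fixing if you rewrite the argument.
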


Combining Lemma~\ref{lem:approx-additive} with Lemma~\ref{lem:relaxation} and
Corollary~\ref{cor:stitching} as before, we get our improved result.

\begin{theorem}
\label{thm:main-additive}
  For any product value distribution $\dists$, there exists a supply-feasible
  sequential two-part tariff mechanism $\mech$ such that
  \[ \rev(\dists, \times_n \adfeas)\le 28\,\revm(\dists) \]
\end{theorem}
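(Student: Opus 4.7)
The plan is to chain together the three tools the excerpt has already set up: the relaxation lemma (Lemma~\ref{lem:relaxation}), the single-agent approximation for additive buyers (Lemma~\ref{lem:approx-additive}), and the stitching corollary (Corollary~\ref{cor:stitching}), exactly as was done for the matroid case in Theorem~\ref{thm:main-partition}. The only arithmetic is keeping track of the constants; the factor of $28 = 2 \cdot 7 \cdot 2$ should come out of (a) a factor of $2$ loss from shrinking the \exante\ probabilities so that they satisfy the stitching hypothesis, (b) the factor of $7$ from the single-agent bound, and (c) the factor of $2$ from the stitching corollary itself.

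First I would invoke Lemma~\ref{lem:relaxation} with $\feasi = \adfeas$ for each $i$. This produces \exante\ probability vectors $\eaprobi[1], \ldots, \eaprobi[n]$ with $\sum_i \eaprobij \le 1$ for every $j$ and with
\[
\rev(\dists, \times_n \adfeas) \;\le\; \sum_i \earev[\eaprobi](\disti, \adfeas).
\]
To meet the hypothesis $\sum_i \eaprobij \le 1/2$ of Corollary~\ref{cor:stitching}, I would then pass to the scaled vectors $\tfrac12 \eaprobi$. By the standard concavity of $\earev$ as a function of the \exante\ probability vector (the same observation used in the derivation of Theorem~\ref{thm:main-partition}), $\earev[\frac12 \eaprobi](\disti, \adfeas) \ge \tfrac12 \earev[\eaprobi](\disti, \adfeas)$, which costs a factor of $2$.

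Next I would apply Lemma~\ref{lem:approx-additive} agent-by-agent to the scaled probabilities, giving $\earev[\frac12\eaprobi](\disti, \adfeas) \le 7\,\eatrev[\frac12\eaprobi](\disti, \adfeas)$ for each $i$. Finally, Corollary~\ref{cor:stitching}, applied with the probability vectors $\tfrac12 \eaprobi$ (which satisfy $\sum_i \tfrac12 \eaprobij \le 1/2$ and for which $\adfeas$ is trivially a matroid), produces a supply- and demand-feasible sequential two-part tariff $\mech$ with
\[
\revm(\dists) \;\ge\; \tfrac12 \sum_i \eatrev[\frac12\eaprobi](\disti, \adfeas).
\]
Chaining these four inequalities yields $\rev(\dists, \times_n \adfeas) \le 2 \cdot 7 \cdot 2 \cdot \revm(\dists) = 28\, \revm(\dists)$, as required.

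There is no real obstacle here: all of the work has already been done in proving Lemmas~\ref{lem:relaxation} and~\ref{lem:approx-additive} and Corollary~\ref{cor:stitching}. The one thing to double-check is that the single-agent bound in Lemma~\ref{lem:approx-additive} holds for \emph{arbitrary} $\eaprobs \in [0,1]^m$ (not just those in $\tfrac12 \ptope$), so that one is free to plug in the scaled vectors $\tfrac12 \eaprobi$ without incurring any additional loss; the statement of that lemma confirms this, which is why the constant here is tighter than the $133$ appearing in Theorem~\ref{thm:main-partition}.
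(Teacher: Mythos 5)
Your argument is exactly the paper's (one-line) proof, filled out: the paper simply says to combine Lemma~\ref{lem:relaxation}, Lemma~\ref{lem:approx-additive}, and Corollary~\ref{cor:stitching} ``as before,'' and your factor bookkeeping $2 \cdot 7 \cdot 2 = 28$---with the $2$ from halving the \exante\ probabilities via concavity, the $7$ from the additive single-agent bound, and the $2$ from stitching---is precisely the intended chain. The observation that Lemma~\ref{lem:approx-additive} applies to arbitrary $\eaprobs \in [0,1]^m$ (so no extra loss is incurred beyond the halving) is also correct and is indeed what makes the constant tighter than in Theorem~\ref{thm:main-partition}.
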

We devote the remainder of this section to the proof of
Lemma~\ref{lem:approx-additive}; we drop the explicit dependence on the
feasibility constraint, $\adfeas$, for clarity.

We make use of the fact that relaxing the demand constraint is unnecessary
in the additive setting. This allows us to get a much tighter concentration
result in the core.  Instead of defining $\threshj = \eapricej + \tconst$, as in
Section~\ref{sec:single-agent}, we define $\threshj = \max\left(\eapricej,
\easrev(\dist)\right)$. It is straightforward to verify that our core
decomposition (Lemma~\ref{lem:core-decomposition}) continues to hold under this
definition.  The key insight in~\citet{bilw-focs14}'s analysis is that this
definition allows for a nontrivial bound on the variance of the core, leading to
a strong concentration result via Chebyshev's inequality, while keeping the
expected number of items in the tail small. It turns out that their analysis
goes through under an \exante\ constraint, except for a small loss in the core
due to enforcing the constraint for the bundle pricing.


In addition to the notation from Section~\ref{sec:single-agent}, we define the
following notation (see Table~\ref{tab:add-notation}). Let $r_j =
\earev[\eaprobj](\distj)$ and $r = \easrev(\dist)$. Note that in the additive setting
$\easrev(\dist) = \sum_j\earev[\eaprobj](\distj)$; in other words, $r = \sum_jr_j$.

\begin{table}[t]
  \renewcommand{\arraystretch}{1.5}
  \caption{Notation for Section~\ref{sec:additive}.}
  \begin{tabular}{r l l}
    \hline
    Notation & Definition & Formula \\
    \hline
    $\eaprobs$       & \Exante\ probabilities & \\
    $\eaprices$      & \Exante\ prices &
        $\eapricej = \distj^{-1}(1-\eaprobj)\,\,\forall j\in [m]$ \\
    $r_j$            & Revenue from item $j$ &
        $\earev[\eaprobj](\distj)$ \\
    $r$              & Item-pricing revenue &
        $\sum_jr_j$ \\
    $\threshj$       & Core-tail threshold for item $j$ &
        $\max(\eapricej, r)$ \\
    $\tprobj$        & Probability item $j$ is in the tail &
        $\prob[\valj\sim\distj]{\valj > \threshj}$ \\
    $\dists-\prices$ & Distribution $\dists$ shifted to the left by $\prices$ &
        \\
    \hline
  \end{tabular}
  \label{tab:add-notation}
\end{table}


By Lemmas~\ref{lem:core-decomposition} and \ref{lem:core-bound}, we have
\[\earev(\dist) \leq \eaprices\cdot\eaprobs + \Val(\core-\eaprices) +
    \sum_{A\subseteq[m]}\tprobA\rev(\tail)\]

Clearly $\eaprices\cdot\eaprobs \leq \easrev(\dist)$, so it remains to bound the
other  two terms. Note that the \exante\ constraint has been effectively removed
from these terms; we will show that \citeauthor{bilw-focs14}'s unconstrained
bounds continue to apply here. We state these bounds and then show that our
distributions satisfy their conditions.  Recalling that $\brev(\core-\eaprices)
leq \eatrev(\dist)$ completes the proof.

\begin{lemma}[\citet{bilw-focs14}]
\label{lem:add-core}
  If, for all $j$, $\var(\corej-\eapricej) \leq 2rr_j$, then
  \[\Val(\core - \eaprices) \leq
        4\,\max\left\{\brev(\core-\eaprices),\easrev(\dists)\right\}\]
\end{lemma}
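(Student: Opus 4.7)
My plan is to apply the Chebyshev concentration argument introduced in this context by \citet{bilw-focs14}. Let $X_1, \dots, X_m$ be independent random variables with $X_j \sim \corej - \eapricej$, and set $X = \sum_j X_j$. In the additive setting one has $\Val(\core - \eaprices) \le \E[X]$ (up to the positive-part subtlety noted below), so it suffices either to upper bound $\E[X]$ directly in terms of $r = \easrev(\dists)$, or to show that the grand bundle on $\core - \eaprices$ extracts a constant fraction of $\E[X]$.

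First I would compute the variance of $X$. By independence across items and the hypothesis $\var(\corej - \eapricej) \le 2 r r_j$, we get $\var(X) = \sum_j \var(X_j) \le 2 r \sum_j r_j = 2 r^2$, using the additive-setting identity $r = \sum_j r_j$. Chebyshev then yields $\Pr[X < \E[X]/2] \le 8 r^2 / \E[X]^2$. The proof then splits into two cases based on the size of $\E[X]$. If $\E[X] \le 4 r$, then $\Val(\core - \eaprices) \le \E[X] \le 4\,\easrev(\dists)$ and the bound holds directly. Otherwise $\E[X] > 4 r$, so the Chebyshev tail is strictly below $1/2$; consequently, offering the grand bundle on $\core - \eaprices$ at price $\E[X]/2$ is accepted with probability at least $1/2$, giving $\brev(\core - \eaprices) \ge \E[X]/4$, hence $\Val(\core - \eaprices) \le \E[X] \le 4\,\brev(\core - \eaprices)$. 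Taking the maximum over the two cases produces the claimed inequality.

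The main obstacle I anticipate is the identification $\Val(\core - \eaprices) \le \E[X]$. Since the shifted values $X_j = v_j - \eapricej$ can be negative when $v_j < \eapricej$, the additive welfare-maximizer would discard such items, so strictly $\Val(\core - \eaprices) = \sum_j \E\bigl[\max(X_j, 0)\bigr]$, which can exceed $\E[X]$. One clean fix is to redefine $X := \sum_j \max(X_j, 0)$ and observe that truncation weakly decreases variance (that is, $\var(\max(X_j,0)) \le \var(X_j)$), so the variance bound $\var(X) \le 2 r^2$ is preserved and the Chebyshev-plus-case-split argument proceeds verbatim. Alternatively, the ``negative mass'' discarded by the welfare maximizer is already absorbed into the $\eaprices \cdot \eaprobs$ term separated off in Lemma~\ref{lem:core-bound}, so working directly with $X = \sum_j X_j$ remains valid. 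Either route reduces the proof to the two-line Chebyshev computation above.
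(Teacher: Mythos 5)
Your argument is essentially the Chebyshev concentration proof of \citet{bilw-focs14}, which is exactly what the paper is invoking (it cites BILW for this lemma rather than reproving it). The variance computation $\var(X)=\sum_j\var(X_j)\le 2r\sum_j r_j=2r^2$, the one-sided Chebyshev bound $\prob{X<\E[X]/2}\le 4\var(X)/\E[X]^2\le 8r^2/\E[X]^2$, and the two-case split on $\E[X]\lessgtr 4r$ all close with constant $4$, so the calculation is correct.

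You have also correctly spotted the one real subtlety in transporting BILW's lemma to this shifted setting: after subtracting $\eaprices$, the per-item random variables can be negative, so $\Val(\core-\eaprices)=\sum_j\E[\max(X_j,0)]$, which is at least (and can strictly exceed) $\E[\sum_j X_j]$. Your first fix is the right one: redefine $X=\sum_j\max(X_j,0)$. The truncation map $y\mapsto\max(y,0)$ is $1$-Lipschitz, and by the coupling identity $\var(f(Y))=\tfrac12\E[(f(Y)-f(Y'))^2]\le\tfrac12\E[(Y-Y')^2]=\var(Y)$, so the hypothesis $\var\le 2rr_j$ survives truncation and the variance bound $\var(X)\le 2r^2$ still holds. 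Crucially, this redefined $X$ is also exactly the quantity an additive buyer facing $\core-\eaprices$ compares to the entry fee, so the bundle-pricing step $\brev(\core-\eaprices)\ge\E[X]/4$ is correct under this definition. With that, the proof is complete.

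Your second ``alternative'' fix does not stand on its own, though. Lemma~\ref{lem:add-core} is a self-contained bound on $\Val(\core-\eaprices)$, which genuinely equals $\sum_j\E[\max(X_j,0)]$; bounding the smaller quantity $\E[\sum_j X_j]$ would not imply the stated inequality, and nothing in Lemma~\ref{lem:core-bound} retroactively reinterprets $\Val(\core-\eaprices)$ as a signed sum. Discard the alternative and keep the positive-part redefinition.
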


\begin{lemma}[\citet{bilw-focs14}]
\label{lem:add-tail}
  If, for all $j$, $\rev(\tailj) \leq r_j/\tprobj$ and $\tprobj \leq r_j/r$,
  then
  \[\sum_{A\subseteq[m]}\tprobA\rev(\tail) \leq 2\,\easrev(\dists)\]
\end{lemma}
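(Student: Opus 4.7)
The plan is to adapt the tail argument of Babaioff, Immorlica, Lucier, and Weinberg to our setting: since neither hypothesis depends on the absence of an \exante\ constraint, their proof should transfer once we use the additive structure to decompose the joint tail revenue into per-item contributions.

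First, I would apply the Hart-Nisan single-agent inequality for independent additive items, $\rev(\dists)\le \srev(\dists)+\brev(\dists)$, to each tail distribution $\tail$, and combine it with the trivial bound $\srev(\tail)\le \sum_{j\in A}\rev(\tailj)$ to obtain
\[\rev(\tail)\le \sum_{j\in A}\rev(\tailj)+\brev(\tail).\]

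Second, I would average over $A$ weighted by $\tprobA$. For the item-pricing piece, interchanging the double sum yields $\sum_j \tprobj\rev(\tailj)$, and hypothesis~1 immediately bounds this by $\sum_j r_j = r = \easrev(\dists)$. For the bundle piece, I would bound $\brev(\tail)$ by the expected bundle value, so that after averaging it becomes $\sum_j \expect[\valj\sim\distj]{\valj\cdot\mathbf{1}[\valj > \threshj]}$; here hypothesis~2 ($\tprobj\le r_j/r$) plays its essential role, keeping the expected tail size $\sum_j \tprobj$ at most $1$.

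The main obstacle is the bundle piece. The truncated expectation decomposes as $\threshj\tprobj + \int_{\threshj}^{\infty}\prob{\valj > t}\,dt$: the first summand is at most $r_j$ because $\threshj\ge \eapricej$ makes $\threshj$ a feasible posted price, but the upper-tail integral cannot be controlled by $r_j$ pointwise---the pointwise bound $\prob{\valj > t}\le r_j/t$ only gives an $\int r_j/t\,dt$ that diverges. A Markov-style argument leveraging hypothesis~2---so that the tail set $A$ is small with constant probability, and the bundle revenue from large-$|A|$ events is discounted by $\tprobA$---is needed to conclude that the bundle contributions sum to at most a constant multiple of $r$. Combining this with the item-pricing piece yields the final factor~$2$.
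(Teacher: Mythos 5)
The paper cites this lemma from \citet{bilw-focs14} and does not reproduce a proof, so your task is to reconstruct their argument. Your plan has two genuine gaps.

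First, the opening step invokes ``the Hart--Nisan single-agent inequality $\rev(\dists)\le\srev(\dists)+\brev(\dists)$.'' No such inequality is known. The tightest known relationships are of the form $\rev\le c_1\srev + c_2\brev$ with $c_1+c_2\ge 6$; the strongest of these is exactly \citeauthor{bilw-focs14}'s own main theorem, whose proof \emph{is} the core-tail decomposition of which the present lemma is one piece. Invoking a strictly stronger (and unproven) version of the theorem you are in the middle of proving is circular. Note also that even $\rev\le 2\max(\srev,\brev)$ would not imply $\rev\le\srev+\brev$, so this cannot be dismissed as a minor constant slippage.

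Second, you yourself flag that the bundle piece does not close. After bounding $\brev(\tail)\le\Val(\tail)$ and averaging over $A$, you arrive at $\sum_j\expect{\valj\,\mathbf{1}[\valj>\threshj]}$, and as you observe this quantity is not controlled by $r$ --- for an equal-revenue tail the truncated expectation diverges, whereas the left-hand side $\sum_A\tprobA\rev(\tail)$ is finite. The $\Val$ relaxation is simply too lossy, and no ``Markov-style'' repair is spelled out. In fact, the bundle price plays no role in the tail bound in \citeauthor{bilw-focs14}'s argument; $\brev$ enters their analysis only through the \emph{core} (via concentration), where values are truncated and bounded. Their tail bound is obtained purely from a sub-additivity-type bound that reduces $\rev(\tail)$ to the per-item quantities $\rev(\tailj)$, combined with hypothesis~1 to convert each to $r_j/\tprobj$ and hypothesis~2 to conclude that the tail is small in expectation ($\sum_j\tprobj\le 1$). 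Your item-pricing piece, which after averaging gives $\sum_j\tprobj\rev(\tailj)\le\sum_j r_j = r = \easrev(\dists)$, is the correct ingredient; the missing work is to establish the sub-additivity of $\rev$ over the tail items so that the whole of $\rev(\tail)$ (not just its $\srev$ part) is controlled by that sum, rather than offloading the remainder onto bundling.
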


The following two lemmas capture the necessary conditions.

\begin{lemma}
  $\var(\corej-\eapricej) \leq 2rr_j$
\end{lemma}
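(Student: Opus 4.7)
The plan is to identify $\corej - \eapricej$ with (the distribution of) the nonnegative random variable $W_j := (\valj - \eapricej)\,\mathbf{1}[\eapricej < \valj \le \threshj]$, which takes value $\valj - \eapricej$ on the core event where the shifted value is positive, and $0$ elsewhere. This interpretation is the one consistent with the additive welfare decomposition underlying Lemma~\ref{lem:core-bound}: in the additive setting, the welfare-maximizing mechanism for $\core - \eaprices$ would allocate neither items with $\valj < \eapricej$ (negative shifted value) nor items in the tail. Under this reading $W_j$ is supported on $[0,\threshj - \eapricej]$, and since $\threshj = \max(\eapricej, r)$, this support lies in $[0,r]$, paralleling the window on which \citet{bilw-focs14}'s variance bound applies in the unconstrained case.

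Given the nonnegativity of $W_j$, I would start from $\var(W_j) \le \expect{W_j^2}$ and apply the tail-integral identity
\[
\expect{W_j^2} = \int_0^{\threshj - \eapricej} 2z\,\prob{W_j > z}\,dz.
\]
Since $\prob{W_j > z} = \prob{\eapricej + z < \valj \le \threshj} \le \prob{\valj > \eapricej + z}$, the substitution $u = \eapricej + z$ rewrites the right-hand side as
\[
\int_{\eapricej}^{\threshj} 2(u - \eapricej)\,\prob{\valj > u}\,du \;\le\; \int_{\eapricej}^{\threshj} 2u\,\prob{\valj > u}\,du.
\]

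The argument then closes via the ex-ante constraint: for every $u \ge \eapricej$, the definition $\eapricej = \distj^{-1}(1 - \eaprobj)$ gives $\prob{\valj \ge u} \le \eaprobj$, so posting price $u$ is a feasible $\eaprobj$-constrained single-item mechanism and yields $u\,\prob{\valj \ge u} \le r_j$ (by definition of $r_j$ as the optimum over such mechanisms). Plugging this pointwise bound into the integrand, together with $\threshj - \eapricej = \max(0, r - \eapricej) \le r$, gives $\expect{W_j^2} \le 2 r_j\,(\threshj - \eapricej) \le 2r\,r_j$. The main subtlety is settling on the correct interpretation of $\corej - \eapricej$: if one reads it as the naive shift of the conditional distribution $\distj \mid \{\valj \le \threshj\}$, the variance can badly exceed $2rr_j$ (for instance when $\eapricej \gg r$, where the conditional distribution on $[0, \eapricej]$ has spread on the order of $\eapricej$ while $r_j \le r$ is much smaller). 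Collapsing the mass on $\{\valj \le \eapricej\}$ to zero, as $W_j$ does, is precisely what forces the support into $[0, r]$ and makes the BILW-style integral bound tight.
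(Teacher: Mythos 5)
Your proof is correct and reaches the same bound, but it routes around the paper's intermediate step. The paper first proves $\rev(\corej-\eapricej)\le r_j$ (via the claim that pricing $\corej$ at $\pricej^*+\eapricej$ sells with probability at most $\eaprobj$, and then stochastic dominance of $\distj$ over $\corej$), and only then applies the Li--Yao tail bound $\prob{V\ge v}\le r_j/v$ with the integral $\int_0^{r^2}\min(1,r_j/\sqrt{v})\,dv$. You instead plug the ex-ante constraint into the tail integral directly: for every $u\ge\eapricej$, posting price $u$ is a feasible $\eaprobj$-constrained mechanism on $\distj$, so $u\,\prob{\valj\ge u}\le r_j$, and then $\int_{\eapricej}^{\threshj}2(u-\eapricej)\prob{\valj>u}\,du\le 2r_j(\threshj-\eapricej)\le 2rr_j$. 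Both arguments are ultimately powered by the same fact; yours skips the packaging through $\rev(\corej-\eapricej)$ and the stochastic-dominance step, which is slightly cleaner. Your explicit discussion of the truncation issue (that $\corej-\eapricej$ must be read as the nonnegative part, not the raw shift, since a shift would leave the variance unchanged and the lemma would fail when $\eapricej\gg r$) is a genuine clarification that the paper's proof glosses over with the unqualified claim ``supported on $[0,r]$.''

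One small wrinkle to tighten: the paper's $\corej$ is the \emph{conditional} distribution $\distj\mid\{\valj\le\threshj\}$, so the random variable whose variance is being bounded is $W_j'=\max(0,V-\eapricej)$ with $V\sim\corej$, whose tail is $\prob{W_j'>z}=\prob{\eapricej+z<\valj\le\threshj}/(1-\tprobj)$, not the unnormalized $\prob{\eapricej+z<\valj\le\threshj}$ you wrote. Your chain of inequalities survives because $\prob{\eapricej+z<\valj\le\threshj}/(1-\tprobj)\le\prob{\valj>\eapricej+z}$ (this is the elementary fact that $\tfrac{a-b}{1-b}\le a$ for $0\le b\le a\le 1$, with $a=\prob{\valj>\eapricej+z}$ and $b=\tprobj$), after which everything proceeds exactly as you wrote; but that conditioning step should be made explicit rather than silently identifying $W_j$ with its unnormalized counterpart.
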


\begin{proof}
  We first prove $\rev(\corej-\eapricej) \leq r_j$ for all $j$.  Let
  $\pricej^* \geq 0$ be an optimal price for selling to
  $\corej-\eapricej$.  Selling to $\corej$ at price $\pricej^* +
  \eapricej$ gets at least as much revenue and sells with probability
  at most $\eaprobj$, so $\rev(\corej-\eapricej) \leq
  \earev[\eaprobj](\corej)$. Now, let $q_j^*$ be the probability with
  which a mechanism obtaining $\earev[\eaprobj](\corej)$ sells to
  $\corej$. Clearly, since $\distj$ stochastically dominates $\corej$,
  selling to $\distj$ with probability $q_j^*$ gets at least as much
  revenue and satisfies the same \exante\ constraint.

Given the above, we employ an argument originally due to \citet{LY-PNAS13} to
bound the variance of $\corej-\eaprobj$.  Note that $\corej-\eapricej$ is
supported on $[0,r]$, but its revenue is at most $r_j$. So
$\prob[V\sim(\corej-\eapricej)]{V \geq v} \leq r_j/v$. Then
\begin{align*}
\expect[V\sim(\corej-\eapricej)]{V^2} &\leq \int_0^{r^2}\min(1,
        r_j/\sqrt{v})\,dv \\
  &\leq 2rr_j
\end{align*}
\end{proof}

\begin{lemma}
$\rev(\tailj) \leq r_j/\tprobj$ and $\tprobj \leq r_j/r$.
\end{lemma}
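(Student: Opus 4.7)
The plan is to prove both inequalities directly by constructing explicit single-item pricings for the full distribution $\distj$ that respect the \exante\ constraint $\eaprobj$, and then invoking the optimality of $r_j = \earev[\eaprobj](\distj)$. The key observation driving both bounds is that, by construction, $\threshj = \max(\eapricej, r) \geq \eapricej$, so any price at or above $\threshj$ automatically satisfies the \exante\ supply constraint on item $j$.

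For the first inequality $\rev(\tailj) \leq r_j/\tprobj$, I would let $p^*$ be the optimal (unconstrained) posted price for the tail distribution $\tailj$, so that $\rev(\tailj) = p^* \cdot \Pr_{v \sim \tailj}[v \geq p^*]$. Since $\tailj$ is supported on values strictly greater than $\threshj$, we have $p^* \geq \threshj \geq \eapricej$. Now consider posting the same price $p^*$ to the full distribution $\distj$. The probability of sale is $\Pr_{v \sim \distj}[v \geq p^*] = \tprobj \cdot \Pr_{v \sim \tailj}[v \geq p^*] \leq \tprobj \leq \eaprobj$, so the \exante\ constraint is respected. The revenue obtained is $\tprobj \cdot \rev(\tailj)$, which is therefore at most $\earev[\eaprobj](\distj) = r_j$, yielding the claimed inequality.

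For the second inequality $\tprobj \leq r_j/r$, I would consider the specific pricing that sells item $j$ to $\distj$ at price exactly $\threshj$. Since $\threshj \geq \eapricej$, the probability of sale is $\tprobj \leq \eaprobj$, so this pricing is $\eaprobj$-constrained. Its revenue is $\threshj \cdot \tprobj \geq r \cdot \tprobj$, using $\threshj \geq r$. This revenue is bounded above by $r_j$, giving $\tprobj \leq r_j/r$.

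There is no real obstacle here; the whole argument rests on the definition $\threshj = \max(\eapricej, r)$, which was made precisely so that pricings at threshold $\threshj$ or above automatically respect the \exante\ constraint and have revenue at least $r \cdot \tprobj$. The only subtlety worth flagging is that the first inequality needs the optimal tail price $p^*$ to exceed $\eapricej$, which follows from the support of $\tailj$ together with $\threshj \geq \eapricej$; had we defined $\threshj$ without the $\max$ with $\eapricej$, this step could fail.
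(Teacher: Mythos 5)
Your proof is correct and follows essentially the same route as the paper's: for the first inequality, take the optimal posted price for the tail, apply it to the full distribution $\distj$, and use $\threshj \geq \eapricej$ to check the \exante\ constraint; for the second, price at $\threshj \geq r$ and invoke the optimality of $r_j$. You have merely spelled out the intermediate steps (e.g.\ the factorization $\Pr_{\distj}[v\geq p^*]=\tprobj\Pr_{\tailj}[v\geq p^*]$) that the paper leaves implicit.
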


\begin{proof}
The first inequality follows from the assumption that $\threshj \geq \eapricej$.
Let $\pricej^*$ be an optimal price for $\rev(\tailj)$. Then, by setting price
$\pricej^*$, one can obtain $\tprobj\rev(\tailj)$ from $\distj$ while respecting
the \exante\ constraint $\eaprobj$. In other words,
$\tprobj\rev(\tailj) \le \earev[\eaprobj](\distj) = r_j$.

Recall that $\threshj \ge r$.  So one could sell item $j$ at price
$\threshj$ and earn profit $\tprobj\threshj\ge \tprobj r$ while
respecting the \exante\ constraint $\eaprobj$. But
$\earev[\eaprobj](\distj) = r_j$, therefore we must have $\tprobj r \leq
\tprobj\threshj\leq r_j$.
\end{proof}

\section{Efficient Approximation for Symmetric Agents}
\label{sec:optimizing-beta-dot-q}

We will now discuss how to solve the optimization problem \eqref{eq:bq-max} efficiently when $\feas$ is a matroid. We first modify the distribution $\dists$ so that for every item $j$, any value below quantile $1-1/2n$ is mapped to $0$. The problem then simplifies to the following.
\begin{equation}
\label{eq:bq-max-2}
\begin{aligned}
\text{maximize }\; & \eaprobs\cdot\dists^{-1}(1-\eaprobs) & \text{s.t. }\; & \eaprobs \in \ptope
\end{aligned}
\end{equation}
This problem is related to the \exante\ relaxation of the
single-parameter revenue maximization problem with $m$ buyers, where
buyer $j$'s value is distributed independently according to $\distj$
and the seller faces the feasibility constraint $\feas$ (i.e., he can
sell to any subset of buyers that form an independent set in
$\feas$). When the distributions $\distj$ are all regular, the
objective $\eaprobs\cdot\dists^{-1}(1-\eaprobs)$ is concave, and the
above problem can be solved using standard convex optimization
techniques.

When the distributions $\distj$ are not all regular,
\eqref{eq:bq-max-2} is not necessarily convex. In this case, allowing
for a randomized solution convexifies the problem. 
Consider the following
relaxation that maximizes the objective over all distributions over
vectors $\eaprobs$:
\begin{equation}
\label{eq:bq-max-3}
\begin{aligned}
\text{maximize }\; & \expect{\eaprobs\cdot\dists^{-1}(1-\eaprobs)} & \text{s.t. }\; & \expect{\eaprobs} \in \ptope
\end{aligned}
\end{equation}
This problem can in turn be restated as follows: maximize the ironed
virtual surplus of a BIC mechanism for the single-parameter revenue
maximization problem stated above subject to the feasibility
constraint $\ptope$ imposed \exante. 

\citet{Hartline-comm} describes an alternative to standard convex
optimization for solving the above problem to within arbitrary
accuracy. Pick a sufficiently small $\epsilon>0$. Discretize the
problem by creating a new discrete distribution $\distj'$ for every
$j\in [m]$ as follows: for every integer $z$ in $[0, 1/\epsilon]$,
place a mass of $\epsilon$ on the ironed virtual value for
distribution $\distj$ at quantile $z\epsilon$. Let $R_j$ denote the
support of distribution $\distj'$. Over these discrete supports, the
ironed virtual value maximization problem becomes one of selecting a
subset of $\cup_j R_j$ of maximum total (ironed virtual) value subject
to the constraint that the subset can be partitioned into at most
$1/\epsilon$ parts each of which is independent in $\feas$. In other
words, this is the problem of finding a maximum weight basis over a
matroid formed by the union of $1/\epsilon$ identical copies of
$\feas$. The standard greedy algorithm for matroids solves this
problem efficiently for any constant $\epsilon$. This algorithm
approximates \eqref{eq:bq-max-3} to within an additive error of
$\epsilon\sum_j\distj^{-1}(1)$, and in the case of non-regular
distributions, produces a distribution over two vectors $\eaprobs_1$
and $\eaprobs_2$ that in expectation satisfies the constraint
$\ptope$.

\end{document}